\let\oldvec\vec
\let\vec\oldvec 
\let\originalleft\left
\let\originalright\right
\renewcommand{\left}{\mathopen{}\mathclose\bgroup\originalleft}
\renewcommand{\right}{\aftergroup\egroup\originalright}
\newcommand{\pvec}[1]{\vec{#1}\mkern2mu\vphantom{#1}}
\newcommand{\Procs}{\ensuremath{\mathcal{P}}}
\definecolor{roleColor}{rgb}{0.1, 0.3, 0.1}\newcommand{\roleCol}[1]{{\color{roleColor}#1}}\newcommand{\roleFmt}[1]{\boldsymbol{\roleCol{\mathtt{#1}}}}
\newcommand{\procA}{{\color{roleColor}\roleFmt{p}}}
\newcommand{\procB}{{\color{roleColor}\roleFmt{q}}}
\newcommand{\procC}{{\color{roleColor}\roleFmt{r}}}
\newcommand{\procD}{{\color{roleColor}\roleFmt{s}}}
\newcommand{\run}{\rho}
\newcommand{\Alphabet}{\Sigma}
\newcommand{\val}{\ensuremath{m}}
\newcommand{\MsgVals}{\ensuremath{\mathcal{V}}}
\newcommand{\msgO}{\textcolor{orange}{o}}
\newcommand{\msgB}{\textcolor{blue}{b}}
\newcommand{\msgM}{\textcolor{magenta}{m}}
\newcommand{\CSM}[1]{\ensuremath{\{\!\!\{#1_\procA\}\!\!\}_{\procA \in \Procs}}}
\newcommand{\CSMl}[1]{\ensuremath{\{\!\!\{{#1}\}\!\!\}_{\procA \in \Procs}}}
\newcommand{\emptystring}{\varepsilon}
\newcommand{\set}[1]{\{#1\}}
\newcommand{\lang}{\mathcal{L}}
\newcommand{\interswaplang}{\mathcal{C}^{\interswap}}
\newcommand{\SyncToAsync}{\ensuremath{\operatorname{\texttt{\upshape{split}}}}}
\newcommand{\channels}{\ensuremath{\mathsf{Chan}}}
\newcommand{\channel}[2]{\ensuremath{#1,#2}}
\newcommand{\trace}{\ensuremath{\operatorname{\texttt{\upshape{trace}}}}}
\newcommand{\GG}{\mathbf{G}}
\newcommand{\getMu}{\mathit{get\mu}}
\newcommand{\getMuG}{\getMu_\GG}
\newcommand{\semglobal}{\ensuremath{\mathsf{GAut}}}
\newcommand{\semglobalsync}{\ensuremath{\mathsf{GAut}}}
\newcommand{\projerasuresymb}{\downarrow}
\newcommand{\projerasure}[2]{\ensuremath{\semglobal(#1)\negmedspace\!\projerasuresymb_{#2}}}
\newcommand{\projerasuretrans}{\ensuremath{\delta_\projerasuresymb}}
\newcommand{\AlphSync}{\ensuremath{Σ_{\mathit{sync}}}}
\newcommand{\AlphAsync}{\ensuremath{Σ_{\mathit{async}}}}
\newcommand{\restrict}[2]{{#1}|_{#2}}
\newcommand{\lfp}{\mathrm{lfp}}
\newcommand{\interswap}{\ensuremath{\sim}}
\def \ifempty#1{\def\temp{#1} \ifx\temp\empty }
\newcommand{\snd}[3]{#1\triangleright#2!#3}
\newcommand{\rcv}[3]{#2\triangleleft#1?#3}
\newcommand{\ssnd}[2]{#1!#2}
\newcommand{\srcv}[2]{#1?#2}
\newcommand{\msgFromTo}[3]{#1\!\to\!#2\!:\!#3}
\newcommand{\pref}{\operatorname{pref}}
\newcommand{\preforder}{\ensuremath{\leq}}
\newcommand{\channelcompliant}{channel-compliant\xspace}
\newcommand{\terminated}{terminated\xspace}
\newcommand{\gtcomplete}[1]{$#1$-complete\xspace}
\newcommand{\exampleend}{\hfill$\blacktriangleleft$}
\newcommand{\proofend}{\hfill$\qed$}
\newcommand{\myparagraph}[1]{\smallskip\noindent\textbf{#1}}
\newcommand{\proj}{{\ensuremath{\upharpoonright}}}
\newcommand{\wproj}{{\ensuremath{\Downarrow}}}
\def\mmerge{\mathrel{\ThisStyle{\stretchrel*{\ooalign{\raise0.2\LMex\hbox{$\SavedStyle\sqcap$}\cr \raise-0.2\LMex\hbox{$\SavedStyle\sqcap$}}}{\sqcap}}}}
\def\mmmerge{\mathrel{\ThisStyle{\stretchrel*{\ooalign{\raise0.6\LMex\hbox{$\SavedStyle\sqcap$}\cr \raise0.2\LMex\hbox{$\SavedStyle\sqcap$}\cr \raise-0.2\LMex\hbox{$\SavedStyle\sqcap$}}}{\sqcap}}}}
\newcommand{\blockedset}{\ensuremath{\mathcal{B}}}
\newcommand{\semavail}{\ensuremath{M}}
\newcommand{\semavaildef}[3]{\ensuremath{\semavail^{#1, #2}_{(#3\ldots)}}}
\newcommand{\union}{\cup}
\newcommand{\inters}{\cap}
\newcommand{\Union}{\bigcup}
\newcommand{\Inters}{\bigcap}
\newcommand{\dunion}{\uplus}
\DeclarePairedDelimiter\card{\lvert}{\rvert}
\providecommand{\implies}{\Rightarrow}
 \providecommand{\Coloneqq}{\mathrel{\mathop{::}}=} \newcommand{\is}{\coloneq}
\def\grammOr{\hspace{3pt}\mid\hspace{3pt}}
\def\grammIs{\Coloneqq}
\gdef\@grammar@bar{\catcode`\|=\active \def|{\grammOr}}
\newcommand{\gramm}[1]{\begingroup
  \def\is{\grammIs}\@grammar@bar #1\endgroup }
\newenvironment{grammar}{\begin{equation*}\def\is{& \grammIs }\@grammar@bar \aligned }
{\endaligned \end{equation*}\aftergroup\ignorespaces }
\newcommand{\yessymbol}{\textcolor{teal}{$\pmb{\checkmark}$}}
\newcommand{\nosymbol}{\textcolor{red}{$\times$}}
\newcommand{\hole}{\hbox{-}}
\newcommand{\subsetproj}[2]{\ensuremath{\mathscr{P}(#1,#2)}}
\newcommand{\subsetcons}[2]{\ensuremath{\mathscr{C}(#1,#2)}}
\newcommand{\transAnnoFunc}{\ensuremath{\operatorname{tr-orig}}}
\newcommand{\transAnnoFunDest}{\ensuremath{\operatorname{tr-dest}}}
\newcommand{\globcomplocal}[3]{\ensuremath{\operatorname{R}^#1_{#2}(#3)}}
\newcounter{symbolfootnote}
\newcommand*{\symbolfootnote}[2]{\setcounter{symbolfootnote}{\value{footnote}}\renewcommand*{\thefootnote}{\fnsymbol{footnote}}\footnote[#1]{#2}\setcounter{footnote}{\value{symbolfootnote}}\renewcommand*{\thefootnote}{\arabic{footnote}}}
    \newcommand{\appendixRef}[1]{\cref{#1}}
    \newcommand{\appendixRef}[1]{the extended version~\cite{li2023complete}}
\newcommand{\printfnsymbol}[1]{\textsuperscript{\@fnsymbol{#1}}}
\begin{document}
\title{Complete Multiparty Session Type Projection with Automata}
\titlerunning{Complete Multiparty Session Type Projection with Automata}

\author{
Elaine Li\protect\symbolfootnote{1}{equal contribution}\inst{1}\orcidlink{0000-0003-0173-4498} \and
Felix Stutz\printfnsymbol{1}\protect\symbolfootnote{2}{corresponding author}\inst{2}\orcidlink{0000-0003-3638-4096} \and
Thomas Wies\inst{1}\orcidlink{0000-0003-4051-5968} \and
Damien Zufferey\inst{3}\orcidlink{0000-0002-3197-8736}
}
\institute{New York University \email{efl9013@nyu.edu, wies@cs.nyu.edu} \and
Max Planck Institute for Software Systems \email{fstutz@mpi-sws.org} \and
SonarSource \email{damien.zufferey@sonarsource.com}}

\authorrunning{E.\ Li, F.\ Stutz, T.\ Wies, D.\ Zufferey}

\maketitle              \begin{abstract}

Multiparty session types (MSTs) are a type-based approach to verifying communication protocols. 
Central to MSTs is a \emph{projection operator}: a partial function that maps protocols represented as global types to correct-by-construction implementations for each participant, represented as a communicating state machine.
Existing \mbox{projection} operators are syntactic in nature, and trade efficiency for completeness.
We present the first projection operator that is sound, complete, and efficient.
Our projection separates synthesis from checking implementability. 
For synthesis, we use a simple automata-theoretic construction; for checking implementability, we present succinct conditions that summarize insights into the property of implementability. 
We use these conditions to show that MST implementability is in PSPACE.
This improves upon a previous decision procedure that is in EXPSPACE and applies to a smaller class of MSTs. We demonstrate the effectiveness of our approach using a prototype implementation, which handles global types not supported by previous work without sacrificing performance. 

 \keywords{Protocol verification \and
Multiparty session types \and Communicating state machines \and Protocol fidelity \and
Deadlock freedom.
}
\end{abstract}
\section{Introduction}
\label{sec:intro}

Communication protocols are key components in many safety and operation critical systems, making them prime targets for formal verification. 
Unfortunately, most verification problems for such protocols (e.g. deadlock freedom) are undecidable~\cite{DBLP:journals/jacm/BrandZ83}.
To make verification computationally tractable, several restrictions have been proposed~\cite{DBLP:journals/iandc/CeceF05,DBLP:conf/concur/BolligFS20,DBLP:conf/cav/AbdullaBJ98,DBLP:conf/lics/AbdullaAA16,DBLP:journals/acta/PengP92,DBLP:conf/tacas/TorreMP08}. 
In particular, multiparty session types (MSTs)~\cite{DBLP:conf/popl/HondaYC08} have garnered a lot of attention in recent years (see, e.g., the survey by Ancona et al.~\cite{DBLP:journals/ftpl/AnconaBB0CDGGGH16}).
In the MST setting, a protocol is specified as a global type, which describes the desired interactions of all roles involved in the protocol. 
Local implementations describe behaviors for each individual role.
The implementability problem for a global type asks whether there exists a collection of local implementations whose composite behavior when viewed as a communicating state machine (CSM) matches that of the global type and is deadlock-free.
The synthesis problem is to compute such an implementation from an implementable global type.

MST-based approaches typically solve synthesis and implementability simultaneously via an efficient syntactic \emph{projection operator} \cite{DBLP:conf/popl/HondaYC08,DBLP:conf/sfm/CoppoDPY15,DBLP:journals/jlp/ToninhoY17,DBLP:conf/ecoop/ScalasDHY17}.
Abstractly, a
projection operator is a partial map from global types to collections of implementations. 
A projection operator $\texttt{proj}$ is sound when every global type $\GG$ in its domain is implemented by $\texttt{proj}(\GG)$, and complete when every implementable global type is in its domain.
Existing practical projection operators for MSTs are all incomplete (or unsound). 
Recently, the implementability problem was shown to be decidable for a class of MSTs via a reduction to safe realizability of globally cooperative high-level message sequence charts (HMSCs)~\cite{ecoop-draft}. In principle, this result yields a complete and sound projection operator for the considered class. However, this operator would not be practical. In particular, the proposed implementability check is in \mbox{EXPSPACE}.

\myparagraph{Contributions.}
In this paper, we present the first practical sound and complete projection operator for general MSTs.
The synthesis problem for implementable global types is conceptually easy~\cite{ecoop-draft}
-- the challenge lies in determining whether a global type \emph{is} implementable.
We thus separate synthesis from checking implementability. 
We first use a standard automata-theoretic construction to obtain a candidate implementation for a potentially non-implementable global type. 
However, unlike~\cite{ecoop-draft}, we then verify the correctness of this implementation directly using efficiently checkable conditions derived from the global type. 
When a global type is not implementable, our constructive completeness proof provides a counterexample trace.

The resulting projection operator yields a PSPACE decision procedure for implementa\-bility.\footnote{A previous version of this paper claimed that the implementability problem for MSTs is PSPACE-hard, but there was an error in the proof. This version has since been edited to remove the claim. Instead, we show that there exist global types for which the construction of an implementation requires exponential time.}
This result both generalizes and tightens the decidability and complexity results obtained in~\cite{ecoop-draft}.

We evaluate a prototype of our projection algorithm on benchmarks taken from the literature.
Our prototype benefits from both the efficiency of existing lightweight but incomplete syntactic projection operators\cite{DBLP:conf/popl/HondaYC08,DBLP:conf/sfm/CoppoDPY15,DBLP:journals/jlp/ToninhoY17,DBLP:conf/ecoop/ScalasDHY17}, and the generality of heavyweight automata-based model checking techniques~\cite{DBLP:journals/pacmpl/ScalasY19,DBLP:conf/cav/LangeY19}: it handles protocols rejected by previous practical approaches while preserving the efficiency that makes MST-based techniques so attractive.

 \section{Motivation and Overview}
\label{sec:motivation}

\begin{figure}[t]
\begin{subfigure}[b]{0.28\textwidth}
\centering
\includegraphics[width=0.95\textwidth]{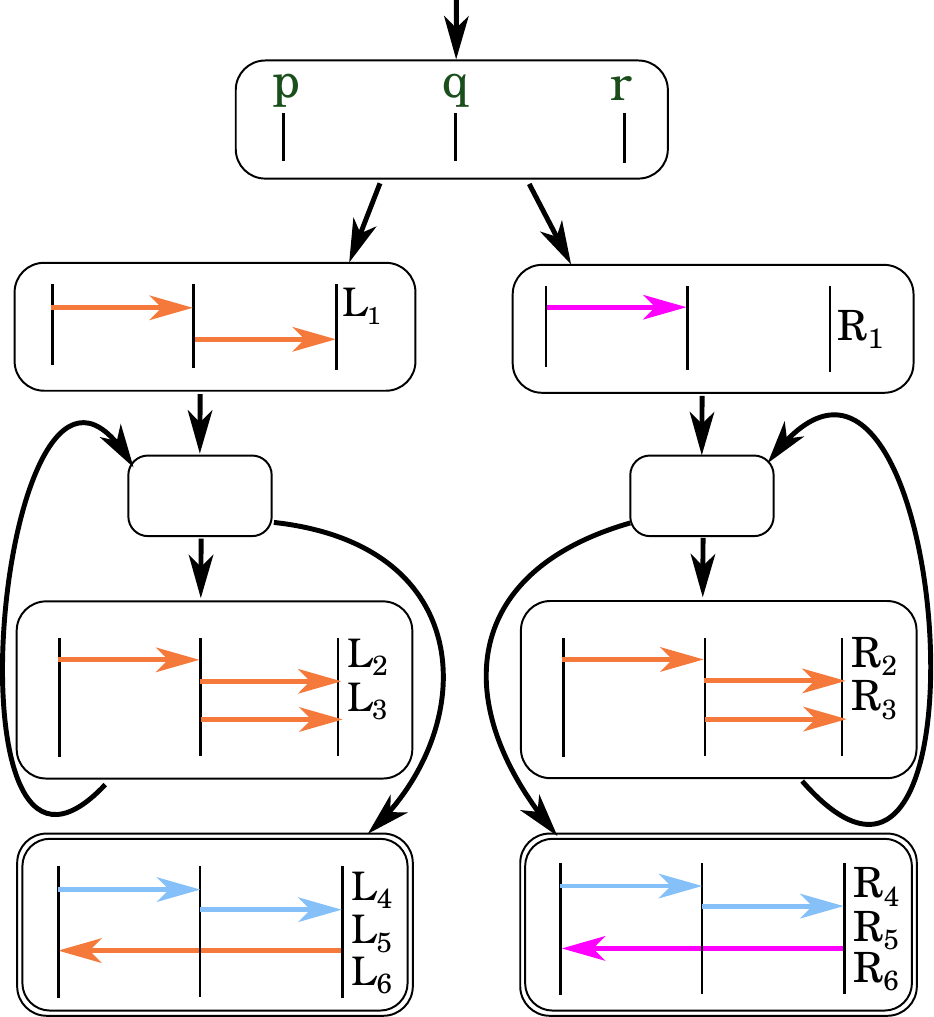}
\caption{Odd-even protocol}
    \label{fig:hmsc-odd-even}
\end{subfigure}
\label{fig:intro}
\begin{subfigure}[b]{0.23\textwidth}
\centering
\resizebox{1.10\textwidth}{!}{
    \begin{tikzpicture}[node distance=0.8cm and 1.2cm]
    \node[draw, circle, minimum size=0.3cm, initial above, initial text = ] (q0) {};

    \node[draw, circle, minimum size=0.3cm, below = of q0, xshift=-6mm] (q1l) {};
    \node[draw, circle, minimum size=0.3cm, below = of q1l] (q2l) {};
    \node[draw, circle, minimum size=0.3cm, accepting, below = of q2l] (q3l) {};

    \node[draw, circle, minimum size=0.3cm, below = of q0, xshift=6mm] (q1r) {};
    \node[draw, circle, minimum size=0.3cm, below = of q1r] (q2r) {};
    \node[draw, circle, minimum size=0.3cm, accepting, below = of q2r] (q3r) {};
\path[->] (q0) edge node[left] {$\ssnd{\procB}{\textcolor{orange}{o}}$} (q1l);
    \path[->] (q1l) edge [loop left] node[left] {$\ssnd{\procB}{\textcolor{orange}{o}}$} (q1l);
    \path[->] (q1l) edge node[left] {$\ssnd{\procB}{\textcolor{blue}{b}}$} (q2l);
    \path[->] (q2l) edge node[left] {$\srcv{\procC}{\textcolor{orange}{o}}$} (q3l);

    \path[->] (q0) edge node[right] {$\ssnd{\procB}{\textcolor{magenta}{m}}$} (q1r);
    \path[->] (q1r) edge [loop right] node[right] {$\ssnd{\procB}{\textcolor{orange}{o}}$} (q1r);
    \path[->] (q1r) edge node[right] {$\ssnd{\procB}{\textcolor{blue}{b}}$} (q2r);
    \path[->] (q2r) edge node[right] {$\srcv{\procC}{\textcolor{magenta}{m}}$} (q3r);

\end{tikzpicture}
 }
\caption{Local impl. \\ for role $\procA$ \hspace{1cm}}
\label{fig:hmsc-odd-even-p}
\end{subfigure}
\begin{subfigure}[b]{0.19\textwidth}
\centering
\resizebox{1.08\textwidth}{!}{
    \begin{tikzpicture}[node distance=0.8cm and 1.2cm]
    \node[draw, circle, minimum size=0.3cm, initial above, initial text = ] (q0) {};

    \node[draw, circle, minimum size=0.3cm, left = of q0, yshift=-5mm] (q1l) {};
    \node[draw, circle, minimum size=0.3cm, right = of q1l, yshift=-5mm] (q2) {};
    \node[draw, circle, minimum size=0.3cm, below left = of q2] (q3) {};
    \node[draw, circle, minimum size=0.3cm, below = of q3] (q4) {};
    \node[draw, circle, minimum size=0.3cm, below = of q2] (q5) {};
    \node[draw, circle, minimum size=0.3cm, accepting, below = of q5] (q6) {};
\path[->] (q0) edge[sloped] node[above] {$\srcv{\procA}{\textcolor{orange}{o}}$} (q1l);
    \path[->] (q1l) edge[sloped] node[above,xshift=1.5mm] {$\ssnd{\procC}{\textcolor{orange}{o}}$} (q2);
    \path[->] (q0) edge node[right] {$\srcv{\procA}{\textcolor{magenta}{m}}$} (q2);
    \path[->] (q2) edge[sloped] node[above] {$\srcv{\procA}{\textcolor{orange}{o}}$} (q3);
    \path[->] (q3) edge node[left] {$\ssnd{\procC}{\textcolor{orange}{o}}$} (q4);
    \path[->] (q4) edge[sloped] node[below] {$\ssnd{\procC}{\textcolor{orange}{o}}$} (q2);
    \path[->] (q2) edge node[right] {$\srcv{\procA}{\textcolor{blue}{b}}$} (q5);
    \path[->] (q5) edge node[right] {$\ssnd{\procC}{\textcolor{blue}{b}}$} (q6);

\end{tikzpicture}
 }
\caption{Local impl. \\ for role $\procB$ \hspace{1cm}}
\label{fig:hmsc-odd-even-q}
\end{subfigure}
\begin{subfigure}[b]{0.26\textwidth}
\centering
\resizebox{1.10\textwidth}{!}{
    \begin{tikzpicture}[node distance=0.8cm and 1.2cm]
    \node[draw, rectangle, rounded corners, initial above, initial text = ] (q0) {$\set{\textnormal{L}_1, \textnormal{R}_1, \textnormal{R}_2, \textnormal{R}_4}$};
    \node[draw, rectangle, rounded corners, below left=0.8cm and -0.8cm of q0] (q1) {$\set{\textnormal{L}_2, \textnormal{L}_4, \textnormal{R}_3}$};
    \node[draw, rectangle, rounded corners, below right=0.8cm and -0.8cm of q0] (q2) {$\set{\textnormal{L}_3, \textnormal{R}_2, \textnormal{R}_4}$};
    \node[draw, rectangle, rounded corners, below= of q1] (q3) {$\set{\textnormal{L}_5}$};
    \node[draw, rectangle, rounded corners, accepting, below = of q3] (q4) {$\set{\textnormal{L}_6}$};
    \node[draw, rectangle, rounded corners, below= of q2] (q5) {$\set{\textnormal{R}_5}$};
    \node[draw, rectangle, rounded corners, accepting, below = of q5] (q6) {$\set{\textnormal{R}_6}$};
\path[->] (q0) edge node[left] {$\srcv{\procB}{\textcolor{orange}{o}}$} (q1);
    \path[->] (q1) edge [bend left=8] node[above] {$\srcv{\procB}{\textcolor{orange}{o}}$} (q2);
    \path[->] (q2) edge [bend left=8] node[below] {$\srcv{\procB}{\textcolor{orange}{o}}$} (q1);

    \path[->] (q1) edge node[right] {$\srcv{\procB}{\textcolor{blue}{b}}$} (q3);
    \path[->] (q3) edge node[right] {$\ssnd{\procA}{\textcolor{orange}{o}}$} (q4);

    \path[->] (q0) edge[bend left,out=65,in=90,looseness=1.3] node[sloped,above] {$\srcv{\procB}{\textcolor{blue}{b}}$} (q5);
    \path[->] (q2) edge node[right] {$\srcv{\procB}{\textcolor{blue}{b}}$} (q5);
    \path[->] (q5) edge node[right] {$\ssnd{\procA}{\textcolor{magenta}{m}}$} (q6);

\end{tikzpicture}
 }
\caption{Local impl. \\ for role $\procC$ \hspace{1cm}}
\label{fig:hmsc-odd-even-r}
\end{subfigure}

\caption{Odd-even: An implementable but not (yet) projectable protocol and its local implementations}
\label{fig:odd-even}
\end{figure}

\myparagraph{Incompleteness of existing projection operators.}
A key limitation of existing projection operators is that the implementation for each role is obtained via a linear traversal of the global type, and thus shares its structure. 
The following example, which is not projectable by any existing approach, demonstrates how enforcing structural similarity can lead to incompleteness.
\begin{example}[Odd-even]
\label{ex:odd-even}
Consider the following global type $\GG_{oe}$:
{\scriptsize\[
    + \;
    \begin{cases}
    \msgFromTo{\procA}{\procB}{\msgO}. \,
    \msgFromTo{\procB}{\procC}{\msgO}. \,
    \mu t_1. \,
    (
    \msgFromTo{\procA}{\procB}{\msgO}. \,
    \msgFromTo{\procB}{\procC}{\msgO}. \,
    \msgFromTo{\procB}{\procC}{\msgO}. \,
    t_1
    \; + \;
    \msgFromTo{\procA}{\procB}{\msgB}. \,
    \msgFromTo{\procB}{\procC}{\msgB}. \,
    \msgFromTo{\procC}{\procA}{\msgO}. \,
    0
    )
    \\
    \msgFromTo{\procA}{\procB}{\msgM}. \,
    \mu t_2. \,
    (
    \msgFromTo{\procA}{\procB}{\msgO}. \,
    \msgFromTo{\procB}{\procC}{\msgO}. \,
    \msgFromTo{\procB}{\procC}{\msgO}. \,
    t_2
    \; + \;
    \msgFromTo{\procA}{\procB}{\msgB}. \,
    \msgFromTo{\procB}{\procC}{\msgB}. \,
    \msgFromTo{\procC}{\procA}{\msgM}. \,
    0
    )
    \end{cases}
\]}
A term $\msgFromTo{\procA}{\procB}{\val}$ specifies the exchange of message $\val$ between sender $\procA$ and receiver $\procB$. 
The term represents two local events observed separately due to asynchrony: a send event $\snd{\procA}{\procB}{\val}$ observed by role $\procA$, and a receive event $\rcv{\procA}{\procB}{\val}$ observed by role $\procB$. 
The $+$ operator denotes choice, $\mu t.\, G$ denotes recursion, and $0$ denotes protocol termination.

\cref{fig:hmsc-odd-even} visualizes $\GG_{oe}$ as an HMSC.
The left and right sub-protocols respectively correspond to the top and bottom branches of the protocol.
Role $\procA$ chooses a branch by sending either $\msgO$ or $\msgM$ to~$\procB$. On the left, $\procB$ echoes this message to~$\procC$. Both branches continue in the same way: $\procA$ sends an arbitrary number of $\msgO$ messages to $\procB$, each of which is forwarded twice from $\procB$ to $\procC$. Role $\procA$ signals the end of the loop by sending $\msgB$ to~$\procB$, which $\procB$ forwards to $\procC$. Finally, depending on the branch, $\procC$ must send $\msgO$ or $\msgM$ to~$\procA$.

\cref{fig:hmsc-odd-even-p,fig:hmsc-odd-even-q} depict the structural similarity between the global type $\GG_{oe}$ and the implementations for $\procA$ and $\procB$.  
For the ``choicemaker'' role $\procA$, the reason is evident. 
Role $\procB$'s implementation collapses the continuations of both branches in the protocol into a single sub-component. 
For $\procC$ (\cref{fig:hmsc-odd-even-r}), the situation is more complicated.
Role $\procC$ does not decide on or learn directly which branch is taken, but can deduce it from the parity of the number of $\msgO$ messages received from $\procB$: odd means left and even means right.
The resulting local implementation features transitions going back and forth between the two branches that do not exist in the global type.
Syntactic projection operators fail to create such transitions.
\exampleend
\end{example}
One response to the brittleness of existing projection operators has been to give up on global type specifications altogether and instead revert to model checking user-provided implementations~\cite{DBLP:journals/pacmpl/ScalasY19,DBLP:conf/cav/LangeY19}.
We posit that what needs rethinking is not the concept of global types, but rather how projections are computed and how implementability is checked. 

\myparagraph{Our automata-theoretic approach.}
The synthesis step in our projection operator uses textbook automata-theoretic constructions. 
From a given global type, we derive a finite state machine, and use it to define a homomorphism automaton for each role.
We then determinize
this homomorphism automaton via subset construction to obtain a local candidate implementation for each role.
If the global type is implementable, this construction always yields an implementation.
The implementations shown in \cref{fig:hmsc-odd-even-p,fig:hmsc-odd-even-q,fig:hmsc-odd-even-r} are the result of applying this construction to $\GG_{oe}$ from \cref{ex:odd-even}.
Notice that the state labels in \cref{fig:hmsc-odd-even-r} correspond to sets of labels in the global protocol.

Unfortunately, not all global types are implementable.

\begin{figure}[t]
\begin{subfigure}[b]{0.24\textwidth}
    \centering
    \includegraphics[width=0.97\textwidth]{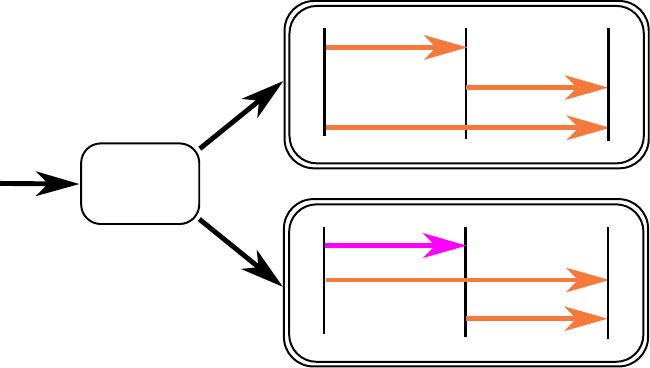}
    \caption{$\GG_r$}
    \label{fig:rcv-cond-negative}
\end{subfigure}
\begin{subfigure}[b]{0.24\textwidth}
    \centering
    \includegraphics[width=0.97\textwidth]{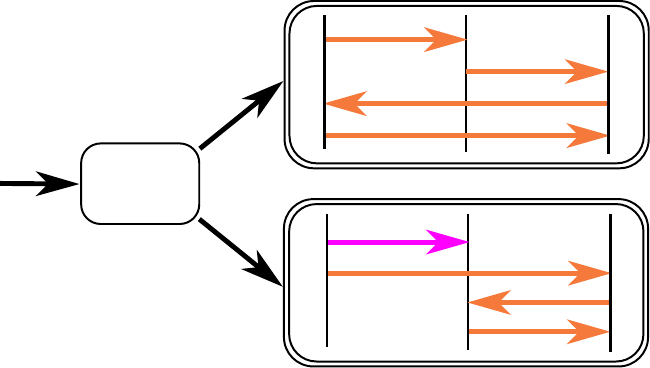}
    \caption{$\GG_r'$}
    \label{fig:rcv-cond-positive}
\end{subfigure}
\begin{subfigure}[b]{0.24\textwidth}
    \centering
    \includegraphics[width=0.97\textwidth]{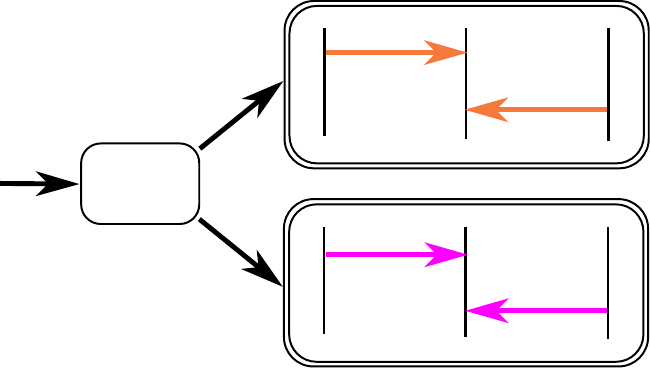}
    \caption{$\GG_s$}
    \label{fig:snd-cond-negative}
\end{subfigure}
\begin{subfigure}[b]{0.24\textwidth}
    \centering
    \includegraphics[width=0.97\textwidth]{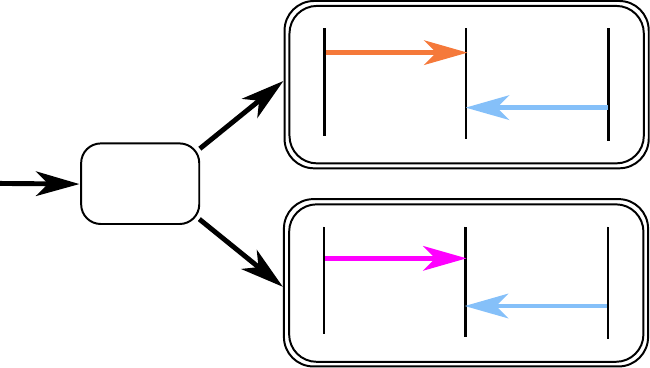}
    \caption{$\GG_s'$}
    \label{fig:snd-cond-positive}
\end{subfigure}
\caption{High-level message sequence charts for the global types of \cref{ex:implementability}.}
\label{fig:implementability}
\end{figure}

\begin{example}\label{ex:implementability}
 Consider the following four global types also depicted in \cref{fig:implementability}:
    { \scriptsize
      \begin{align*}
		\GG_r = {} & + \;
		\begin{cases}
			\msgFromTo{\procA}{\procB}{\msgO}. \,
			\msgFromTo{\procB}{\procC}{\msgO}. \,
			\msgFromTo{\procA}{\procC}{\msgO}. \,
			0
			\\
			\msgFromTo{\procA}{\procB}{\msgM}. \,
			\msgFromTo{\procA}{\procC}{\msgO}. \,
			\msgFromTo{\procB}{\procC}{\msgO}. \,
			0
		\end{cases}
		&
		\qquad
\GG_s = {} & + \;
		\begin{cases}
                  \msgFromTo{\procA}{\procB}{\msgO}. \,
                  \msgFromTo{\procC}{\procB}{\msgO}. \,
                  0
                  \\
                  \msgFromTo{\procA}{\procB}{\msgM}. \,
                  \msgFromTo{\procC}{\procB}{\msgM}. \,
                  0
                \end{cases}\\
                \GG_r' = {} & + \;
	        \begin{cases}
                  \msgFromTo{\procA}{\procB}{\msgO}. \,
                  \msgFromTo{\procB}{\procC}{\msgO}. \,
                  \msgFromTo{\procC}{\procA}{\msgO}. \,
                  \msgFromTo{\procA}{\procC}{\msgO}. \,
                  0
                  \\
                  \msgFromTo{\procA}{\procB}{\msgM}. \,
                  \msgFromTo{\procA}{\procC}{\msgO}. \,
                  \msgFromTo{\procC}{\procB}{\msgO}. \,
                  \msgFromTo{\procB}{\procC}{\msgO}. \,
                  0
                \end{cases}
                &
		\qquad
\GG_s' = {} & + \;
		\begin{cases}
                  \msgFromTo{\procA}{\procB}{\msgO}. \,
                  \msgFromTo{\procC}{\procB}{\msgB}. \,
                  0
                  \\
                  \msgFromTo{\procA}{\procB}{\msgM}. \,
                  \msgFromTo{\procC}{\procB}{\msgB}. \,
                  0
                \end{cases}
	\end{align*}
      }

  \noindent Similar to $\GG_{oe}$, in all four examples, $\procA$ chooses a branch by sending either $\msgO$ or $\msgM$ to $\procB$. The global type $\GG_r$ is not implementable because $\procC$ cannot learn which branch was chosen by $\procA$. For any local implementation of $\procC$ to be able to execute both branches, it must be able to receive $\msgO$ from $\procA$ and $\procB$ in any order. Because the two send events $\snd{\procA}{\procC}{\msgO}$ and $\snd{\procB}{\procC}{\msgO}$ are independent of each other, they may be reordered. Consequently, any implementation of $\GG_r$ would have to permit executions that are consistent with global behaviors not described by $\GG_r$, such as $\msgFromTo{\procA}{\procB}{\msgM}.\,\msgFromTo{\procB}{\procC}{\msgO}.\,\msgFromTo{\procA}{\procC}{\msgO}$. Contrast this with $\GG_r'$, which is implementable. In the top branch of $\GG_r'$, role $\procA$ can only send to $\procC$ after it has received from $\procC$, which prevents the reordering of the send events $\snd{\procA}{\procC}{\msgO}$ and $\snd{\procB}{\procC}{\msgO}$. The bottom branch is symmetric. Hence, $\procC$ learns $\procA$'s choice based on which message it receives first.

    For the global type $\GG_s$, role $\procC$ again cannot learn the branch chosen by~$\procA$. That is, $\procC$ cannot know whether to send $\msgO$ or $\msgM$ to $\procB$, leading inevitably to deadlocking executions. In contrast, $\GG_s'$ is again implementable because the expected behavior of $\procC$ is independent of the choice by $\procA$.
\exampleend
\end{example}

These examples show that the implementability question is non-trivial. 
To check implementability, we present conditions that precisely characterize when the subset construction for $\GG$ yields an implementation.

\myparagraph{Overview.}
The rest of the paper is organized as follows. 
\S\ref{sec:prelim} contains relevant definitions for our work. 
\S\ref{sec:constructing-implementations} describes the synthesis step of our projection. 
\S\ref{sec:checking-conditions} presents the two conditions that characterize implementability of a given global type.
In \S\ref{sec:soundness}, we prove soundness of our projection via a stronger inductive invariant guaranteeing per-role agreement on a global run of the protocol.
In \S\ref{sec:completeness}, we prove completeness by showing that our two conditions hold if a global type is implementable. In \S\ref{sec:complexity}, we discuss the complexity of our construction and condition checks.
\S\ref{sec:eval} presents our artifact and evaluation, and \S\ref{sec:discussion} as well as \S\ref{sec:related} discuss related work.
\iftoggle{techrep}{}{
Additional details including omitted proofs can be found in the extended version of the paper~\cite{li2023complete}.
}
 \section{Preliminaries}
\label{sec:prelim}

\paragraph{Words.}
Let $\Alphabet$ be a finite alphabet.
$\Alphabet^*$ denotes the set of finite words over $\Alphabet$, $\Alphabet^\omega$ the set of infinite words, and $\Alphabet^\infty\negthinspace$ their union $\Alphabet^* \cup \Alphabet^\omega$.
A word $u \in \Alphabet^*$ is a \emph{prefix} of word $v \in \Alphabet^\infty$, denoted $u \leq v$, if there exists $w \in \Alphabet^\infty$ with $u \cdot w = v$.

\paragraph{Message Alphabet.}
Let $\Procs$ be a set of roles and $\MsgVals$ be a set of messages. We define the set of \emph{synchronous events} $\AlphSync \is \set{ \msgFromTo{\procA}{\procB}{\val} \mid \procA,\procB ∈ \Procs \text{ and } \val ∈ \MsgVals}$ where $\msgFromTo{\procA}{\procB}{\val}$ denotes that message $\val$ is sent by $\procA$ to $\procB$ atomically.
This is split for \emph{asynchronous events}.
For a role $\procA \in\Procs$, we define the alphabet
    $\Alphabet_{\procA,!} = \set{\snd{\procA}{\procB}{\val} \mid \procB \in \Procs,\; \val \in \MsgVals }$ of \emph{send} events
and the alphabet
    $\Alphabet_{\procA,?} = \set{\rcv{\procB}{\procA}{\val} \mid \procB \in \Procs,\; \val \in \MsgVals }$ of \emph{receive} events.
The event $\snd{\procA}{\procB}{\val}$ denotes role $\procA$ sending a message $\val$ to $\procB$,
and $\rcv{\procB}{\procA}{\val}$ denotes role $\procA$ receiving a message $\val$ from $\procB$.
We write $\Alphabet_{\procA} = \Alphabet_{\procA,!} \union \Alphabet_{\procA,?}$,
$\Alphabet_! = \Union_{\procA \in \Procs} \Alphabet_{\procA,!}$, and
$\Alphabet_? = \Union_{\procA \in \Procs} \Alphabet_{\procA,?}$.
Finally, $\AlphAsync = \Alphabet_! \cup \Alphabet_?$.
We say that $\procA$ is \emph{active} in $x \in \AlphAsync$ if $x \in \Alphabet_{\procA}$.
For each role $\procA\in \Procs$, we define a homomorphism~$\wproj_{\Alphabet_\procA}$, where $x \wproj_{\Alphabet_\procA} = x$ if $x \in \Alphabet_\procA$ and $\emptystring$ otherwise.
We write $\MsgVals(w)$ to project the send and receive events in $w$ onto their messages.
We fix $\Procs$ and~$\MsgVals$ in the rest of the paper.

\paragraph{Global Types -- Syntax.}
Global types for MSTs \cite{DBLP:conf/concur/MajumdarMSZ21} are defined by the grammar:
    \begin{grammar}
     G \is
       0
     | \sum_{i ∈ I} \msgFromTo{\procA}{\procB_{i}}{\val_i.G_i}
     | \mu t. \; G
     | t
    \end{grammar}\\[-3ex]
where $\procA, \procB_i$ range over $\Procs$, $\val_i$ over $\MsgVals$, and $t$ over a set of recursion variables.

We require each branch of a choice to be distinct: 
$∀ i,j ∈ I.\, i≠j ⇒ (\procB_{i},\val_i) ≠ (\procB_{j},\val_j)$, 
the sender and receiver of an atomic action to be distinct: 
$∀ i ∈ I.\, \procA ≠ \procB_i$,
and recursion to be guarded:
in $μ t. \, G$, there is at least one message between $μt$ and each $t$ in $G$.
When $|I| = 1$, we omit~$\sum$.
For readability, we sometimes use the infix operator $+$ for choice, instead of~$\sum$.
When working with a protocol described by a global type, we write $\GG$ to refer to the top-level type, and we use~$G$ to refer to its subterms.
For the size of a global type, we disregard multiple occurrences of the same subterm. 

We use the extended definition of global types from \cite{DBLP:conf/concur/MajumdarMSZ21} that allows a sender to send messages to different roles in a choice.
We call this \emph{sender-driven choice}, as in \cite{ecoop-draft}, while it was called generalized choice in \cite{DBLP:conf/concur/MajumdarMSZ21}.
This definition subsumes classical MSTs that only allow \emph{directed choice} \cite{DBLP:conf/popl/HondaYC08}.
The types we use focus on communication primitives and omit features like delegation or parametrization.
We defer a detailed discussion of different MST frameworks to \cref{sec:related}.

\paragraph{Global Types -- Semantics.}
As a basis for the semantics of a global type $\GG$, we construct a finite state machine
$
\semglobalsync(\GG) = (Q_{\GG}, \AlphSync, δ_{\GG}, q_{0, \GG}, F_{\GG})
$
where
\vspace{-1.5ex}
\begin{itemize}
	\item $Q_{\GG}$ is the set of all syntactic subterms in $\GG$ together with the term $0$,
	\item $δ_{\GG}$ is the smallest set containing
	$(\sum_{i ∈ I} \msgFromTo{\procA}{\procB_i}{\val_i.G_i}, \msgFromTo{\procA}{\procB_i}{\val_i}, G_i)$ for each $i ∈ I$,
	as well as $(μ t. G', ε, G')$ and $(t, ε, μ t. G')$ for each subterm~$\mu t.G'$, \item $q_{0, \GG} = \GG$ and
	$F_{\GG} = \set{0}$.
\end{itemize}
\vspace{-1.5ex}
We define a homomorphism 
$\SyncToAsync$ onto the asynchronous alphabet:
\[
\SyncToAsync(\msgFromTo{\procA}{\procB}{\val})
\is
\snd{\procA}{\procB}{\val}. \,
\rcv{\procA}{\procB}{\val}\enspace.
\]
The semantics $\lang(\GG)$ of a global type $\GG$ is given by
$\interswaplang(\SyncToAsync(\lang(\semglobalsync(\GG))))$ where $\interswaplang$ is the closure under the indistinguishability relation $\interswap$ \cite{DBLP:conf/concur/MajumdarMSZ21}.
Two events are independent if they are not related by the \emph{happened-before} relation \cite{DBLP:journals/cacm/Lamport78}. For instance, any two send events from distinct senders are independent.
Two words are indistinguishable if one can be reordered into the other by repeatedly swapping consecutive independent events. The full definition is in \appendixRef{app:indist-rel}.

\paragraph{Communicating State Machine \cite{DBLP:journals/jacm/BrandZ83}.}
\label{def:csm-formalisation}
$\mathcal{A} = \CSM{A}$ is a CSM over $\Procs$ and~$\MsgVals$ if
${A}_\procA$
is a finite state machine
over~$\Alphabet_\procA$ for every $\procA\in\Procs$, denoted by 
$(Q_\procA, \Alphabet_\procA, \delta_\procA, q_{0, \procA}, F_\procA)$.
Let 
$\prod_{\procA \in \Procs} s_\procA$ 
denote the set of global states and
\mbox{$\channels = \set{(\channel{\procA}{\procB}) \mid \procA,\procB\in \Procs, \procA\neq \procB}$}
denote the set of channels. 
A~\emph{configuration} of $\mathcal{A}$ is a pair $(\vec{s}, \xi)$, where $\vec{s}\,$ is a global state and
$\xi : \channels \rightarrow \MsgVals^*$
is a mapping from each channel to a sequence of messages.
We use $\vec{s}_\procA$ to denote the state of $\procA$ in $\vec{s}$.
The CSM transition relation, denoted $\rightarrow$, is defined as follows. 
\begin{itemize}
	\item
	$(\vec{s},\xi) \xrightarrow{\snd{\procA}{\procB}{\val}} (\pvec{s}',\xi')$ if
	$(\vec{s}_\procA, \snd{\procA}{\procB}{\val}, \pvec{s}'_\procA)\in\delta_\procA$,
	$\vec{s}_\procC = \pvec{s}'_\procC$ for every role $\procC \neq \procA$,
	$\xi'(\channel{\procA}{\procB}) =  \xi(\channel{\procA}{\procB})\cdot\val$ and $\xi'(c) = \xi(c)$ for every other channel $c\in \channels$.
	
	\item
	$(\vec{s},\xi) \xrightarrow{\rcv{\procA}{\procB}{\val}} (\pvec{s}',\xi')$ if
	$(\vec{s}_\procB, \rcv{\procA}{\procB}{\val}, \pvec{s}'_\procB)\in\delta_\procB$,
	$\vec{s}_\procC = \pvec{s}'_\procC$ for every role $\procC \neq \procB$,
	$\xi(\channel{\procA}{\procB}) = \val\cdot \xi'(\channel{\procA}{\procB})$
	and $\xi'(c) = \xi(c)$ for every other channel $c\in \channels$.
\end{itemize}
In the initial configuration $(\vec{s}_0, \xi_0)$, each role's state in $\vec{s}_0$ is the initial state $q_{0,\procA}$ of $A_\procA$, and $\xi_0$ maps each channel to $\emptystring$.
A configuration $(\vec{s}, \xi)$ is said to be \emph{final} iff $\vec{s}_\procA$ is final for every $\procA$ and $\xi$ maps each channel to~$\emptystring$.
Runs and traces are defined in the expected way. 
A run is \emph{maximal} if either it is finite and ends in a final configuration, or it is infinite. 
The language $\lang(\mathcal{A})$ of the CSM $\mathcal{A}$ is defined as the set of maximal traces.
A configuration $(\vec{s}, \xi)$ is a \emph{deadlock} if it is not final and has no outgoing transitions.
A CSM is \emph{deadlock-free} if no reachable configuration is a deadlock. 

\medskip
Finally, implementability is formalized as follows. 
\begin{definition}[Implementability~\cite{DBLP:conf/concur/MajumdarMSZ21}]
\label{def:implementability}
A global type $\GG$ is \emph{implementable} if there exists a CSM $\CSM{A}$ such that the following two properties hold: \newline
\begin{inparaenum}[(i)]
 \item \label{def:implementability-protocol-fidelity}
 \emph{protocol fidelity}: $\lang(\CSM{A}) = \lang(\GG)$, and
 \item \label{def:implementability-deadlock-freedom}
 \emph{deadlock freedom}: $\CSM{A}$ is deadlock-free.
\end{inparaenum}
We say that $\CSM{A}$ implements $\GG$.
\end{definition}

 \section{Synthesizing Implementations}
\label{sec:constructing-implementations}
\label{sec:synthesizing-implementations}

The construction is carried out in two steps.
First, for each role $\procA \in \Procs$, we define an intermediate state machine $\projerasure{\GG}{\procA}$ that is a homomorphism of $\semglobal(\GG)$.
We call $\projerasure{\GG}{\procA}$ the \emph{projection by erasure} for $\procA$, defined below.

\begin{definition}[Projection by Erasure]
	\label{def:projection-by-erasure}
	Let $\GG$ be some global type
	with its state machine
	$
	\semglobalsync(\GG) =
	(Q_{\GG},
	\AlphSync,
	\delta_{\GG},
	q_{0, \GG},
	F_{\GG})
	$.
	For each role $\procA \in \Procs$, we define the state machine
	$
	\projerasure{\GG}{\procA} \,=
	(Q_{\GG},
	\Alphabet_\procA \dunion \set{\emptystring},
	\projerasuretrans,
	q_{0, \GG},
	F_{\GG})
	$
	where
	$\projerasuretrans \is
	\set{q \xrightarrow{\SyncToAsync(a) \wproj_{\Alphabet_\procA}} q'
		\mid q \xrightarrow{a} q' \in \delta_{\GG}}$.
	By definition of $\SyncToAsync(\hole)$, it holds that $\SyncToAsync(a) \wproj_{\Alphabet_\procA} \in \Alphabet_\procA \dunion \set{\emptystring}$.
\end{definition}

\noindent
Then, we determinize $\projerasure{\GG}{\procA}$ via a standard subset construction to obtain a deterministic local state machine for $\procA$. 

\begin{definition}[Subset Construction]
\label{def:subset-construction}
Let $\GG$ be a global type and $\procA$ be a role. Then, the \emph{subset construction} for $\procA$ is defined as
\[
\subsetcons{\GG}{\procA} =
\bigl(
Q_{\procA},
\Alphabet_\procA,
\delta_{\procA},
s_{0, \procA},
F_{\procA}
\bigr)
\text{ where }
\]
\begin{itemize}
\item $ \delta(s, a) \is
	\set{q' \in Q_{\GG}
		\mid
		\exists q \in s,
		q \xrightarrow{a} \xrightarrow{\emptystring}\mathrel{\vphantom{\to}^*} q' \in \projerasuretrans
	},
	$
	for
	every
	$s \subseteq Q_{\GG}$ and
	$a \in \Alphabet_{\procA}$\item $s_{0, \procA} \is
	\set{q \in Q_{\GG} \mid
		q_{0, \GG} \xrightarrow{\emptystring} \mathrel{\vphantom{\to}^*} q \in \projerasuretrans}$,
	\item $Q_{\procA} \is \lfp_{\set{s_{0,\procA}}}^\subseteq \lambda Q.\, Q \union \set{ \delta(s,a) \mid s \in Q \land a \in \Alphabet_{\procA}} \setminus \set{\emptyset}$
	, and
\item $\delta_{\procA} \is \restrict{\delta}{Q_{\procA} \times \Alphabet_{\procA}}$
	\item $F_{\procA} \is
	\set{s \in Q_{\procA}
		\mid s \inters F_{\GG} \neq \emptyset}$
\end{itemize}
\end{definition}

Note that the construction ensures that $Q_\procA$ only contains subsets of $Q_{\GG}$ whose states are reachable via the same traces, i.e. we typically have $|Q_{\procA}| \ll 2^{|Q_{\GG}|}$.

The following characterization is immediate from the subset construction; the proof can be found in \appendixRef{proof:synthesizing-implementations}.

\begin{restatable}{lemma}{constructionProperties}
	\label{lm:projection-preserves-per-process-runs}
	\label{lm:languages-of-roles}
	Let $\GG$ be a global type, $\procC$ be a role, and
	$\subsetcons{\GG}{\procC}$
	be its \emph{subset construction}.
	If $w$ is a trace of $\semglobalsync(\GG)$, $\SyncToAsync(w) \wproj_{\Alphabet_\procC}$ is a trace of $\subsetcons{\GG}{\procC}$.
	If $u$ is a trace of $\subsetcons{\GG}{\procC}$, there is a trace $w$ of $\semglobalsync(\GG)$ such that $\SyncToAsync(w) \wproj_{\Alphabet_\procC} = u$.
	It holds that
	$\lang(\GG) \wproj_{\Alphabet_\procC} = \lang(\subsetcons{\GG}{\procC})$.
\end{restatable}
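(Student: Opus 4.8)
The plan is to establish the three claims in order, since each feeds into the next. The first claim — that if $w$ is a trace of $\semglobalsync(\GG)$ then $\SyncToAsync(w)\wproj_{\Alphabet_\procC}$ is a trace of $\subsetcons{\GG}{\procC}$ — I would prove by induction on the length of $w$, but the cleanest route is to pass through the intermediate automaton $\projerasure{\GG}{\procC}$. Concretely, a trace $w = a_1 \cdots a_n$ of $\semglobalsync(\GG)$ arises from a path $q_{0,\GG} \xrightarrow{a_1} q_1 \cdots \xrightarrow{a_n} q_n$ in $\delta_{\GG}$; applying the homomorphism, this gives a path $q_{0,\GG} \xrightarrow{\SyncToAsync(a_1)\wproj_{\Alphabet_\procC}} \cdots q_n$ in $\projerasuretrans$ (possibly with $\emptystring$-labels interspersed, and each $\SyncToAsync(a_i)$ itself contributing up to two letters of $\Alphabet_\procC$ when $\procC$ is both sender and — no, sender and receiver are distinct, so at most one letter). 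Then the standard correctness of the subset construction for $\emptystring$-NFAs says the projected word $\SyncToAsync(w)\wproj_{\Alphabet_\procC}$ labels a path in $\subsetcons{\GG}{\procC}$ from $s_{0,\procC}$, because $\delta$ was defined precisely to track $\xrightarrow{a}\xrightarrow{\emptystring}{}^*$ closures and $s_{0,\procC}$ is the $\emptystring$-closure of $\set{q_{0,\GG}}$. I would note that the construction drops the empty state, but since $q_n$ is genuinely reachable, the subset visited at each step is nonempty, so no path is lost.

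For the second claim — every trace $u$ of $\subsetcons{\GG}{\procC}$ lifts to some trace $w$ of $\semglobalsync(\GG)$ with $\SyncToAsync(w)\wproj_{\Alphabet_\procC} = u$ — I would again invoke the standard NFA/subset-construction correspondence: a run of the deterministic subset machine on $u = b_1 \cdots b_m$ witnesses, for each $b_k$, the existence of some $q \in s_{k-1}$ with $q \xrightarrow{b_k}\xrightarrow{\emptystring}{}^* q' \in \projerasuretrans$ and $q' \in s_k$. Chaining these (using that each $s_k$ is nonempty and every element of $s_k$ is reachable from $s_0$ by the same projected word — the parenthetical remark right before the lemma) yields a path in $\projerasuretrans$ from $q_{0,\GG}$ labelled by a word that erases to $u$. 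Pulling this path back through the homomorphism definition of $\projerasuretrans$ recovers a sequence of $\delta_{\GG}$-transitions, i.e. a trace $w$ of $\semglobalsync(\GG)$; by construction $\SyncToAsync(w)\wproj_{\Alphabet_\procC} = u$. One has to be a little careful that the $\emptystring$-transitions of $\semglobalsync(\GG)$ (the $\mu t.G' / t$ unfoldings) are exactly the ones erased, but that is immediate since $\SyncToAsync$ is only defined on synchronous message events and $\wproj$ erases everything not in $\Alphabet_\procC$.

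The third claim, $\lang(\GG)\wproj_{\Alphabet_\procC} = \lang(\subsetcons{\GG}{\procC})$, is then a matter of lifting the two trace-level inclusions to languages, which requires handling (a) the closure under $\interswap$ built into $\lang(\GG)$ and (b) maximal (in particular infinite) traces. For the $\interswap$-closure: since the reordering relation only swaps events with disjoint active roles, projecting onto $\Alphabet_\procC$ is invariant under $\interswap$ (any swap either involves no $\procC$-event, or — impossible — two, so the $\procC$-subsequence is untouched); hence $\interswaplang(L)\wproj_{\Alphabet_\procC} = L\wproj_{\Alphabet_\procC}$ and we may work with $\SyncToAsync(\lang(\semglobalsync(\GG)))$ directly. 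For infinite traces and maximality: a finite trace $w$ of $\semglobalsync(\GG)$ ending in $F_{\GG} = \set{0}$ projects to a trace ending in a subset meeting $F_{\GG}$, i.e. an accepting state of $\subsetcons{\GG}{\procC}$, and conversely; infinite traces are handled by taking limits of the finite-prefix correspondence (König-style, using that the subset machine is deterministic and finite so an infinite $u$ has a unique infinite run). I expect the main obstacle to be nothing deep but rather the bookkeeping around maximality and the infinite case — making precise that a maximal trace of the CSM-component automaton $\subsetcons{\GG}{\procC}$ corresponds to a maximal trace of $\semglobalsync(\GG)$ under projection, given that $\procC$ may become "inactive" after finitely many steps while the global protocol continues. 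This is where I would spend the most care, possibly factoring it through the observation that $\wproj_{\Alphabet_\procC}$ of a maximal global trace is maximal in $\lang(\subsetcons{\GG}{\procC})$ by the determinism and $\emptystring$-freeness of the component automaton.
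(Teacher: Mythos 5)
Your proposal follows essentially the same route as the paper's proof: both factor through the projection-by-erasure automaton $\projerasure{\GG}{\procC}$ and invoke the standard correctness of the $\emptystring$-NFA subset construction (the paper cites Sipser) for the two trace-level claims, then lift to languages. If anything you are more careful than the paper on the third claim --- the paper dismisses the passage to maximal, possibly infinite, $\interswap$-closed languages with ``it easily follows,'' whereas you correctly isolate the $\interswap$-invariance of per-role projections and the maximality bookkeeping for infinite runs as the points that actually need attention.
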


Using this lemma, we show that the CSM $\CSMl{\subsetcons{\GG}{\procA}}$ preserves all behaviors of $\GG$.

\begin{restatable}{lemma}{constructionPreservesBehaviors}
	\label{lm:constructionPreservesBehaviors}
	For all global types $\GG$, $\lang(\GG) \subseteq \lang(\CSMl{\subsetcons{\GG}{\procA}})$.
\end{restatable}
We briefly sketch the proof here.
	 	Given that $\CSMl{\subsetcons{\GG}{\procA}}$ is deterministic, to prove language inclusion it suffices to prove the inclusion of the respective prefix sets:
	 	\[
	 		\text{pref}(\lang(\GG)) \subseteq \text{pref}(\lang\CSMl{\subsetcons{\GG}{\procA}})
	 	\]
	 	Let $w$ be a word in $\lang(\GG)$. 
	 	If $w$ is finite, membership in $\lang(\CSMl{\subsetcons{\GG}{\procA}})$ is immediate from the claim above.
	 	If $w$ is infinite, we show that $w$ has an infinite run in $\CSMl{\subsetcons{\GG}{\procA}}$ using König's Lemma.
	 	We construct an infinite graph $\mathcal{G}_w(V, E)$ with 
	 	$V \is \{v_{\run} \mid \trace(\run) \leq w\}$ and 
	 	$E \is \{(v_{\run_1}, v_{\run_2}) \mid \exists~x \in \AlphAsync.~\trace(\run_2) = \trace(\run_1)\cdot x\}$. 
	 	Because $\CSMl{\subsetcons{\GG}{\procA}}$ is deterministic, $\mathcal{G}_w$ is a tree rooted at $v_\emptystring$, the vertex corresponding to the empty run.
	 	By König's Lemma, every infinite tree contains either a vertex of infinite degree or an infinite path.
	 	Because $\CSMl{\subsetcons{\GG}{\procA}}$ consists of a finite number of communicating state machines, the last configuration of any run has a finite number of next configurations, and $\mathcal{G}_w$ is finitely branching.
	 	Therefore, there must exist an infinite path in $\mathcal{G}_w$ representing an infinite run for $w$, and thus
	 	$w \in \lang(\CSMl{\subsetcons{\GG}{\procA}})$.
	
	     The proof of the inclusion of prefix sets proceeds by structural induction and primarily relies on Lemma \ref{lm:languages-of-roles} and the fact that all prefixes in $\lang(\GG)$ respect the order of send before receive events.

\section{Checking Implementability}
\label{sec:checking-conditions}

We now turn our attention to checking implementability of a CSM produced by the subset construction. 
We revisit the global types from \cref{ex:implementability} (also shown in \cref{fig:implementability}), which demonstrate that the naive subset construction does not always yield a sound implementation. 
From these examples, we distill our conditions that precisely identify the implementable global types.

In general, a global type $\GG$ is not implementable when the agreement on a global run of $\semglobal(\GG)$ among all participating roles cannot be conveyed via sending and receiving messages alone.
When this happens, roles can take locally permitted transitions that commit to incompatible global runs, resulting in a trace that is not specified by $\GG$. Consequently, our conditions need to ensure that when a role $\procA$ takes a transition in $\subsetcons{\GG}{\procA}$, it~only commits to global runs that are consistent with the local views of all other roles.
We discuss the relevant conditions imposed on send and receive transitions separately. 

\myparagraph{Send Validity.} 
Consider $\GG_s$ from \cref{ex:implementability}. The CSM
$\CSMl{\subsetcons{\GG_s}{\procA}}$ has an execution with the trace
$
\snd{\procA}{\procB}{\msgO}
\cdot 
\rcv{\procA}{\procB}{\msgO}
\cdot 
\snd{\procC}{\procB}{\msgM}
$.
This trace is possible because the initial state of $\subsetcons{\GG_s}{\procC}$, $s_{0,\procC}$, contains two states of $\projerasure{\GG_s}{\procC}$, each of which has a single outgoing send transition labeled with $\snd{\procC}{\procB}{\msgO}$ and $\snd{\procC}{\procB}{\msgM}$ respectively.
Both of these transitions are always enabled in~$s_{0,\procC}$, meaning that $\procC$ can send $\snd{\procC}{\procB}{\msgM}$ even when $\procA$ has chosen the top branch and $\procB$ expects to receive~$\msgO$ instead of $\msgM$ from $\procC$. This results in a deadlock.
In contrast, while the state $s_{0,\procC}$ in $\subsetcons{\GG_s'}{\procC}$ likewise contains two states of $\projerasure{\GG_s'}{\procC}$, each with a single outgoing send transition, now both transitions are labeled with $\snd{\procC}{\procB}{\msgB}$.
These two transitions collapse to a single one in $\subsetcons{\GG_s'}{\procC}$. This transition is consistent with both possible local views that $\procA$ and~$\procB$ might hold on the global~run.

Intuitively, to prevent the emergence of inconsistent local views from send transitions of $\subsetcons{\GG}{\procA}$, we must enforce that for every state $s \in Q_{\procA}$ with an outgoing send transition labeled $x$, a transition labeled $x$ must be enabled in all states of $\projerasure{\GG}{\procA}$ represented by $s$.
We use the following auxiliary definition to formalize this intuition subsequently.

\begin{definition}[Transition Origin and Destination]
	Let $s \xrightarrow{x} s' \in \delta_\procA$ be a transition in $\subsetcons{\GG}{\procA}$ and $\projerasuretrans$ be the transition relation of $\projerasure{\GG}{\procA}$.
	We define the set of \emph{transition origins} $\transAnnoFunc(s \xrightarrow{x} s')$ and \emph{transition destinations} $\transAnnoFunDest(s \xrightarrow{x} s')$ as follows:
	\begin{align*}
		\transAnnoFunc(s \xrightarrow{x} s')
		\is {} &
		\set{G \in s
			\mid
			\exists G' \in s'. \,
			G \xrightarrow{x}\mathrel{\vphantom{\to}^*} G' \in \projerasuretrans} \text{ and }\\
		\transAnnoFunDest(s \xrightarrow{x} s')
		\is {} &
		\set{G' \in s'
			\mid
			\exists G \in s. \,
			G \xrightarrow{x}\mathrel{\vphantom{\to}^*} G' \in \projerasuretrans} \enspace.
	\end{align*}
\end{definition}

Our condition on send transitions is then stated below. 

\begin{definition}[Send Validity]
	\label{cond:send-state-validity-transition-origins}
	$\subsetcons{\GG}{\procA}$ satisfies \emph{Send Validity} iff every send transition $s  \xrightarrow{x} s' \in \delta_\procA$ is enabled in all states contained in $s$: 
	\[
	\forall s  \xrightarrow{x} s' \in \delta_\procA.
	~x \in \Alphabet_{\procA,!} \implies
	\transAnnoFunc(s  \xrightarrow{x} s') = s \enspace.
	\]
\end{definition}

\myparagraph{Receive Validity.}
To motivate our condition on receive transitions, let us revisit $\GG_r$ from \cref{ex:implementability}. The CSM $\CSMl{\subsetcons{\GG_r}{\procA}}$ recognizes the following trace not in the global type language $\lang(\GG_r)$:
\vspace{-0.7ex}
\begin{center}
	$
	\snd{\procA}{\procB}{\msgO}
	\cdot 
	\rcv{\procA}{\procB}{\msgO}
	\cdot 
	\snd{\procB}{\procC}{\msgO}
	\cdot 
	\snd{\procA}{\procC}{\msgO}
	\cdot
	\rcv{\procA}{\procC}{\msgO}
	\cdot
	\rcv{\procB}{\procC}{\msgO} \enspace.
	$
\end{center}
\vspace{-0.7ex}
The issue lies with $\procC$ which cannot distinguish between the two branches in $\GG_r$.
The initial state $s_{0,\procC}$ of $\subsetcons{\GG_r}{\procC}$ has two states of $\semglobal(\GG_r)$ corresponding to the subterms
$G_t \is \msgFromTo{\procB}{\procC}{\msgO}. \, \msgFromTo{\procA}{\procC}{\msgO}. \,0$
and
$G_b \is \msgFromTo{\procA}{\procC}{\msgO}. \, \msgFromTo{\procB}{\procC}{\msgO}. \,0\,$.
Here, $G_t$ and $G_b$ are the top and bottom branch of $\GG_r$ respectively. This means that there are outgoing transitions in $s_{0,\procC}$ labeled with $\rcv{\procA}{\procC}{\msgO}$ and $\rcv{\procB}{\procC}{\msgO}$.
If $\procC$ takes the transition labeled $\rcv{\procA}{\procC}{\msgO}$, it commits to the bottom branch $G_b$.
However, observe that the message $\msgO$ from $\procA$ can also be available at this time point if the other roles follow the top branch $G_t$.
This is because $\procA$ can send $\msgO$ to~$\procC$ without waiting for $\procC$ to first receive from $\procB$.
In this scenario, the roles disagree on which global run of $\semglobal(\GG_r)$ to follow, resulting in the violating trace above.

Contrast this with $\GG_r'$. Here, $s_{0,\procC}$ again has outgoing transitions labeled with $\rcv{\procA}{\procC}{\msgO}$ and $\rcv{\procB}{\procC}{\msgO}$. However, if $\procC$ takes the transition labeled $\rcv{\procA}{\procC}{\msgO}$, committing to the bottom branch, no disagreement occurs. This is because if the other roles are following the top branch, then $\procA$ is blocked from sending to $\procC$ until after it has received confirmation that $\procC$ has received its first message from $\procB$.

For a receive transition $s \xrightarrow{x} s_1$ in $\subsetcons{\GG}{\procA}$ to be safe, we must enforce that the receive event $x$ cannot also be available due to reordered sent messages in the continuation $G_2 \in s_2$ of another outgoing receive transition $s \xrightarrow{y} s_2$. To formalize this condition, we use the set $\semavail^\blockedset_{(G \ldots)}$ of \emph{available messages} for a syntactic subterm $G$ of $\GG$ and a set of \emph{blocked} roles $\blockedset$. This notion was already defined in \cite[Sec.\,2.2]{DBLP:conf/concur/MajumdarMSZ21}. Intuitively, $\semavail^\blockedset_{(G \ldots)}$ consists of all send events $\snd{\procB}{\procC}{\val}$ that can occur on the traces of $G$ such that
$\val$ will be the first message added to channel~$(\channel{\procB}{\procC})$ before any of the roles in $\blockedset$ takes a step.
\paragraph{Available messages.}
The set of available messages is recursively defined on the structure of the global type.
To obtain all possible messages, we need to unfold the distinct recursion variables once.
For this, we define a map $\getMu$ from variable to subterms and write $\getMuG$ for $\getMu(\GG)$:

\vspace{0.7ex}
\begin{footnotesize}
	{} \hfill
	$
	\getMu(0) \is [\,]
	$ \hfill $
	\getMu(t) \is [\,]
	$ \hfill $
	\getMu(μt.G) \is [t \mapsto G] ∪ \getMu(G)
	$  \hfill {}
	
	{} \hfill $
	\getMu(\sum_{i ∈ I} \msgFromTo{\procA}{\procB_i}{\val_i.G_i}) \is \bigcup_{i∈I} \getMu(G_i)
	$  \hfill {}
\end{footnotesize}
\vspace{0.5ex}

\noindent The function $\semavaildef{\blockedset}{T}{\hole}$ keeps a set of unfolded variables $T$, which is empty initially.

\begin{center}
	\begin{minipage}{0.98\textwidth}
		\begin{small}
			\noindent
			$
			\semavaildef{\blockedset}{T}{0} \is ∅ \hfill
			\semavaildef{\blockedset}{T}{μt.G} \is \semavaildef{\blockedset}{T ∪ \set{t}}{G} \hfill
			\semavaildef{\blockedset}{T}{t} \is \begin{cases}
				∅ & \text{if} ~ t ∈ T\\
				\semavaildef{\blockedset}{T ∪ \set{t}}{\getMuG(t)} & \text{if} ~ t ∉ T
			\end{cases}\\[2mm]
			\semavaildef{\blockedset}{T}{\sum_{i ∈ I} \msgFromTo{\procA}{\procB_i}{\val_i.G_i}}
			\is
			\begin{cases}
				\bigcup_{i∈I,m∈\MsgVals} (\semavaildef{\blockedset}{T}{G_i} \setminus \set{ \snd{\procA}{\procB_i}{\val} }) ∪  \set{ \snd{\procA}{\procB_i}{\val_i} } \quad \hfill \text{if} ~ \procA ∉ \blockedset \\
				\bigcup_{i∈I} \semavaildef{\blockedset ∪ \set{ \procB_i }}{T}{G_i} \quad \hfill \text{if} ~ \procA ∈ \blockedset
			\end{cases}
			$
		\end{small}
	\end{minipage}
\end{center}

\noindent We write $\semavail^{\blockedset}_{(G \ldots)}$ for $\semavaildef{\blockedset}{\emptyset}{G}$.
If $\blockedset$ is a singleton set, we omit set notation and write $\semavail^{\procA}_{(G \ldots)}$
for $\semavail^{\set{\procA}}_{(G \ldots)}$.
The set of available messages captures the possible states of all channels before a given receive transition is taken.

\begin{definition}[Receive Validity]
\label{cond:rcv-state-validity}
$\subsetcons{\GG}{\procA}$ satisfies \emph{Receive Validity} iff no receive transition is enabled in an alternative continuation that originates from the same source state:
\[
\begin{array}{l}
	\forall 
	s \xrightarrow{\rcv{\procB_1}{\procA}{\val_1}} s_1,\,
	s \xrightarrow{\rcv{\procB_2}{\procA}{\val_2}} s_2 \in \delta_\procA.\, \\
	\qquad  \procB_1 \neq \procB_2
	\; \implies \;
	\forall~G_2 \in \transAnnoFunDest(s \xrightarrow{\rcv{\procB_2}{\procA}{\val_2}} s_2). \;
	\snd{\procB_1}{\procA}{\val_1} \notin \semavail^{\procA}_{(G_2 \ldots)} \enspace.
\end{array}
\]
\end{definition}

\myparagraph{Subset Projection.}
We are now ready to define our projection operator.
\begin{definition}[Subset Projection of $\GG$]
\label{def:conditions-powerset-proj}
	The \emph{subset projection} $\subsetproj{\GG}{\procA}$ of $\GG$ onto $\procA$ is $\subsetcons{\GG}{\procA}$ if it satisfies Send Validity and Receive Validity. We lift this operation to a partial function from global types to CSMs in the expected~way.\end{definition}

We conclude our discussion with an observation about the syntactic structure of the subset projection: \noindent Send Validity implies that no state has both outgoing send and receive transitions (also known as mixed choice).

\begin{corollary}[No Mixed Choice]
	\label{cor:no-mixed-choice}If $\subsetproj{\GG}{\procA}$ satisfies Send Validity, then for all $s  \xrightarrow{x_1} s_1, s \xrightarrow{x_2} s_2 \in \delta_\procA$, $x_1 \in \Alphabet_!$ iff $x_2 \in \Alphabet_!$.
\end{corollary}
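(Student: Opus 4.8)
The plan is to prove the contrapositive: if some state $s$ of $\subsetcons{\GG}{\procA}$ has two outgoing transitions $s \xrightarrow{x_1} s_1$ and $s \xrightarrow{x_2} s_2$ with $x_1$ a send event and $x_2$ a receive event, then $\subsetcons{\GG}{\procA}$ violates Send Validity. Since the alphabet $\Alphabet_\procA$ of $\subsetcons{\GG}{\procA}$ partitions into send events $\Alphabet_{\procA,!}$ and receive events $\Alphabet_{\procA,?}$, the failure of ``$x_1 \in \Alphabet_!$ iff $x_2 \in \Alphabet_!$'' is exactly the existence of such a mixed pair, so this reduction suffices. Concretely, I would assume $x_1 = \snd{\procA}{\procB_1}{\val_1}$ and $x_2 = \rcv{\procB_2}{\procA}{\val_2}$ and derive a contradiction with the assumed Send Validity of $\subsetproj{\GG}{\procA} = \subsetcons{\GG}{\procA}$.

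The core of the argument is to exhibit a term in $s$ that witnesses $x_2$ but cannot be a transition origin of $x_1$. From $s \xrightarrow{x_2} s_2 \in \delta_\procA$ and $s_2 = \delta(s,x_2) \neq \emptyset$, the definition of $\delta$ yields some $G \in s$ with an edge $G \xrightarrow{x_2} G' \in \projerasuretrans$. Unfolding $\projerasuretrans$ and $\SyncToAsync$, a non-$\emptystring$ receive label on an edge of $\projerasure{\GG}{\procA}$ can only originate from a choice edge of $\semglobal(\GG)$ — the recursion edges $(μt.G',\emptystring,G')$ and $(t,\emptystring,μt.G')$ carry $\emptystring$, and $0$ has no successor — so $G$ is a choice term $\sum_{i\in I}\msgFromTo{\procB_2}{\procC_i}{\val_i.G_i}$ whose single sender is $\procB_2$, with $\procC_i = \procA$, $\val_i = \val_2$ for the branch realizing $x_2$; well-formedness ($\procB_2 \neq \procC_i$) forces $\procB_2 \neq \procA$. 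Now the only $\semglobal(\GG)$-successors of $G$ are the $G_i$ via $G \xrightarrow{\msgFromTo{\procB_2}{\procC_i}{\val_i}} G_i$, and $\SyncToAsync(\msgFromTo{\procB_2}{\procC_i}{\val_i})\wproj_{\Alphabet_\procA}$ equals $\rcv{\procB_2}{\procA}{\val_i}$ when $\procC_i = \procA$ and $\emptystring$ otherwise, since the send component $\snd{\procB_2}{\procC_i}{\val_i}$ is erased ($\procB_2 \neq \procA$). Hence no outgoing edge of $G$ in $\projerasure{\GG}{\procA}$ lies in $\Alphabet_{\procA,!}$, so none is labeled $x_1$. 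As $\transAnnoFunc(s \xrightarrow{x_1} s_1)$ consists only of those $G \in s$ admitting an $x_1$-labeled step (followed by $\emptystring$-steps) into $s_1$, we conclude $G \notin \transAnnoFunc(s \xrightarrow{x_1} s_1)$ while $G \in s$, so $\transAnnoFunc(s \xrightarrow{x_1} s_1) \subsetneq s$, contradicting Send Validity for $s \xrightarrow{x_1} s_1$.

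I expect the only real care needed is bookkeeping across the three layers of definitions — $\semglobal(\GG)$, the projection by erasure $\projerasure{\GG}{\procA}$, and the subset construction $\subsetcons{\GG}{\procA}$ — to pin down cleanly that (i) the witness $G$ for the receive edge genuinely lies in $s$, which is immediate because $\delta(s,x_2)$ is nonempty and so some element of $s$ has a direct $x_2$-step, and (ii) $\transAnnoFunc$ records the source of an $x_1$-labeled \emph{first} step, so that a choice term whose sole sender is $\procB_2 \neq \procA$ — all of whose edges in $\projerasure{\GG}{\procA}$ are $\emptystring$ or receives by $\procA$ — is automatically excluded from it. Beyond this, no induction or case analysis is needed.
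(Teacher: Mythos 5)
The paper states this corollary without proof, as an immediate observation, and your argument is the intended one: a state $G$ of $\projerasure{\GG}{\procA}$ with a direct outgoing edge labelled $\rcv{\procB_2}{\procA}{\val_2}$ must be a choice term whose sender is $\procB_2\neq\procA$, all of whose outgoing edges in $\projerasure{\GG}{\procA}$ are labelled by receives of $\procA$ or by $\emptystring$, so $G$ admits no first step labelled by a send of $\procA$; together with $G\in s$ this refutes $\transAnnoFunc(s\xrightarrow{x_1}s_1)=s$. Under the reading of $\transAnnoFunc$ that you adopt, this is a complete and correct proof.

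Be aware, however, that the interpretive choice you flag as your point (ii) is not mere bookkeeping: it is the load-bearing step, and the only reading under which the corollary is provable this way. If $G \xrightarrow{x}\mathrel{\vphantom{\to}^*} G'$ in the definition of $\transAnnoFunc$ were instead read as the reflexive--transitive hull of $\projerasuretrans$ (a path whose labels concatenate to $x$, so that $\emptystring$-steps may \emph{precede} the $x$-labelled edge), your key step would fail: the choice term $G=\sum_{i\in I}\msgFromTo{\procB_2}{\procC_i}{\val_i.G_i}$ may have branches with $\procC_k\neq\procA$, which become $\emptystring$-edges of $\projerasure{\GG}{\procA}$, and from such a $G_k$ a send of $\procA$ may well be reachable as the first $\procA$-event, which would put $G$ into $\transAnnoFunc(s\xrightarrow{x_1}s_1)$ after all. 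Concretely, for $\msgFromTo{\procB}{\procA}{\val}.\,0\;+\;\msgFromTo{\procB}{\procC}{\val'}.\,\msgFromTo{\procA}{\procD}{\val''}.\,0$ the initial state of $\subsetcons{\GG}{\procA}$ is $\set{\GG,\;\msgFromTo{\procA}{\procD}{\val''}.\,0}$, a mixed-choice state, yet both of its elements reach $0$ along a path spelling $\snd{\procA}{\procD}{\val''}$, so under the hull reading Send Validity would be satisfied. Your reading --- the $x$-labelled edge is the \emph{first} step, followed by $\emptystring$-steps, mirroring the definition of $\delta(s,a)$ --- is also the one the paper's soundness argument depends on at the analogous point (it invokes Send Validity to extract a direct transition $G\xrightarrow{\msgFromTo{\procA}{\procB}{\val}}\bar{G}'\in\delta_{\GG}$ for every $G$ in the state). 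You should therefore promote that parenthetical to an explicit premise of the proof rather than treat it as a notational footnote.
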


 \section{Soundness}
\label{sec:soundness}

In this section, we prove the soundness of our subset projection, stated as follows. 

\begin{restatable}{theorem}{soundnessTheorem}
	\label{thm:soundness}
	Let $\GG$ be a global type and $\CSMl{\subsetproj{\GG}{\procA}}$ be the subset projection.
	Then, $\CSMl{\subsetproj{\GG}{\procA}}$
	implements $\GG$. 
\end{restatable}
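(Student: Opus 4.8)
The plan is to verify the two conditions of \cref{def:implementability} for $\CSMl{\subsetproj{\GG}{\procA}}$, namely protocol fidelity $\lang(\CSMl{\subsetproj{\GG}{\procA}}) = \lang(\GG)$ and deadlock freedom. Because the subset projection equals the subset construction $\subsetcons{\GG}{\procA}$ whenever it is defined, one inclusion of fidelity, $\lang(\GG) \subseteq \lang(\CSMl{\subsetproj{\GG}{\procA}})$, is already supplied by \cref{lm:constructionPreservesBehaviors}. Everything else --- the reverse inclusion $\lang(\CSMl{\subsetproj{\GG}{\procA}}) \subseteq \lang(\GG)$ and deadlock freedom --- I would derive from a single invariant maintained along all runs of the CSM, which is the technical core of the argument.

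The invariant I would use says that every reachable configuration $(\vec{s}, \xi)$, reached via a CSM trace $w$, is \emph{witnessed by a common global run}: there is a trace $v$ of $\semglobal(\GG)$, reaching a subterm $G_v$, such that (i) $w$ and $\SyncToAsync(v)$ contain exactly the same send events, and appending to $w$ the messages recorded in $\xi$ in FIFO channel order yields a word $\interswap$-equivalent to $\SyncToAsync(v)$; and (ii) for every role $\procB$, the local view $w \wproj_{\Alphabet_\procB}$ is a prefix of $\SyncToAsync(v) \wproj_{\Alphabet_\procB}$, and $\vec{s}_\procB$ is exactly the set of subterms that $\projerasure{\GG}{\procB}$ reaches from $\procB$'s committed position on $v$ (the state after the prefix of $v$ matching $w \wproj_{\Alphabet_\procB}$) by moves invisible to $\procB$; in particular the position of $G_v$ for $\procB$ lies in $\vec{s}_\procB$. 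Intuitively, (i) says the channel contents precisely measure how far senders have run ahead of receivers along $v$, and (ii) says that no role has committed to a global run incompatible with $v$. Both clauses hold vacuously for the initial configuration with $v = \emptystring$, and I would prove them by induction on the length of $w$.

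I expect the preservation step to be the main obstacle, and this is exactly where \Cref{cond:send-state-validity-transition-origins,cond:rcv-state-validity} are needed. For a send step $\snd{\procB}{\procC}{\val}$ out of a state $\vec{s}_\procB$: by \cref{cor:no-mixed-choice} that state has only send transitions, and by Send Validity the action is enabled in \emph{every} subterm of $\vec{s}_\procB$, hence at $\procB$'s committed position on $v$, which must therefore be an unresolved choice owned by $\procB$; since sender-driven choice prevents any other role from observing this choice before $\procB$ takes it, no other role's committed position is past it, so $v$ can be extended by $\msgFromTo{\procB}{\procC}{\val}$ while re-establishing (ii), and (i) is immediate. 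For a receive step $\rcv{\procC}{\procB}{\val}$: the matching send already occurs in $\SyncToAsync(v)$ (the message sits in channel $(\procC,\procB)$), so $v$ is unchanged and only $\procB$'s view advances; the delicate point is to show this agrees with $v$, \ie that $\procB$'s next receive along $v$ is indeed $\rcv{\procC}{\procB}{\val}$. FIFO together with (i) rules out receiving a different message from $\procC$, and if $\procB$'s next receive along $v$ came from a different sender $\procC' \neq \procC$, then for the destination subterm $G_2$ of that alternative receive transition out of $\vec{s}_\procB$ --- which lies in $\vec{s}_\procB$ by (ii) --- we would get $\snd{\procC}{\procB}{\val} \in \semavail^{\procB}_{(G_2 \ldots)}$, contradicting Receive Validity. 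Hence the receive agrees with $v$ and the invariant is restored.

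Given the invariant, the conclusion is routine. For fidelity, let $w$ be a maximal trace of the CSM: if $w$ is finite, its final configuration has empty channels and all roles in final states, so the witness gives $w \interswap \SyncToAsync(v)$ and clause (ii) forces $G_v = 0$ (a non-final $G_v$ would make the state of the role owning $v$'s next action non-final), whence $w \in \interswaplang(\SyncToAsync(\lang(\semglobal(\GG)))) = \lang(\GG)$; an infinite $w$ is handled by extracting an infinite run of $\semglobal(\GG)$ from the witnessing runs of its prefixes, using König's lemma and finite branching of $\semglobal(\GG)$ as in the proof sketch of \cref{lm:constructionPreservesBehaviors}. For deadlock freedom, a reachable non-final configuration either has a non-empty channel --- then by (i) the receive of the head message is the next receive expected of that role along $v$ and is therefore enabled --- or has empty channels but some non-final role, in which case $G_v \neq 0$ and the role owning $v$'s next action can move; either way the configuration is not a deadlock. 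The bulk of the work, and the part I expect to be hardest, is pinning down clause (ii) and verifying its preservation on receive steps, which is precisely where the available-messages analysis behind Receive Validity enters.
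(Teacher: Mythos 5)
Your overall architecture matches the paper's: an inductive invariant asserting that all roles' local views remain consistent with a common global run, preserved using Send Validity on send steps and Receive Validity on receive steps, with K\"onig's lemma handling infinite traces. However, the specific invariant you propose is not actually maintained, and the place where it breaks is exactly the hard part of the proof. Clause~(i) --- that $w$ and $\SyncToAsync(v)$ contain \emph{exactly the same send events} --- already fails for trivial reachable configurations. Take $\GG = \msgFromTo{\procA}{\procB}{x}.\,\msgFromTo{\procC}{\procD}{y}.\,0$: since send transitions in a CSM are always enabled and $\procC$'s initial subset state contains $\msgFromTo{\procC}{\procD}{y}.0$ via $\emptystring$-closure, $w = \snd{\procC}{\procD}{y}$ is a trace of the subset projection (and a legitimate prefix of $\lang(\GG)$ under $\interswap$), yet no finite trace of $\semglobal(\GG)$ has send-event set $\set{\snd{\procC}{\procD}{y}}$. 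A role can race arbitrarily far ahead of events it does not participate in, so no finite witness trace covering precisely the performed sends exists. This is why the paper's invariant (\cref{lm:intersNonempty}) quantifies over \emph{maximal} runs and only requires each role's view to be a \emph{prefix} of the run's projection.

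Once you drop or weaken clause~(i), your send-case argument loses its footing. The claim that the sender's committed position is an unresolved choice it owns and that ``no other role's committed position is past it'' is false in general: another role $\procC$ may already have performed, in $w$, an action that occurs only \emph{inside} the branches of the sender's pending choice (this is permitted whenever that action is $\procC$'s next step in every branch --- cf.\ role $\procC$ in the odd-even protocol of \cref{ex:odd-even}). The witness run therefore already extends past the choice point along some branch, and when the sender commits to a different branch the entire suffix must be rebuilt so that it remains consistent with what every other role has already done. This reconstruction --- pick an arbitrary maximal continuation, locate the earliest disagreeing role, show via Receive Validity that the disagreement must sit at a send event, repair it using Send Validity, and iterate with a termination measure --- is the content of the paper's \cref{lm:sndPrefixPreservation} and is the genuinely difficult step your proposal elides in one sentence. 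Your receive case, by contrast, is essentially the paper's \cref{lm:rcvIntersectionSetEquality} and is sound in outline, though the formal argument that the alternative sender's message is genuinely ``available'' (the chain of blocked roles in the definition of $\semavail^{\blockedset}_{(G\ldots)}$) also requires more care than the sketch suggests.
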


Recall that implementability is defined as protocol fidelity and deadlock freedom. 
Protocol fidelity consists of two language inclusions. 
The first inclusion, 
$\lang(\GG) \subseteq \lang(\CSMl{\subsetproj{\GG}{\procA}})$, 
enforces that the subset projection generates at least all behaviors of the global type.
We showed in \cref{lm:constructionPreservesBehaviors} that this holds for the subset construction alone (without Send and Receive Validity).

The second inclusion, 
$\lang(\CSMl{\subsetproj{\GG}{\procA}}) \subseteq \lang(\GG)$, 
enforces that no new behaviors are introduced. 
The proof of this direction relies on a stronger inductive invariant that we show for all traces of the subset projection. 
As discussed in \S\ref{sec:checking-conditions}, violations of implementability occur when roles commit to global runs that are inconsistent with the local views of other roles.  
Our inductive invariant states the exact opposite: that all local views are consistent with one another.
First, we formalize the local view of a role.

\begin{definition}[Possible run sets]
	\label{def:possible-run-sets}
	Let $\GG$ be a global type and $\semglobalsync(\GG)$ be the corresponding state machine.
	Let $\procA$ be a role and
	$w \in \AlphAsync^*$ be a word.
	We define the set of possible runs $\globcomplocal{\GG}{\procA}{w}$
	as all maximal runs of $\semglobalsync(\GG)$ that are consistent with $\procA$'s local view of $w$:
	\[
	\globcomplocal{\GG}{\procA}{w}
	\is
	\set{
		\run
		\text{ is a maximal run of }
		\semglobalsync(\GG)
		\mid
		w \wproj_{\Alphabet_\procA} \preforder \SyncToAsync(\trace(\run)) \wproj _{\Alphabet_\procA}
	}
    \enspace .
	\]
\end{definition}

While \cref{def:possible-run-sets} captures the set of maximal runs that are consistent
with the local view of a single role, we would like to refer to the set of runs that is consistent with the local view of all roles. We formalize this as the intersection of the possible run sets for all roles, which we denote as 
\[
	I(w) 
	\is 
	\Inters_{\procA \in \Procs} \globcomplocal{\GG}{\procA}{w}
	\enspace.
\]
With these definitions in hand, we can now formulate our inductive invariant: 
\begin{restatable}{lemma}{intersNonempty}
	\label{lm:intersNonempty}
	Let $\GG$ be a global type and $\CSMl{\subsetproj{\GG}{\procA}}$ be the subset projection.
	Let $w$ be a trace of $\CSMl{\subsetproj{\GG}{\procA}}$.
	It holds that
	$I(w)$ is non-empty.
\end{restatable}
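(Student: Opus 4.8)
The plan is to prove Lemma~\ref{lm:intersNonempty} by induction on the length of the trace $w$ of $\CSMl{\subsetproj{\GG}{\procA}}$, strengthening the statement slightly so that the induction goes through: besides $I(w) \neq \emptyset$, I would additionally track, for each role $\procA$, that the local state $\vec{s}_\procA$ reached by $w$ in $\subsetproj{\GG}{\procA}$ is precisely the set of subterms $G$ such that some run $\run \in I(w)$ passes through $G$ ``at the point matching $\procA$'s current view'' --- more precisely, $G$ is reachable from $q_{0,\GG}$ by a trace $v$ with $\SyncToAsync(v)\wproj_{\Alphabet_\procA} = w\wproj_{\Alphabet_\procA}$ and $v$ is a prefix of $\trace(\run)$ for some $\run \in I(w)$. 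This connects the subset-construction bookkeeping (\cref{lm:languages-of-roles} and the transition-origin/destination machinery) to the semantic object $I(w)$.

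For the base case $w = \emptystring$: every maximal run of $\semglobalsync(\GG)$ lies in $\globcomplocal{\GG}{\procA}{\emptystring}$ for every $\procA$ since $\emptystring\wproj_{\Alphabet_\procA} = \emptystring$ is a prefix of everything, and $\semglobalsync(\GG)$ has at least one maximal run (it is a finite-state machine with the guardedness condition, so from $q_{0,\GG}$ one can always either reach $0$ or follow an infinite path); hence $I(\emptystring)$ is non-empty. For the inductive step, suppose $w = w' \cdot x$ is a trace of the subset projection, with $I(w')$ non-empty by the induction hypothesis, and let $\procA$ be the role active in $x$. Pick any $\run \in I(w')$. The inductive step splits on whether $x$ is a send or a receive event. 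If $x = \snd{\procA}{\procB}{\val} \in \Alphabet_{\procA,!}$, then $\procA$ has taken a send transition $s \xrightarrow{x} s'$ in $\subsetproj{\GG}{\procA}$ from the state $s = \vec{s}_\procA$ reached by $w'$; by the strengthened hypothesis $s$ is the set of subterms witnessed by runs in $I(w')$, and by \textbf{Send Validity}, $\transAnnoFunc(s \xrightarrow{x} s') = s$, so \emph{every} run in $I(w')$ --- in particular $\run$ --- can be advanced past an occurrence of $\msgFromTo{\procA}{\procB}{\val}$; the advanced run still lies in $\globcomplocal{\GG}{\procC}{w}$ for all $\procC \neq \procA$ (their views are unchanged and $x$ only adds to $\procA$'s view, which is now matched) and in $\globcomplocal{\GG}{\procA}{w}$ by construction, so $I(w) \neq \emptyset$. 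If $x = \rcv{\procB}{\procA}{\val} \in \Alphabet_{\procA,?}$, then for $x$ to fire in the CSM the message $\val$ must be at the head of channel $(\channel{\procB}{\procA})$; I would argue via the available-messages characterization that among the runs in $I(w')$ there is one, $\run$, whose continuation actually produces this receive as the ``next'' $\procA$-relevant action --- if not, then some other outgoing receive transition $s \xrightarrow{\rcv{\procB'}{\procA}{\val'}} s_2$ of $s$ is the one consistent with all runs in $I(w')$, and \textbf{Receive Validity} (together with the available-messages semantics, which describes exactly which messages can sit at channel heads) forbids $\snd{\procB}{\procA}{\val} \in \semavail^{\procA}_{(G_2\ldots)}$ for the relevant continuation $G_2$, contradicting that $\val$ from $\procB$ is available; so some $\run \in I(w')$ can be advanced past $\msgFromTo{\procB}{\procA}{\val}$, yielding a run in $I(w)$.

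The main obstacle I anticipate is the receive case, specifically making the link between the operational notion ``message $\val$ from $\procB$ is at the head of channel $(\channel{\procB}{\procA})$ in the CSM configuration reached by $w'$'' and the syntactic set $\semavail^{\procA}_{(G\ldots)}$ of available messages. This requires a separate lemma (or a careful unfolding of the induction hypothesis) stating that for a run $\run \in I(w')$, the channel contents induced by $w'$ relative to $\run$ are exactly captured by the available-messages sets of the subterms that $\run$ is currently ``at'' --- including the subtle blocked-roles bookkeeping $\blockedset$ that tracks which roles must not move before the message is enqueued. I would isolate this as an auxiliary invariant about channel contents versus available messages and prove it alongside the run-set invariant, so that Receive Validity can be applied in exactly the form it is stated. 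The other steps --- the send case, the maximality/König-style argument for existence of maximal runs, and propagating the strengthened state-characterization --- I expect to be routine given \cref{lm:languages-of-roles}, the transition-origin/destination definitions, and \cref{cor:no-mixed-choice} (which ensures a state's outgoing transitions are uniformly sends or receives, so the two cases of the induction are exhaustive and non-overlapping).
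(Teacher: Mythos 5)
Your overall skeleton—induction on $|w|$, a trivial base case, a case split on send versus receive, Send Validity driving the send step and Receive Validity the receive step—matches the paper's proof, which factors the two cases into \cref{lm:rcvIntersectionSetEquality} and \cref{lm:sndPrefixPreservation}. Your strengthened invariant relating $\vec{s}_\procA$ to the splitting points of runs in $I(w)$ is essentially \cref{prop:correspondence-unique-splittings-local-states} and is fine. However, your send case has a genuine gap, and it sits exactly where the bulk of the paper's work lies. Send Validity gives you that from the subterm $G$ at which the unique splitting of $\run$ for $\procA$ sits, $\semglobalsync(\GG)$ has \emph{some} outgoing transition labelled $\msgFromTo{\procA}{\procB}{\val}$; it does \emph{not} give you that $\run$ itself takes that transition. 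So you cannot ``advance $\run$ past an occurrence of $\msgFromTo{\procA}{\procB}{\val}$''---you must cut $\run$ at $G$ and graft on a fresh maximal suffix beginning with $\msgFromTo{\procA}{\procB}{\val}$. Your parenthetical justification ``their views are unchanged'' then fails: another role $\procC$ may already have performed, in $w'$, events whose witnesses in $\run$ lie strictly \emph{after} $G$ (asynchrony lets $\procC$ run ahead of $\procA$'s splitting point while $\run$ still belongs to $\globcomplocal{\GG}{\procC}{w'}$), and those witnesses are destroyed by the graft. The paper's \cref{lm:sndPrefixPreservation} is devoted to repairing this: it chooses an arbitrary new suffix, finds the earliest role $\bar\procC$ whose view of $w'$ is no longer matched, shows via an inner induction (reusing the receive-case lemma) that the point of disagreement must be a send of $\bar\procC$, applies Send Validity again at that point, and iterates with a termination measure. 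Without this argument the inductive step does not close, and it is not routine.

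Your receive case is closer to an honest sketch than a gap: the paper proves the stronger statement $I(w') = I(w'x)$, and its proof indeed needs exactly the auxiliary material you flag---the channel-compliance facts showing the sender is strictly ahead of the receiver in every run of the sender's possible run set, plus a chain-of-blocked-roles argument establishing that the sender never enters the blocked set $\blockedset$ in the computation of $\semavail^{\procA}_{(G_2\ldots)}$, so that Receive Validity applies in the form stated. So the receive side is recoverable with the lemma you propose; the send side is the one you have misjudged as routine.
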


The reasoning for the sufficiency of \cref{lm:intersNonempty} is included in the proof of \cref{thm:soundness}, found in \appendixRef{app:soundness}.
In the rest of this section, we focus our efforts on how to show this inductive invariant, namely that the intersection of all roles' possible run sets is non-empty.

We begin with the observation that the empty trace $\emptystring$ is consistent with all runs.
As a result, 
$I(\emptystring) =
\Inters_{\procA \in \Procs} \globcomplocal{\GG}{\procA}{\emptystring}$
contains all maximal runs in $\semglobal(\GG)$.
By definition, state machines for global types include at least one run, and the base case is trivially discharged. Intuitively, $I(w)$ shrinks as more events are appended to $w$, but we show that at no point does it shrink to $\emptyset$.
We consider the cases where a send or receive event is appended to the trace separately, and show that the intersection set shrinks in a principled way that preserves non-emptiness. 
In fact, when a trace is extended with a receive event, Receive Validity guarantees that the intersection set does not shrink at all. 

\begin{figure}[t]
    \centering
    \def\boundingbox{(-3,-1.2) rectangle (3,1.2)}
\def\preRall{(0,0) circle (0.5)}
\def\preRp{(-1 cm,0) circle[x radius = 1.8 cm, y radius = 1 cm]}
\def\preRq{(1 cm,0) circle[x radius = 1.8 cm, y radius = 1 cm]}
\def\bottomHalf{(-3,-1.2) rectangle (3,0)}
\def\topHalf{(-3,0) rectangle (3,1.2)}
\def\leftHalf{(-3,-1.2) rectangle (0,1.2)}
\def\rightHalf{(0,-1.2) rectangle (3,1.2)}
\def\postRq{}

\tikzset{
    myoutline/.style={draw=blue!80, thick}
}

\pgfdeclarepattern{
  name=stripes,
  parameters={\stripesoffset,\stripeslinewidth},
  bottom left={\pgfpoint{0pt}{0pt}},
  top right={\pgfpoint{10pt}{10pt}},
  tile size={\pgfpoint{10pt}{10pt}},
  code={
    \pgfsetlinewidth{\stripeslinewidth}
    \pgfpathmoveto{\pgfpoint{\stripesoffset - 5pt}{- 5pt}}
    \pgfpathlineto{\pgfpoint{\stripesoffset +15pt}{ 15pt}}
    \pgfpathmoveto{\pgfpoint{\stripesoffset + 5pt}{- 5pt}}
    \pgfpathlineto{\pgfpoint{\stripesoffset +25pt}{ 15pt}}
    \pgfpathmoveto{\pgfpoint{\stripesoffset -15pt}{- 5pt}}
    \pgfpathlineto{\pgfpoint{\stripesoffset + 5pt}{ 15pt}}
    \pgfusepath{stroke}
  }
}

\tikzset{
  stripes offset/.store in=\stripesoffset,
  stripes line width/.store in=\stripeslinewidth,
  stripes offset=0pt,
  stripes line width=3.5pt,
}

\begin{tikzpicture}
    \begin{scope}[xshift=-3.2cm]
\draw \boundingbox ;
        \node[anchor=south] at (0,1.3 cm) {$x = \snd{\procA}{\procB}{\val}$, $w \in \AlphAsync^*$};
\draw[pattern=stripes, pattern color=blue!20, stripes offset=5.0pt] \preRp;
\begin{scope}
            \clip \preRp;
            \fill[pattern=stripes, pattern color=red!20] \bottomHalf;
            \draw[myoutline,dashed] \bottomHalf;
        \end{scope}
\draw[myoutline] \preRp node[xshift=-0.5 cm,yshift=-0.5 cm] {$\globcomplocal{\GG}{\procA}{wx}$};
\draw[myoutline] \preRall;
        \node at (1.8cm,0.8cm) {$\Inters_{\procC \in \Procs} \globcomplocal{\GG}{\procC}{w}$};
        \draw[semithick] (0.8cm,0.6cm) -- (0.2cm,0.2cm);
\draw[myoutline] \preRp node[xshift=-0.5 cm,yshift=0.5 cm] {$\globcomplocal{\GG}{\procA}{w}$};
\end{scope}
    \begin{scope}[xshift=3.2cm]
\draw \boundingbox ;
        \node[anchor=south] at (0,1.3 cm) {$y = \rcv{\procA}{\procB}{\val}$, $w' = wxu$ with $u \in \AlphAsync^*$};
\begin{scope}
            \clip \preRq;
            \fill[pattern=stripes, pattern color=red!20] \bottomHalf;
        \end{scope}
\fill[pattern=stripes, pattern color=blue!20, stripes offset=5.0pt] \preRq;
\begin{scope}
            \clip \preRq;
            \draw[myoutline,dashed] \bottomHalf;
        \end{scope}
\begin{scope}
            \clip \preRp;
            \draw[myoutline] \bottomHalf;
        \end{scope}
        \begin{scope}
            \clip \bottomHalf;
\draw[myoutline] \preRp node[xshift=-0.5 cm,yshift=-0.5 cm] {$\globcomplocal{\GG}{\procA}{w'}$};
\draw[myoutline] \preRall;
\draw[myoutline] \preRq node[xshift=0.5 cm,yshift=-0.5 cm] {$\globcomplocal{\GG}{\procB}{w'y}$};
        \end{scope}
        \draw[myoutline] \preRq node[xshift=0.5 cm,yshift=0.5 cm] {$\globcomplocal{\GG}{\procB}{w'}$};
\node at (-1.7cm,0.7cm) {$\Inters_{\procC \in \Procs} \globcomplocal{\GG}{\procC}{w'}$};
        \draw[semithick] (-0.8cm,0.5cm) -- (-0.2cm,-0.1cm);
    \end{scope}
\end{tikzpicture}
     \vspace{-3mm}
    \caption{Evolution of $\globcomplocal{\GG}{\hole}{\hole}$ sets when $\procA$ sends a message $\val$ and $\procB$ receives it.}
    \label{fig:venn-diagram-R-shrink}
\end{figure}

\begin{restatable}{lemma}{rcvIntersectionSetEquality}
	\label{lm:rcvIntersectionSetEquality}
	Let $\GG$ be a global type and $\CSMl{\subsetproj{\GG}{\procA}}$ be the subset projection.
	Let $wx$ be a trace of $\CSMl{\subsetproj{\GG}{\procA}}$ such that $x \in \Alphabet_?$. 
	Then, $I(w) = I(wx)$. 
\end{restatable}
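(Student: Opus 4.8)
The plan is to prove the two inclusions of $I(w) = I(wx)$ separately. The inclusion $I(wx) \subseteq I(w)$ is immediate: from $w \preforder wx$ we get $w \wproj_{\Alphabet_\procC} \preforder wx \wproj_{\Alphabet_\procC}$ for every role $\procC$, hence $\globcomplocal{\GG}{\procC}{wx} \subseteq \globcomplocal{\GG}{\procC}{w}$ and so $I(wx) \subseteq I(w)$. For the converse, write $x = \rcv{\procD}{\procC}{\val}$; then $\procC$ is the only role active in $x$, so $\globcomplocal{\GG}{\procE}{wx} = \globcomplocal{\GG}{\procE}{w}$ for every $\procE \neq \procC$, and it suffices to show that every $\run \in I(w)$ satisfies $w \wproj_{\Alphabet_\procC} \cdot x \preforder \SyncToAsync(\trace(\run)) \wproj_{\Alphabet_\procC}$. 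Fix such a $\run$ and set $v \is \SyncToAsync(\trace(\run)) \wproj_{\Alphabet_\procC}$. Since $\run \in \globcomplocal{\GG}{\procC}{w}$ we already have $w \wproj_{\Alphabet_\procC} \preforder v$, so it remains to show (i) that $v$ strictly extends $w \wproj_{\Alphabet_\procC}$ and (ii) that the letter of $v$ immediately after $w \wproj_{\Alphabet_\procC}$ is $x$.

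For (i), I would use that $wx$ is a trace of the CSM: after $w$ the channel $(\channel{\procD}{\procC})$ is non-empty with head $\val$, so $w$ has strictly more sends than receives on this channel. By consistency of $\run$ with $\procD$'s view of $w$, $\procD$'s sends to $\procC$ in $w$ form a prefix of those in $\run$; and along any run of $\semglobalsync(\GG)$ the numbers of $\snd{\procD}{\procC}{\cdot}$ and $\rcv{\procD}{\procC}{\cdot}$ events agree, because each synchronous step $\msgFromTo{\procD}{\procC}{\cdot}$ contributes one of each under $\SyncToAsync$. Hence $v$ has strictly more $\rcv{\procD}{\procC}{\cdot}$ events than $w \wproj_{\Alphabet_\procC}$, forcing $v \neq w \wproj_{\Alphabet_\procC}$. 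For (ii), I would invoke \cref{lm:languages-of-roles}: $v$ is a trace of $\subsetcons{\GG}{\procC} = \subsetproj{\GG}{\procC}$, and $w \wproj_{\Alphabet_\procC} \cdot x$ is a trace of $\subsetproj{\GG}{\procC}$ since $wx$ is a CSM trace. As $\subsetproj{\GG}{\procC}$ is deterministic, both reach the same state $s$ after $w \wproj_{\Alphabet_\procC}$, which therefore has an outgoing receive transition labeled $x$; by \cref{cor:no-mixed-choice} all outgoing transitions of $s$ are receives, so the letter $y$ of $v$ right after $w \wproj_{\Alphabet_\procC}$ (which exists by (i)) is some receive $\rcv{\procE}{\procC}{\val'}$ with $s \xrightarrow{y} s_2$ in $\subsetproj{\GG}{\procC}$.

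It remains to identify $y$ with $x$. If $\procE = \procD$, then FIFO settles it: along both the CSM run of $w$ and the run $\run$, $\procC$'s receives from $\procD$ follow $\procD$'s sends in order, the receives already present in $w \wproj_{\Alphabet_\procC}$ consume exactly the leading sends, and the next pending message is $\val$, so $\val' = \val$ and $y = x$. If $\procE \neq \procD$, I would derive a contradiction with Receive Validity. Since $s \xrightarrow{x} s_1$ and $s \xrightarrow{y} s_2$ have distinct senders, Receive Validity gives $\snd{\procD}{\procC}{\val} \notin \semavail^{\procC}_{(G' \ldots)}$ for every $G' \in \transAnnoFunDest(s \xrightarrow{y} s_2)$. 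I would contradict this by taking for $G'$ a subterm that $\run$ visits just after the $\procC$-step $y$ with $G' \in s_2$ (so $G' \in \transAnnoFunDest(s \xrightarrow{y} s_2)$) and showing $\snd{\procD}{\procC}{\val} \in \semavail^{\procC}_{(G' \ldots)}$: along $\run$ the next receive of $\procC$ from $\procD$ is the pending message $\val$ (by FIFO and (i)), the corresponding $\msgFromTo{\procD}{\procC}{\val}$-step lies in the subtree of $G'$, and it can be commuted before all intervening actions of $\procC$ because $\procD$ is not transitively blocked on $\procC$ there --- any such blocking chain $\procC \to \cdots \to \procD$ in $G'$ would, by consistency of $\run$ with the views of all roles on the chain, force $\val$ to be sent only after $\procC$ acts past $w$, contradicting that $\val$ is already pending in $(\channel{\procD}{\procC})$ after $w$. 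I expect this last step --- proving $\snd{\procD}{\procC}{\val} \in \semavail^{\procC}_{(G' \ldots)}$ --- to be the main obstacle, since it requires reconciling the dynamic notion of a message pending for $\procC$ with the syntactic recursion defining $\semavail$ and its blocked-role bookkeeping; the rest is routine reasoning about prefixes, determinism, and FIFO order.
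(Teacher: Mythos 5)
Your proposal is correct and follows essentially the same route as the paper's proof: the trivial inclusion, the reduction to showing the receiver's possible-run set does not shrink, the channel-nonemptiness argument showing the sender is ``ahead'' so the next $\procC$-event of $\run$ exists, \cref{cor:no-mixed-choice} to force it to be a receive, FIFO for the same-sender case, and a Receive-Validity contradiction via a blocking-chain argument for the distinct-sender case. The step you flag as the main obstacle --- establishing $\snd{\procD}{\procC}{\val} \in \semavail^{\procC}_{(G'\ldots)}$ by ruling out a blocking chain from $\procC$ to $\procD$ --- is exactly where the paper spends its effort (an induction propagating consistency with $w$ along the chain back to $\procC$, contradicting maximality of the split point), and your sketch of it, using consistency of $\run \in I(w)$ with every role on the chain, matches that argument; the only cosmetic difference is that you argue directly about $I(w)$ where the paper isolates a pairwise claim about $\globcomplocal{\GG}{\procA}{w} \inters \globcomplocal{\GG}{\procB}{w}$.
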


To prove this equality, we further refine our characterization of intersection sets. 
In particular, we show that in the receive case, the intersection between the sender and receiver's possible run sets stays the same, i.e.\
\[
\globcomplocal{\GG}{\procA}{w} \inters \globcomplocal{\GG}{\procB}{w} = 										
\globcomplocal{\GG}{\procA}{wx} \inters \globcomplocal{\GG}{\procB}{wx}
\enspace.
\]
Note that it is not the case that the receiver only follows a subset of the sender's possible runs.
In other words, 
$\globcomplocal{\GG}{\procB}{w} \subseteq \globcomplocal{\GG}{\procA}{w}$ 
is not inductive. 
The equality above simply states that a receive action can only eliminate runs that have already been eliminated by its sender.
\cref{fig:venn-diagram-R-shrink} depicts this relation.

Given that the intersection set strictly shrinks, the burden of eliminating runs must then fall upon send events. 
We show that send transitions shrink the possible run set of the sender in a way that is \emph{prefix-preserving}. 
To make this more precise, we introduce the following definition on runs.

\begin{definition}[Unique splitting of a possible run]
	Let $\GG$ be a global type, $\procA$ a role, and $w \in \AlphAsync^*$ a word. Let $\run$ be a possible run in $\globcomplocal{\GG}{\procA}{w}$. 
	We define the longest prefix of $\run$ matching $w$:
	\[
	\alpha'
	\is
	\max
	\set{
		\run'
		\mid
		\run' \leq \run ~\wedge~
		\SyncToAsync(\trace(\run')) \wproj _{\Alphabet_\procA} \preforder w \wproj_{\Alphabet_\procA}
	}
	\enspace .
	\]
	If $\alpha' \neq \run$, we can split $\run$ into
$
	\run = \alpha \cdot G \xrightarrow{l} G' \cdot \beta
	$
where $\alpha' = \alpha \cdot G$, $G'$ denotes the state following $G$, and $\beta$ denotes the suffix of $\run$ following $\alpha \cdot G \cdot G'$.
We call $\alpha \cdot G \xrightarrow{l} G' \cdot \beta$ the unique splitting of $\run$ for $\procA$ matching $w$. 
	We omit the role $\procA$ when obvious from context.
	This splitting is always unique because the maximal prefix of any $\run \in \globcomplocal{\GG}{\procA}{w}$ matching $w$ is unique.
\end{definition}

When role $\procA$ fires a send transition $\snd{\procA}{\procB}{\val}$,
any run $\run = \alpha \cdot G \xrightarrow{l} G' \cdot \beta$
in $\procA$'s possible run with $\SyncToAsync(l) \wproj_{\Alphabet_\procA} \neq \snd{\procA}{\procB}{\val}$ is eliminated.
While the resulting possible run set could no longer contain runs that end with $G' \cdot \beta$, Send Validity guarantees that it must contain runs that begin with $\alpha \cdot G$. 
This is formalized by the following lemma. 

\begin{restatable}{lemma}{sndPrefixPreservation}
	\label{lm:sndPrefixPreservation}
	Let $\GG$ be a global type and $\CSMl{\subsetproj{\GG}{\procA}}$ be the subset projection.
	Let $wx$ be a trace of
	$\CSMl{\subsetproj{\GG}{\procA}}$ such that $x \in \Alphabet_!
	\cap \Alphabet_\procA$ for some $\procA \in \Procs$.
	Let $\rho$ be a run in $I(w)$, and $\alpha \cdot G \xrightarrow{l} G' \cdot \beta$ be the unique splitting of $\rho$ for $\procA$ with respect to $w$.
	Then, there exists a run $\rho'$ in $I(wx)$ such that $\alpha \cdot G \leq \rho'$.
\end{restatable}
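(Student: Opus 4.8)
\textbf{Proof plan for \cref{lm:sndPrefixPreservation}.}

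The plan is to track explicitly what happens to the possible-run sets when the trace $w$ is extended by the send event $x = \snd{\procA}{\procB}{\val}$. The key observation is that extending by $x$ only changes $\procA$'s own local view: for every role $\procC \neq \procA$, we have $x \wproj_{\Alphabet_\procC} = \emptystring$, so $\globcomplocal{\GG}{\procC}{wx} = \globcomplocal{\GG}{\procC}{w}$. Hence $I(wx) = I(w) \inters \globcomplocal{\GG}{\procA}{wx}$, and since $\run \in I(w)$ already, it suffices to exhibit a single run $\run'$ that (i) lies in $I(w)$, (ii) satisfies $wx \wproj_{\Alphabet_\procA} \preforder \SyncToAsync(\trace(\run')) \wproj_{\Alphabet_\procA}$, and (iii) has $\alpha \cdot G \leq \run'$. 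The natural candidate is $\run$ itself whenever its splitting already ``uses'' the send $x$, i.e.\ when $\SyncToAsync(l) \wproj_{\Alphabet_\procA} = x$; then conditions (i)--(iii) are immediate because $\SyncToAsync(\trace(\alpha \cdot G \xrightarrow{l} G')) \wproj_{\Alphabet_\procA} = (w \wproj_{\Alphabet_\procA}) \cdot x = wx \wproj_{\Alphabet_\procA}$ by the definition of the unique splitting. So the real work is the case $\SyncToAsync(l) \wproj_{\Alphabet_\procA} \neq x$.

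In that case I would argue as follows. Consider the state $s \in Q_\procA$ that $\subsetcons{\GG}{\procA}$ reaches on reading $w \wproj_{\Alphabet_\procA}$; since $wx$ is a trace of the subset projection, there is a transition $s \xrightarrow{x} s' \in \delta_\procA$. By \cref{lm:languages-of-roles} (projection by erasure plus subset construction), $s$ is exactly the set of states $G$ of $\semglobal(\GG)$ reachable from $q_{0,\GG}$ by $\emptystring$-closed runs whose $\Alphabet_\procA$-projection equals $w \wproj_{\Alphabet_\procA}$; in particular $G \in s$, because $\alpha \cdot G$ is precisely such a run (this is what the unique-splitting definition gives us: $\SyncToAsync(\trace(\alpha \cdot G)) \wproj_{\Alphabet_\procA} = w \wproj_{\Alphabet_\procA}$, using guardedness to absorb any trailing $\emptystring$-steps into the closure). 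Now \emph{Send Validity} applied to $s \xrightarrow{x} s'$ tells us $\transAnnoFunc(s \xrightarrow{x} s') = s$, so $G \in \transAnnoFunc(s \xrightarrow{x} s')$, meaning there is some $G'' \in s'$ with $G \xrightarrow{x}\mathrel{\vphantom{\to}^*} G'' \in \projerasuretrans$. Unfolding this $\projerasuretrans$-path back into $\semglobal(\GG)$: from $G$ there is a run $\gamma$ in $\semglobal(\GG)$ whose first non-$\emptystring$ synchronous transition $a$ satisfies $\SyncToAsync(a) \wproj_{\Alphabet_\procA} = x$ (concretely $a = \msgFromTo{\procA}{\procB}{\val}$, since $x$ is a send of $\procA$ and $\SyncToAsync$ maps that action to $\snd{\procA}{\procB}{\val}\cdot\rcv{\procA}{\procB}{\val}$), followed by some maximal continuation. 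Define $\run' \is \alpha \cdot G \cdot \gamma$ extended to a maximal run (a maximal run from any state exists because $\semglobal(\GG)$ has no deadlocks other than $0$ and recursion is guarded). Then $\alpha \cdot G \leq \run'$ by construction, giving (iii); and $\SyncToAsync(\trace(\run')) \wproj_{\Alphabet_\procA}$ begins with $(w \wproj_{\Alphabet_\procA}) \cdot x = wx \wproj_{\Alphabet_\procA}$, giving (ii). For (i), membership of $\run'$ in $I(w)$: it is in $\globcomplocal{\GG}{\procA}{w}$ because its $\Alphabet_\procA$-projection extends $w \wproj_{\Alphabet_\procA}$ (in fact $wx \wproj_{\Alphabet_\procA}$); and it is in $\globcomplocal{\GG}{\procC}{w}$ for every other $\procC$ because it shares the common prefix $\alpha \cdot G$ with $\run \in I(w)$ up to the split point, and past that point $w \wproj_{\Alphabet_\procC}$ has already been fully consumed by $\alpha$ (this is the content of the unique-splitting property for $\procC$ combined with the fact that $\run \in \globcomplocal{\GG}{\procC}{w}$ — beyond $\alpha$, no further $\Alphabet_\procC$-events of $w$ remain to be matched, so \emph{any} maximal extension of $\alpha$ is consistent with $\procC$'s view of $w$).

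\textbf{Main obstacle.} The subtle point — and the step I expect to require the most care — is (i), showing $\run' \in \globcomplocal{\GG}{\procC}{w}$ for $\procC \neq \procA$. The hazard is that $\run$'s splitting point for $\procC$ need not coincide with its splitting point for $\procA$: role $\procC$ might have events in $w$ that occur, along $\run$, \emph{after} the position $\alpha \cdot G$ where we diverge. If so, replacing the suffix $G \xrightarrow{l} G' \cdot \beta$ by $\gamma$ could drop exactly those $\procC$-events, pushing $\run'$ out of $\globcomplocal{\GG}{\procC}{w}$. Resolving this is where \emph{Corollary~\ref{cor:no-mixed-choice}} and the structure of $w$ as a genuine CSM trace of the subset projection come in: because $w$ is a trace of $\CSMl{\subsetproj{\GG}{\procA}}$ and $x$ is enabled after $w$, the send $x$ is the \emph{next} event $\procA$ contributes, and one must show that in $\semglobal(\GG)$ this forces the divergence point to be past \emph{all} of the $\Alphabet_\procC$-events recorded in $w$, for every $\procC$ — equivalently, that along any global run consistent with $w$, role $\procA$'s action $\msgFromTo{\procA}{\procB}{\val}$ is not happened-before any $\procC$-event that $w$ has already witnessed. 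This is really a statement about the happened-before structure of $\semglobal(\GG)$ versus the asynchronous trace $w$, and I would prove it by appealing to the indistinguishability closure in the definition of $\lang(\GG)$ together with \cref{lm:languages-of-roles}: any two global runs in $\bigcap_{\procC} \globcomplocal{\GG}{\procC}{w}$ agree on a common prefix whose $\Alphabet_\procC$-projection covers $w \wproj_{\Alphabet_\procC}$ for every $\procC$ simultaneously, and $\alpha \cdot G$ can be chosen to be (a prefix of) that common prefix. Once this ``simultaneous splitting'' fact is in place, the replacement argument goes through cleanly and the rest is bookkeeping.
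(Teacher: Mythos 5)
Your setup is right and matches the paper up to the crux: the reduction to exhibiting a single run satisfying your (i)--(iii), the immediate discharge of the case $\SyncToAsync(l)\wproj_{\Alphabet_\procA}=x$, and the use of Send Validity (via $G\in\vec{s}_{w,\procA}$) to obtain a transition $G\xrightarrow{\msgFromTo{\procA}{\procB}{\val}}\bar G'$ in $\semglobal(\GG)$ are all exactly the paper's first moves. You also correctly locate the main obstacle in step (i). But your proposed resolution of that obstacle does not work. The ``simultaneous splitting'' fact you appeal to is false: two runs in $I(w)$ need not agree on a prefix covering $w\wproj_{\Alphabet_\procC}$ for every $\procC$. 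For instance, in $\msgFromTo{\procA}{\procB}{a}.\msgFromTo{\procD}{\procE}{m}.\dots\;+\;\msgFromTo{\procA}{\procB}{b}.\msgFromTo{\procD}{\procE}{m}.\dots$ with $w=\snd{\procD}{\procE}{m}$ (a legitimate CSM trace, since sends are always enabled), both branches lie in $I(w)$ but their common prefix is the initial state, whose $\Alphabet_\procD$-projection is empty. Moreover, the lemma fixes $\alpha\cdot G$ to be the unique splitting for $\procA$; you cannot re-choose it, and in general roles other than $\procA$ are genuinely ``ahead'' of $\procA$'s split point, so an arbitrary maximal continuation of $\alpha\cdot G\xrightarrow{\msgFromTo{\procA}{\procB}{\val}}\bar G'$ really can fall out of $\globcomplocal{\GG}{\procC}{w}$.

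The paper accepts this and does substantially more work, which your plan omits entirely. The whole lemma is proved by induction on $|w|$. In the hard case one forms a \emph{candidate} run $\run_c=\alpha\cdot G\xrightarrow{\msgFromTo{\procA}{\procB}{\val}}\bar G'\cdot\bar\beta$ with an arbitrary maximal suffix and, if some roles disagree with it, runs an iterative repair: take the role $\bar\procC$ with the earliest disagreement position $i_{\bar\procC}$ in $\run_c$, prove that the disagreeing transition must be a \emph{send} of $\bar\procC$ (this is itself a nested induction over prefixes of $w$ that invokes \cref{lm:rcvIntersectionSetEquality} for receive extensions and the outer induction hypothesis for send extensions), apply Send Validity at $G_{i_{\bar\procC}}$ to redirect the suffix there, and repeat; termination follows because the length of the agreeing portion strictly increases and is bounded by $|w|$. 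None of this is ``bookkeeping'': the send/receive dichotomy at the disagreement point and the termination measure are the substance of the proof, and they are exactly what is needed to replace your false simultaneous-splitting claim.
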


This concludes our discussion of the send and receive cases in the inductive step to show the non-emptiness of the intersection of all roles' possible run sets. 
The full proofs and additional definitions can be found in \appendixRef{app:soundness}.

 \section{Completeness}
\label{sec:completeness}

In this section, we prove completeness of our approach.
While soundness states that if a global type's subset projection is defined, it then implements the global type, completeness considers the reverse direction. 

\begin{restatable}[Completeness]{theorem}{completenessThm}
\label{thm:completeness}
If $\GG$ is implementable, then
$\CSMl{ \subsetproj{\GG}{\procA} }$ is defined.
\end{restatable}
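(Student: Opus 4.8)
The plan is to prove the contrapositive: if the subset projection $\CSMl{\subsetproj{\GG}{\procA}}$ is \emph{not} defined, then $\GG$ is not implementable. By \cref{def:conditions-powerset-proj}, undefinedness means that for some role $\procA$, the subset construction $\subsetcons{\GG}{\procA}$ violates Send Validity or Receive Validity. I would split into these two cases and, in each, extract from the witnessing transitions an explicit trace $w$ that any correct implementation would be forced to admit, yet which is not in $\lang(\GG)$ — or which necessarily drives the CSM into a deadlock. The key leverage is \cref{lm:constructionPreservesBehaviors} together with \cref{lm:languages-of-roles}: any implementing CSM $\CSM{B}$ must satisfy $\lang(\CSM{B}) = \lang(\GG)$, hence $\lang(\CSM{B}) \wproj_{\Alphabet_\procA} = \lang(\GG) \wproj_{\Alphabet_\procA} = \lang(\subsetcons{\GG}{\procA})$; so the \emph{language} of each local component of any implementation is pinned down exactly, even though its state machine need not be the subset construction. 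This lets me reason about what local components are \emph{forced} to do after a given local history.

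For the \textbf{Send Validity violation}, suppose $s \xrightarrow{x} s' \in \delta_\procA$ with $x = \snd{\procA}{\procB}{\val} \in \Alphabet_{\procA,!}$ but $\transAnnoFunc(s \xrightarrow{x} s') \subsetneq s$. Pick $G^\dagger \in s \setminus \transAnnoFunc(s \xrightarrow{x} s')$. Then there is a trace $w$ of $\semglobalsync(\GG)$ reaching $G^\dagger$ such that $\SyncToAsync(w)$ drives $\subsetcons{\GG}{\procA}$ into $s$ and drives every other role's component into some state, and from $G^\dagger$ the global type does \emph{not} offer the action matching $x$ for $\procA$ next — i.e., $\procA$'s very next move along the $G^\dagger$-run is some other send, or $\procA$ is passive. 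Since $\procA$'s local component (forced to realize $\lang(\subsetcons{\GG}{\procA})$) can and, after the local history $\SyncToAsync(w)\wproj_{\Alphabet_\procA}$, \emph{must} be in a state offering $x$, it can fire $x$ immediately. But the global run through $G^\dagger$ continues with a send by some other role (or with $\procA$ doing something incompatible), and these two sends are independent; I would show the resulting trace is either not $\interswap$-equivalent to anything in $\lang(\GG)$ (violating protocol fidelity) or leads to a configuration with no maximal continuation (violating deadlock freedom), mirroring the $\GG_s$ analysis in \S\ref{sec:checking-conditions}. The subtlety is handling the case where $s$ has \emph{only} send transitions but they are the "wrong" ones — here I use \cref{cor:no-mixed-choice}-style reasoning and the fact that every state of $\subsetcons{\GG}{\procA}$ lies on an accepting path.

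For the \textbf{Receive Validity violation}, suppose $s \xrightarrow{\rcv{\procB_1}{\procA}{\val_1}} s_1$ and $s \xrightarrow{\rcv{\procB_2}{\procA}{\val_2}} s_2$ in $\delta_\procA$ with $\procB_1 \neq \procB_2$, and there is $G_2 \in \transAnnoFunDest(s \xrightarrow{\rcv{\procB_2}{\procA}{\val_2}} s_2)$ with $\snd{\procB_1}{\procA}{\val_1} \in \semavail^{\procA}_{(G_2 \ldots)}$. The plan is to build a global run $\run$ that reaches $s$ at $\procA$, then continues along a $G_2$-witnessing run in which $\procB_1$ sends $\val_1$ to $\procA$ before $\procA$ takes any step — such a run exists precisely by the definition of $\semavail^{\procA}_{(\cdot)}$, which I would unfold to produce the concrete reordering. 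Then in any implementation, after the corresponding local history, $\procA$'s component (realizing $\lang(\subsetcons{\GG}{\procA})$, which offers both receives at $s$) can consume $\val_1$ from $\procB_1$, committing to the $s_1$-branch, while the rest of the system is following the $G_2$/$s_2$-branch; since $\procB_1$'s $\val_1$-message is genuinely in the channel along that run, $\procA$ can legally receive it, producing a trace whose projections are all consistent with $\lang(\GG)$ on each role individually but which as a whole is not $\interswap$-equivalent to any word of $\lang(\GG)$ — this is exactly the $\GG_r$ phenomenon. I expect the \textbf{main obstacle} to be the Receive Validity case: I must show rigorously that the trace I construct is genuinely \emph{realizable} in every candidate CSM (not just in the subset construction) and genuinely \emph{outside} $\lang(\GG)$ up to $\interswap$. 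The first part hinges on the language characterization $\lang(\CSM{B})\wproj_{\Alphabet_\procC} = \lang(\subsetcons{\GG}{\procC})$ for \emph{every} role $\procC$ simultaneously, plus a "stitching" argument that independent local behaviors compose into a global CSM run (essentially because channels are unbounded FIFO and the events I schedule respect send-before-receive); the second part requires a careful analysis of the happened-before relation to certify that no sequence of independent-swap rewrites turns my trace into a $\GG$-trace — for which I would identify the specific pair of causally-unordered events ($\snd{\procB_1}{\procA}{\val_1}$ versus whatever $\procB_1$ does in the $G_2$-branch) that witnesses the discrepancy, and argue this obstruction is swap-invariant.
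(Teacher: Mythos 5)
Your plan is correct and follows essentially the same route as the paper's proof: the same case split on the two validity conditions, the same witness traces (a run of $\semglobalsync(\GG)$ reaching a subterm in $s \setminus \transAnnoFunc(s\xrightarrow{x}s')$ followed by the send $x$; a run reaching $G_2$ reordered via the definition of $\semavail^{\procA}_{(G_2\ldots)}$ so that $\val_1$ already sits in the channel, followed by the receive of $\val_1$), and the same mechanism for forcing an arbitrary implementing CSM $\CSM{B}$ to admit these traces, namely that $\lang(\GG)\wproj_{\Alphabet_\procA}=\lang(\subsetcons{\GG}{\procA})$ must be contained in the prefixes of what $B_\procA$ offers, plus the fact that sends are always enabled. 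The one methodological difference is how you certify that the witness lies outside $\pref(\lang(\GG))$: you propose a direct happened-before/$\interswap$-obstruction analysis, whereas the paper reuses the run-intersection sets $I(w)$ from the soundness proof and reduces both cases to showing $I(v_0)=\emptyset$, using that $I$ is $\interswap$-invariant and monotonically decreasing, so an empty intersection set already rules out prefix membership without reasoning about individual reorderings. That device isolates the load-bearing step your plan leaves implicit, namely that \emph{every} global run consistent with all roles' views of the constructed prefix must begin with $\alpha\cdot G^\dagger$ (resp.\ with $\alpha\cdot G_1\xrightarrow{\msgFromTo{\procB_2}{\procA}{\val_2}}G_2$), proved by noting that the first divergence of two runs is a sender's choice already fixed by that sender's local view. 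Two small cautions: your claim that the local language of each component of an arbitrary implementation is ``pinned down exactly'' is an overclaim (only the inclusion $\lang(\GG)\wproj_{\Alphabet_\procA}\subseteq\pref(\lang(B_\procA))$ follows from protocol fidelity, and that is the direction you actually use); and in the Receive case the fact you need from unfolding $\semavail$ is that the blocked-set discipline forces $\procB_1$ to take no step between $G_2$ and its send of $\val_1$, so $\procB_1$'s local state after the short and the long history coincide and the early send can indeed be scheduled.
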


We sketch the proof and refer to \appendixRef{app:completeness} for the full proof.

From the assumption that $\GG$ is implementable, we know there exists a witness CSM that implements $\GG$.
While the soundness proof picks our subset projection as the existential witness for showing implementability -- thereby allowing us to reason directly about a particular implementation -- completeness only guarantees the existence of some witness CSM.
We cannot assume without loss of generality that this witness CSM is our subset construction; however, we must use the fact that it implements $\GG$ to show that Send and Receive Validity hold on our subset construction.

We proceed via proof by contradiction: we assume the negation of Send and Receive Validity for the subset construction, and show a contradiction to the fact that this witness CSM implements $\GG$.
In particular, we contradict protocol fidelity (\cref{def:implementability}(\ref{def:implementability-protocol-fidelity})), stating that the witness CSM generates precisely the language $\lang(\GG)$.
To do so, we exploit a simulation argument: we first show that the negation of Send and Receive Validity forces the subset construction to recognize a trace that is not a prefix of any word in $\lang(\GG)$.
Then, we show that this trace must also be recognized by the witness CSM, under the assumption that the witness CSM implements $\GG$.

To highlight the constructive nature of our proof, we convert our proof obligation to a witness construction obligation. 
To contradict protocol fidelity, it suffices to construct a witness trace $v_0$ satisfying two~properties, where $\CSM{B}$ is our witness CSM:
\begin{enumerate}[(a)]
	\item \label{cond:compl-proof-1}
	$v_0$ is a trace of $\CSM{B}$, and
	\item \label{cond:compl-proof-2}
	the run intersection set of $v_0$
	is empty:
	$I(v_0)
        =
	\Inters_{\procA \in \Procs} \globcomplocal{\GG}{\procA}{v_0}
        =
	\emptyset$.
\end{enumerate}

We first establish the sufficiency of conditions \ref{cond:compl-proof-1} and \ref{cond:compl-proof-2}.
Because $\CSM{B}$ is deadlock-free by assumption, every prefix extends to a maximal trace.
Thus, to prove the inequality of the two languages $\lang(\CSM{B})$ and $\lang(\GG)$, it suffices to prove the inequality of their respective prefix sets.  In turn, it suffices to show the existence of a prefix of a word in one language that is not a prefix of any word in the other.
We choose to construct a prefix in the CSM language that is not a prefix in $\lang(\GG)$.
We again leverage the definition of intersection sets~(\cref{def:possible-run-sets}) to weaken the property of language non-membership to the property of having an empty intersection set as follows.
By the semantics of $\lang(\GG)$, for any $w \in \lang(\GG)$, there exists $w' \in \SyncToAsync(\lang(\semglobal(\GG)))$ with $w \interswap w'$.
For any $w' \in \SyncToAsync(\lang(\semglobal(\GG)))$, it trivially holds that $w'$ has a non-empty intersection set.
Because intersection sets are invariant under the indistinguishability relation $\interswap$, $w$ must also have a non-empty intersection set.
Since intersection sets are monotonically decreasing, if the intersection set of $w$ is non-empty, then for any $v \leq w$, the intersection set of $v$ is also non-empty.
Modus tollens of the chain of reasoning above tells us that in order to show a word is not a prefix in $\lang(\GG)$, it suffices to show that its intersection set is empty. 

Having established the sufficiency of properties \ref{cond:compl-proof-1} and \ref{cond:compl-proof-2} for our witness construction, we present the steps to construct $v_0$ from the negation of Send and Receive Validity respectively. We start by constructing a trace in $\CSM{\subsetcons{\GG}{\procA}}$ that satisfies \ref{cond:compl-proof-2}, and then show that $\CSM{B}$ also recognizes the trace, thereby satisfying \ref{cond:compl-proof-1}.
In both cases, let $\procA$ be the role and $s$ be the state for which the respective validity condition is violated.

\myparagraph{Send Validity (\cref{cond:send-state-validity-transition-origins}).}  
Let $s \xrightarrow{\snd{\procA}{\procB}{\val}} s' \in \delta_{\procA}$ be a transition such that
\[
\transAnnoFunc(s  \xrightarrow{\snd{\procA}{\procB}{\val}} s') \neq s \enspace.
\]
First, we find a trace $u$ of $\CSM{\subsetcons{\GG}{\procA}}$ that satisfies:
\begin{inparaenum}[(1)]
\item \label{cond:compl-send-validity-1}
role $\procA$ is in state~$s$ in the CSM configuration reached via $u$, and
\item \label{cond:compl-send-validity-2}
the run of $\semglobal(\GG)$ on $u$ visits a state in $s \setminus \transAnnoFunc(s  \xrightarrow{\snd{\procA}{\procB}{\val}} s')$.
\end{inparaenum}
We obtain such a witness $u$ from the $\SyncToAsync(\trace(-))$ of a run prefix of $\semglobal(\GG)$ that ends in some state in \mbox{$s \setminus \transAnnoFunc(s  \xrightarrow{\snd{\procA}{\procB}{\val}} s')$}.
Any prefix thus obtained satisfies (\ref{cond:compl-send-validity-1}) by definition of $\subsetcons{\GG}{\procA}$, and satisfies (\ref{cond:compl-send-validity-2}) by construction.
Due to the fact that send transitions are always enabled in a CSM, $u \cdot \snd{\procA}{\procB}{\val}$ must also be a trace of $\CSMl{\subsetcons{\GG}{\procA}}$, thus satisfying property \ref{cond:compl-proof-1} by a simulation argument.
We then argue that $u \cdot \snd{\procA}{\procB}{\val}$ satisfies property \ref{cond:compl-proof-2}, stating that $I(u \cdot \snd{\procA}{\procB}{\val})$ is empty: the negation of Send Validity gives that there exist no run extensions from our candidate state in $s \setminus \transAnnoFunc(s  \xrightarrow{\snd{\procA}{\procB}{\val}} s')$ with the immediate next action $\procA \xrightarrow{} \procB: \val$, and therefore there exists no maximal run in $\semglobal(\GG)$ consistent with $u \cdot \snd{\procA}{\procB}{\val}$.

\myparagraph{Receive Validity (\cref{cond:rcv-state-validity}).}
Let
$s \xrightarrow{\rcv{\procB_1}{\procA}{\val_1}} s_1$ and 
$s \xrightarrow{\rcv{\procB_2}{\procA}{\val_2}} s_2 \in \delta_\procA$ be two transitions,
and let $G_2 \in \transAnnoFunDest(s \xrightarrow{\rcv{\procB_2}{\procA}{\val_2}} s_2)$ 
such that 
\[
\procB_1 \neq \procB_2 
\text{ and }
\snd{\procB_1}{\procA}{\val_1} \in \semavail^{\procA}_{(G_2 \ldots)} \enspace.
\]
Constructing the witness $v_0$ pivots on finding a trace $u$ of $\CSMl{\subsetcons{\GG}{\procA}}$ 
such that both
$u \cdot \rcv{\procB_1}{\procA}{\val_1}$
and 
$u \cdot \rcv{\procB_2}{\procA}{\val_2}$
are traces of
$\CSMl{\subsetcons{\GG}{\procA}}$. 
Equivalently, we show there exists a reachable configuration of $\CSMl{\subsetcons{\GG}{\procA}}$ in which $\procA$ can receive either message from distinct senders $\procB_1$ and $\procB_2$.
Formally, the local state of $\procA$ has two outgoing states labeled with $\rcv{\procB_1}{\procA}{\val_1}$ and $\rcv{\procB_2}{\procA}{\val_2}$, and the channels $\channel{\procB_1}{\procA}$ and $\channel{\procB_2}{\procA}$ have $m_1$ and $m_2$ at their respective heads.
We construct such a $u$ by considering a run in $\semglobal(\GG)$ that contains two transitions labeled with $\procB_1 \xrightarrow{} \procA: m_1$ and $\procB_2 \xrightarrow{} \procA: m_2$. Such a run must exist due to the negation of Receive Validity. 
We start with the split trace
of this run, and argue that, from the definition of $\semavail(\hole)$ and the indistinguishability relation $\interswap$, we can perform iterative reorderings using $\interswap$ to bubble the send action $\snd{\procB_1}{\procA}{\val_1}$ to the position before the receive action $\rcv{\procB_2}{\procA}{\val_2}$.
Then, \ref{cond:compl-proof-1}
for $u \cdot \rcv{\procB_1}{\procA}{\val_1}$ holds by a simulation argument.
We then separately show that \ref{cond:compl-proof-2} holds for $\rcv{\procB_1}{\procA}{\val_1}$ using similar reasoning as the send case to complete the proof that $u \cdot \rcv{\procB_1}{\procA}{\val_1}$ suffices as a witness for $v_0$. 

It is worth noting that the construction of the witness prefix $v_0$ in the proof immediately yields an algorithm for computing counterexample traces \mbox{to~implementability}.

\begin{remark}[Mixed Choice is Not Needed to Implement Global Types]
\cref{thm:completeness} basically shows the necessity of Send Validity for implementability.
\cref{cor:no-mixed-choice} shows that Send Validity precludes states with both send and receive outgoing transitions. 
Together, this implies that an implementable global type can always be implemented without mixed choice.
Note that the syntactic restrictions on global types do not inherently prevent mixed choice states from arising in a role's subset construction, as evidenced by $\procC$ in the following type:
{ \small
$
\msgFromTo{\procA}{\procB}{l}.\,\msgFromTo{\procB}{\procC}{m}.\,0 \; + \;
\msgFromTo{\procA}{\procB}{r}.\,\msgFromTo{\procC}{\procB}{m}.\,0
.$
}
Our completeness result thus implies that this type is not implementable.
Most MST frameworks~\cite{DBLP:conf/sfm/CoppoDPY15,DBLP:conf/popl/HondaYC08,DBLP:conf/concur/MajumdarMSZ21} implicitly force \emph{no mixed choice} through syntactic restrictions on local types.
We are the first to prove that mixed choice states are indeed not necessary for completeness. This is interesting because mixed choice is known to be crucial for the expressive power of the synchronous $\pi$-calculus compared to its asynchronous variant~\cite{DBLP:journals/mscs/Palamidessi03}. 

\end{remark}

 \section{Complexity}
\label{sec:complexity}

In this section, we establish that checking implementability for global types is in PSPACE and show that generating an implementation can require exponential~time.

\begin{theorem}
  \label{thm:implementability-pspace-complete}
  The MST implementability problem is in \mbox{PSPACE}. 
\end{theorem}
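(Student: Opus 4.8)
The plan is to combine \cref{thm:soundness} and \cref{thm:completeness} into the equivalence
``$\GG$ is implementable $\iff$ for every role $\procA$ the subset construction $\subsetcons{\GG}{\procA}$ satisfies Send Validity (\cref{cond:send-state-validity-transition-origins}) and Receive Validity (\cref{cond:rcv-state-validity})'' (by \cref{def:conditions-powerset-proj}, this right-hand side is exactly ``$\CSMl{\subsetproj{\GG}{\procA}}$ is defined''), and then to show that the \emph{negation} of the right-hand side is decidable in nondeterministic polynomial space. Since PSPACE is closed under complementation and, by Savitch's theorem, NPSPACE $=$ PSPACE, this yields the theorem.

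The central difficulty is that $\subsetcons{\GG}{\procA}$ may have up to $2^{|Q_\GG|}$ states, so it cannot be built explicitly; instead it is explored symbolically. Each of its states is a subset $s \subseteq Q_\GG$ of the polynomially many subterms of $\GG$, hence representable in polynomial space, and by \cref{def:subset-construction} the initial state $s_{0,\procA}$ and each successor $\delta(s,a)$ are computable in polynomial time directly from $\semglobal(\GG)$ via a single $\emptystring$-closure/reachability computation over $\projerasuretrans$; the same holds for $\transAnnoFunc$ and $\transAnnoFunDest$. Therefore a nondeterministic machine can guess a role $\procA$, maintain only the current state $s$ together with a binary step counter of $|Q_\GG|$ bits, repeatedly replace $s$ by a guessed successor $\delta(s,a)$, and at a nondeterministically chosen point stop at the state $s$ that is to witness a violation; this keeps the working space polynomial even though the explored path may be exponentially long.

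At the witness state $s$ the machine checks the two local violations. For Send Validity it guesses a label $x \in \Alphabet_{\procA,!}$ enabled at $s$ (i.e.\ $\delta(s,x) \neq \emptyset$) together with a subterm $G \in s$ having no outgoing $x$-transition in $\projerasuretrans$; writing $s' = \delta(s,x)$, this witnesses $\transAnnoFunc(s \xrightarrow{x} s') \neq s$ and is a polynomial-time test. For Receive Validity it guesses two receive labels $\rcv{\procB_1}{\procA}{\val_1}$, $\rcv{\procB_2}{\procA}{\val_2}$ enabled at $s$ with $\procB_1 \neq \procB_2$, a subterm $G_2 \in \transAnnoFunDest(s \xrightarrow{\rcv{\procB_2}{\procA}{\val_2}} s_2)$ where $s_2 = \delta(s, \rcv{\procB_2}{\procA}{\val_2})$, and verifies $\snd{\procB_1}{\procA}{\val_1} \in \semavail^{\procA}_{(G_2 \ldots)}$. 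This last test is in polynomial space: $\semavail^{\procA}_{(G_2 \ldots)}$ contains at most $|\Procs|^2\,|\MsgVals|$ send events, and its recursive definition unfolds each recursion variable at most once (recorded in the parameter set $T$), so the recursion has polynomial depth and each activation record stores only a polynomial-size set of send events.

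Putting the pieces together, the violation test runs in NPSPACE; complementing it places implementability in PSPACE. The step that requires the most care is the symbolic-exploration argument of the second and third paragraphs — confirming that every ingredient of the two validity conditions ($\delta$, $\transAnnoFunc$, $\transAnnoFunDest$, and in particular $\semavail^{\procA}_{(\cdot\,\ldots)}$ with its recursion-variable unfolding) can be evaluated within polynomial space given only a succinct description of the current subset-construction state, so that reachability in the exponentially larger automaton is still certifiable by a polynomially-space-bounded nondeterministic walk.
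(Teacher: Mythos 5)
Your proof is correct and follows essentially the same route as the paper: both reduce implementability (via \cref{thm:soundness} and \cref{thm:completeness}) to checking Send and Receive Validity over the reachable states of $\subsetcons{\GG}{\procA}$, each of which is a subset of $Q_\GG$ and hence fits in polynomial space, with the per-state validity tests being cheap. The only divergence is the reachability subroutine: the paper enumerates candidate subsets and reduces membership in $Q_\procA$ to NFA intersection non-emptiness (known to be in PSPACE), whereas you certify reachability by a nondeterministic forward walk on the deterministic subset automaton and conclude via Savitch's theorem and closure of PSPACE under complement --- an equally standard and, if anything, more self-contained argument.
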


\begin{proof}
The decision procedure enumerates for each role $\procA$ the subsets of $\projerasure{\GG}{\procA}$.
This can be done in polynomial space and exponential time.
For each $\procA$ and $s \subseteq Q_{\GG}$, it then
\begin{inparaenum}[(i)]
\item checks membership of~$s$ in $Q_{\procA}$ of $\subsetcons{\GG}{\procA}$, and \label{check-membership}
\item if $s \in Q_\procA$, checks whether all outgoing transitions of~$s$ in $\subsetcons{\GG}{\procA}$ satisfy Send and Receive Validity. \label{check-valid}
\end{inparaenum}
Check~(\ref{check-membership}) can be reduced to the intersection non-emptiness problem for nondeterministic finite state machines, which is in \mbox{PSPACE}~\cite{NODBLPthesiswehar}.
It is easy to see that check~(\ref{check-valid}) can be done in polynomial time.
In particular, the computation of available messages for Receive Validity only requires a single unfolding of every loop in $\GG$.

Note that the synthesis problem has the same complexity.
The subset construction to determinize $\projerasure{\GG}{\procA}$ can be done using a PSPACE transducer.
While the output can be of exponential size, it is written on an extra tape that is not counted towards memory usage.
However, this means we need to perform the validity checks as described above instead of using the computed deterministic state machines.

\end{proof}

This result and the fact that local languages are preserved by the subset projection (\cref{lm:languages-of-roles}) leads to the following observation.

\begin{corollary}
Let $\GG$ be an implementable global type.
Then, the subset projection $\CSMl{\subsetproj{\GG}{\procA}}$ is a local language preserving implementation for $\GG$, i.e., $\lang(\subsetproj{\GG}{\procA}) = \lang(\GG)\wproj_{\Alphabet_\procA}$ for every $\procA$, and can be computed in PSPACE.
\end{corollary}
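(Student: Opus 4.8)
The plan is to obtain the corollary by assembling three results already established in the paper; no new technical machinery is required. First, from the hypothesis that $\GG$ is implementable, \cref{thm:completeness} gives that $\CSMl{\subsetproj{\GG}{\procA}}$ is defined, i.e.\ for every role $\procA$ the subset construction $\subsetcons{\GG}{\procA}$ satisfies Send Validity and Receive Validity. By \cref{def:conditions-powerset-proj} this means $\subsetproj{\GG}{\procA} = \subsetcons{\GG}{\procA}$ for every $\procA$. Then \cref{thm:soundness} applies verbatim and yields that $\CSMl{\subsetproj{\GG}{\procA}}$ implements $\GG$, which discharges the ``is an implementation'' part of the statement.

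Second, for the local-language-preservation equality, I would substitute $\subsetproj{\GG}{\procA} = \subsetcons{\GG}{\procA}$ into the last conclusion of \cref{lm:languages-of-roles}, which states $\lang(\GG)\wproj_{\Alphabet_\procC} = \lang(\subsetcons{\GG}{\procC})$. This immediately gives $\lang(\subsetproj{\GG}{\procA}) = \lang(\GG)\wproj_{\Alphabet_\procA}$ for every $\procA$, so the subset projection is ``local language preserving''.

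Third, for the complexity claim, I would invoke the synthesis procedure described inside the proof of \cref{thm:implementability-pspace-complete}: the deterministic state machine $\subsetcons{\GG}{\procA}$ can be produced by a transducer running in polynomial space, its (possibly exponential-size) output being written on a separate tape that is not charged against the space bound. Since $\GG$ is already known to be implementable, the validity checks are guaranteed to pass, so the transducer's output is exactly $\subsetproj{\GG}{\procA}$ and no additional checking step is needed.

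The statement is thus essentially a packaging of \cref{thm:soundness,thm:completeness,lm:languages-of-roles,thm:implementability-pspace-complete}. There is no real obstacle; the only point that merits an explicit remark is that ``computed in PSPACE'' here must be read, exactly as in \cref{thm:implementability-pspace-complete}, as computable by a PSPACE transducer whose output tape is not counted toward memory usage, since the output itself can be of exponential size.
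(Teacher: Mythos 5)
Your proposal is correct and matches the paper's intent exactly: the paper states this corollary without a separate proof, presenting it precisely as the combination of \cref{thm:completeness}, \cref{thm:soundness}, \cref{lm:languages-of-roles}, and the PSPACE-transducer observation made inside the proof of \cref{thm:implementability-pspace-complete}. Your explicit remark about the exponential-size output being written to an uncounted output tape is the same caveat the paper itself makes.
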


While this establishes a PSPACE upper bound, we also show that generating an implementation may require exponential time, even for global types with directed choice.

\begin{lemma}
	\label{thm:implementability-exptime-necessary}
	Constructing an implementation for a global type with directed choice may require exponential time.
\end{lemma}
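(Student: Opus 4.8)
The plan is to exhibit a family of global types with directed choice whose subset projection, for some role, has exponentially many states in the size of the global type. Since the subset projection (\cref{def:conditions-powerset-proj}) produces a deterministic state machine, and since an algorithm that writes out this machine must at least write out its state space, constructing the implementation necessarily takes exponential time. The canonical candidate is a global type that forces one role, say $\procC$, to track a bounded counter or a residue modulo many distinct primes, so that the reachable subsets of $Q_{\GG}$ reached along $\projerasure{\GG}{\procC}$ are exponentially many.

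Concretely, I would take the standard trick behind exponential determinization: have $\procA$ send $\procC$ (via a forwarding role or directly) a sequence of messages over a small alphabet, say $\set{\textcolor{orange}{o}}$, of some length $n$, and then reveal one bit of information that, combined with $\procC$'s count so far, determines $\procC$'s obligatory continuation. For instance, generalize the odd-even protocol of \cref{ex:odd-even}: instead of parity (mod $2$), let $\procA$ first pick one of $k$ branches by sending $m_1,\dots,m_k$ to $\procB$, with the $j$-th branch requiring that the number of $\textcolor{orange}{o}$-messages $\procB$ forwards to $\procC$ before termination be congruent to $j$ modulo $p_1\cdots p_k$ for distinct primes $p_i$ — or, more simply, use $k$ independent loops whose iteration counts $\procC$ must disambiguate. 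Then $\procC$ does not learn $\procA$'s choice directly but must deduce it from the count, and its subset construction must keep, in a single state, the set of all syntactic subterms of $\GG$ consistent with every count value seen so far; there are exponentially many such sets. I would verify (a) that the type uses only directed choice — each $\sum$ has a single fixed receiver — and satisfies the syntactic well-formedness constraints (guarded recursion, distinct branches), (b) that the type is in fact implementable, so that the claim is about generating a genuine implementation rather than about a degenerate non-implementable case, and (c) that $|Q_{\procC}|$ in $\subsetcons{\GG}{\procC}$ is $2^{\Omega(|\GG|)}$, e.g. by showing that distinct residue vectors yield distinct reachable subsets.

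The main obstacle is point (b) combined with (c): I need a family where $\procC$ \emph{genuinely} needs exponential state — i.e., the minimal deterministic implementation of $\lang(\GG)\wproj_{\Alphabet_{\procC}}$ is exponential — rather than one where the subset construction is merely wasteful but collapses under minimization. Using the Chinese-remainder / distinct-primes encoding handles this, since the language $\lang(\GG)\wproj_{\Alphabet_{\procC}}$ itself requires tracking the count modulo $\prod_i p_i$, and $\sum_i p_i = O(|\GG|)$ while $\prod_i p_i = 2^{\Omega(|\GG|)}$ by the prime number theorem; any DFA for this projected language has at least $\prod_i p_i$ states. One must also double-check implementability: $\procC$'s deduced continuation must be consistent with the views of $\procA$ and $\procB$, which is arranged exactly as in the odd-even example by having $\procA$ close each loop with a distinguished message $\textcolor{blue}{b}$ that $\procB$ forwards, so that Send Validity and Receive Validity hold. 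The remaining steps — checking the syntactic side conditions and formally counting reachable subsets — are routine. I would present the smallest clean instance in the body and defer the full verification of implementability and the state-count lower bound to the appendix.
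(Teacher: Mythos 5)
Your overall strategy---exhibit a family of directed global types for which some role's implementation must be exponentially large, so that merely writing it down takes exponential time---is exactly the paper's. The gap is in your concrete CRT instantiation, and it is twofold. First, implementability: if branch $i$ constrains the number of forwarded $\msgO$'s to a residue class modulo $p_i$ and demands a branch-dependent final action from $\procC$, the type is not implementable, because residue classes modulo \emph{distinct} moduli are never pairwise disjoint (the CRT guarantees a count consistent with any two branches). So $\procC$ reaches a subset state containing subterms of two branches with different outgoing sends---precisely the Send Validity failure of $\GG_s$ in \cref{ex:implementability}. The odd-even example works only because the two classes mod $2$ partition $\Nat$; that disjointness is exactly what you lose by moving to distinct primes, and it is what lets $\procC$ ``deduce'' the branch. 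Second, if you repair this by making $\procC$'s final action uniform across branches, your Myhill--Nerode bound of $\prod_i p_i$ applies to the minimal DFA for $\lang(\GG)\wproj_{\Alphabet_\procC}$, i.e.\ to a \emph{local-language-preserving} implementation---but the lemma quantifies over all implementations. A three-state $\procC$ that loops on $\msgO$, reads the terminator, and sends the fixed reply composes with the other roles into a deadlock-free CSM whose global language is exactly $\lang(\GG)$, since $\procA$ and $\procB$ alone determine which counts ever occur. So that version yields no lower bound at all.

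The paper's family $G_k$ avoids both problems simultaneously: $\procA$ streams the letters $a,b$ to $\procB$ directly, so $\procB$ always has the information it needs (implementability is immediate), and $\procB$'s \emph{obligatory output}---echoing the $k$th-last letter---takes different values on all $2^k$ window contents, so any machine satisfying protocol fidelity must distinguish $2^k$ states. That is the ingredient your sketch is missing: the role's required output, not merely its input language, must depend on exponentially many histories that the role can actually resolve from its own observations. (A smaller quibble: even granting your construction, taking the first $k$ primes gives $\prod_i p_i = 2^{\Theta(\sqrt{N\log N})}$ for $N=\sum_i p_i$, which is superpolynomial but not $2^{\Omega(N)}$ as claimed.)
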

\begin{proof}
	We construct a family of directed global types $G_k$ that is implementable and requires a projection of exponential size for $\procB$.
	Hence, constructing the projections requires exponential time in the size of the input.
	\begingroup
	\addtolength{\jot}{.5em}
	{ \small
		\begin{align*}
			G_k & \is
			\mu t \, . \,
			+
			\begin{cases}
				\msgFromTo{\procA}{\procC}{s} \, . \,
				+
				\begin{cases}
					\msgFromTo{\procA}{\procB}{a} \, . \,
					t
					\\
					\msgFromTo{\procA}{\procB}{b} \, . \,
					t
				\end{cases}
				\\
				\msgFromTo{\procA}{\procC}{l} \, . \,
				+
				\begin{cases}
					\msgFromTo{\procA}{\procB}{a} \, . \,
					G_{a,k-1}
					\\
					\msgFromTo{\procA}{\procB}{b} \, . \,
					G_{b,k-1}
				\end{cases}
			\end{cases}
			\text{\hspace{-7ex}}
			&& \text{where}
			\\
			G_{a,i} & \is
			+
			\begin{cases}
				\msgFromTo{\procA}{\procB}{a} \, . \,
				G_{a,i-1}
				\\
				\msgFromTo{\procA}{\procB}{b} \, . \,
				G_{a,i-1}
			\end{cases}
			\; \text{ for } i > 0
			&
			G_{b,i} & \is
			+
			\begin{cases}
				\msgFromTo{\procA}{\procB}{a} \, . \,
				G_{b,i-1}
				\\
				\msgFromTo{\procA}{\procB}{b} \, . \,
				G_{b,i-1}
			\end{cases}
			\; \text{ for } i > 0
			\\
			G_{a,0} & \is
			\msgFromTo{\procA}{\procB}{d} \, . \,
			\msgFromTo{\procB}{\procA}{a} \, . \, 0
			&
			G_{b,0} & \is
			\msgFromTo{\procA}{\procB}{d} \, . \,
			\msgFromTo{\procB}{\procA}{b} \, . \, 0
		\end{align*}
	}
	\endgroup
	
	Intuitively, $\procA$ sends a sequence of letters from $\set{a, b}$ to $\procB$, followed by $d$ to indicate that the sequence is over.
	Then, $\procB$ needs to send the $k$th last letter back to $\procA$.
	Hence, $\procB$ needs to remember the last $k$ letters at all times, yielding at least $2^k$ different states for its projection.
	This is a variation of a language that is well-known to be recognisable only by a DFA with exponentially many states, compared to a minimal NFA:
	all words over $\set{a,b}$ where $a$ is the $k$th last letter.
	Note that our result does not explicitly require that the implementation is local language preserving but any implementation for $G_k$ will be by construction.
\end{proof}

 \section{Evaluation}
\label{sec:eval}

We consider the following three aspects in the evaluation of our approach:
\begin{inparaenum}[(E1)]
\item \label{eval:diff}difficulty of implementation
\item \label{eval:completeness}completeness, and
\item \label{eval:state-of-the-art}comparison to state of the art.
\end{inparaenum}

\begin{table}[t]

\begin{center}
\scalebox{0.8}{
\setlength{\tabcolsep}{4.5pt}
\begin{tabular}{c l c |
@{\hspace{2.5ex}} c @{\hspace{-1.5ex}} r c c c |
@{\hspace{2.5ex}} c @{\hspace{-1.5ex}} r}
 \toprule
 Source &
 Name &
 Impl. &
 \multicolumn{2}{c}{Subset Proj.} &
 Size &
 $|𝓟|$ &
 Size &
 \multicolumn{2}{c}{\cite{DBLP:conf/concur/MajumdarMSZ21}} \\
 & &  &
 \multicolumn{2}{c}{(complete)} & & &
 Proj's  &
 \multicolumn{2}{c}{(incomplete)} \\
 \midrule
 \multirow{3}{*}{\cite{10.1145/3291638}} &
 Instrument Contr.\ Prot.\ A &
 \yessymbol &
 \yessymbol & $0.4$\,ms &
 $22$ &
 $3$ &
 $61$ &
 \yessymbol & $0.2$\,ms
 \\
 &
 Instrument Contr.\ Prot.\ B &
 \yessymbol &
 \yessymbol & $0.3$\,ms &
 $17$ &
 $3$ &
 $47$ &
 \yessymbol & $0.1$\,ms
 \\
 &
 OAuth2 &
 \yessymbol &
 \yessymbol & $0.1$\,ms &
 $10$ &
 $3$ &
 $23$ &
 \yessymbol & $<\negthinspace0.1$\,ms
 \\
 \cmidrule{1-1}
 \multirow{1}{*}{\cite{DBLP:conf/ecoop/ScalasDHY17}}
 &
 Multi Party Game &
 \yessymbol &
 \yessymbol & $0.5$\,ms &
 $21$ &
 $3$ &
 $67$ &
 \yessymbol & $0.1$\,ms
 \\
 \cmidrule{1-1}
 \multirow{1}{*}{\cite{DBLP:conf/popl/HondaYC08}}
 &
 Streaming &
 \yessymbol &
 \yessymbol & $0.2$\,ms &
 $13$ &
 $4$ &
 $28$ &
 \yessymbol & $<\negthinspace0.1$\,ms
 \\
 \cmidrule{1-1}
\multirow{1}{*}{\cite{DBLP:journals/corr/abs-1203-0780}} &
 Non-Compatible Merge &
 \yessymbol &
 \yessymbol & $0.2$\,ms &
 $11$ &
 $3$ &
 $25$ &
 \yessymbol & $0.1$\,ms
\\
 \cmidrule{1-1}
\multirow{1}{*}{\cite{springhibernate}} &
 Spring-Hibernate &
 \yessymbol &
\yessymbol & $1.0$\,ms &
 $62$ &
 $6$ &
 $118$ &
 \yessymbol & $0.7$\,ms
\\
 \cmidrule{1-1}
 \multirow{4}{*}{\cite{DBLP:conf/concur/MajumdarMSZ21}} &
 Group Present &
 \yessymbol &
 \yessymbol & $0.6$\,ms &
 $51$ &
 $4$ &
 $85$ &
 \yessymbol & $0.6$\,ms
 \\
 &
 Late Learning &
 \yessymbol &
 \yessymbol & $0.3$\,ms &
 $17$ &
 $4$ &
 $34$ &
 \yessymbol & $0.2$\,ms
 \\
 &
 Load Balancer ($n=10$) &
 \yessymbol &
 \yessymbol & $3.9$\,ms &
 $36$ &
 $12$ &
 $106$ &
 \yessymbol & $2.4$\,ms
 \\
 &
 Logging ($n=10$) &
 \yessymbol &
 \yessymbol & $71.5$\,ms &
 $81$ &
 $13$ &
 $322$ &
 \yessymbol & $10.0$\,ms
 \\
 \cmidrule{1-1}
 \multirow{4}{*}{\cite{ecoop-draft}} &
 2 Buyer Protocol &
 \yessymbol &
 \yessymbol & $0.5$\,ms &
 $22$ &
 $3$ &
 $60$ &
 \yessymbol & $0.2$\,ms
 \\
 &
 2B-Prot.~Omit No &
 \yessymbol &
 \yessymbol & $0.4$\,ms &
 $19$ &
 $3$ &
 $56$ &
 (\negthinspace\nosymbol\negthinspace) & $0.1$\,ms
 \\
 &
 2B-Prot.~Subscription&
 \yessymbol &
 \yessymbol & $0.7$\,ms &
 $46$ &
 $3$ &
 $95$ &
 (\negthinspace\nosymbol\negthinspace) & $0.3$\,ms
 \\
 &
 2B-Prot.~Inner Recursion&
 \yessymbol &
 \yessymbol & $0.4$\,ms &
 $17$ &
 $3$ &
 $51$ &
 \yessymbol & $0.1$\,ms
 \\
 \cmidrule{1-1}
 \multirow{7}{*}{ New} &
 Odd-even (\cref{ex:odd-even}) &
 \yessymbol &
 \yessymbol & $0.5$\,ms &
 $32$ &
 $3$ &
 $70$ &
 (\negthinspace\nosymbol\negthinspace) & $0.2$\,ms
 \\
 &
 $\GG_r$ -- Receive Val.~Violated (\S\ref{sec:motivation}) &
 \nosymbol &
 \nosymbol & $0.1$\,ms &
 $12$ &
 $3$ &
 $\hole$ &
 (\negthinspace\nosymbol\negthinspace) & $<\negthinspace0.1$\,ms
 \\
 &
 $\GG'_r$ -- Receive Val.~Satisfied (\S\ref{sec:motivation}) &
 \yessymbol &
 \yessymbol & $0.2$\,ms &
 $16$ &
 $3$ &
 $35$ &
 \yessymbol & $0.1$\,ms
 \\
 &
 $\GG_s$ -- Send Val.~Violated (\S\ref{sec:motivation}) &
 \nosymbol &
 \nosymbol & $<\negthinspace0.1$\,ms &
 $8$ &
 $3$ &
 $\hole$ &
 (\negthinspace\nosymbol\negthinspace) & $<\negthinspace0.1$\,ms
 \\
 &
 $\GG'_s$ -- Send Val.~Satisfied (\S\ref{sec:motivation}) &
 \yessymbol &
 \yessymbol & $<\negthinspace0.1$\,ms &
 $7$ &
 $3$ &
 $17$ &
 \yessymbol & $<\negthinspace0.1$\,ms
 \\
 &
 $\GG_{\operatorname{fold}}$ (\S\ref{sec:discussion}) &
 \yessymbol &
 \yessymbol & $0.4$\,ms &
 $21$ &
 $3$ &
 $50$ &
 (\negthinspace\nosymbol\negthinspace) & $0.1$\,ms
 \\
 &
 $\GG_{\operatorname{unf}}$ (\S\ref{sec:discussion}) &
 \yessymbol &
 \yessymbol & $0.4$\,ms &
 $30$ &
 $3$ &
 $61$ &
 \yessymbol & $0.2$\,ms
 \\
\bottomrule
\end{tabular}
 }
\end{center}

\caption{Projecting Global Types. For every protocol, we report whether it is implementable \yessymbol\xspace or not \nosymbol, the time to compute our subset projection and the generalized projection by Majumdar et al.~\cite{DBLP:conf/concur/MajumdarMSZ21} as well as the outcome as \yessymbol\xspace for ``implementable'', \nosymbol\xspace for ``not implementable'' and (\negthinspace\nosymbol\negthinspace) for ``not known''.
We also give the size of the protocol (number of states and transitions), the number of roles, the combined size of all subset projections (number of states and transitions).}
\label{tb:proj}

\end{table}

For this, we implemented our subset projection in a prototype tool \cite{prototype,artifact}.
It takes a global type as input and computes the subset projection for each role. It was straightforward to implement the core functionality in approximately 700 lines of Python3 code closely following the formalization (E\ref{eval:diff}). 

We consider global types (and communication protocols) from seven different sources as well as all examples from this work (cf.\ 1st column of \cref{tb:proj}).
Our experiments were run on a computer with an Intel Core i7-1165G7 CPU and used at most 100MB of memory.
The results are summarized in \cref{tb:proj}.
The reported size is the number of states and transitions of the respective state machine, which allows not to account for multiple occurrences of the same subterm.
As expected, our tool can project every implementable protocol we have considered~(E\ref{eval:completeness}).

Regarding the comparison against the state of the art (E\ref{eval:state-of-the-art}), we directly compared our subset projection to the incomplete approach by Majumdar et al.~\cite{DBLP:conf/concur/MajumdarMSZ21}, and found that the run times are in the same order of magnitude in general (typically a few milliseconds).
However, the projection of \cite{DBLP:conf/concur/MajumdarMSZ21} fails to project four implementable protocols (including \cref{ex:odd-even}).
We discuss some of the other examples in more detail in the next section.
We further note that most of the run times reported by Scalas and Yoshida~\cite{DBLP:journals/pacmpl/ScalasY19} on their model checking based tool are around 1 second and are thus two to three orders of magnitude slower.

 \section{Discussion}
\label{sec:discussion}

\myparagraph{Success of Syntactic Projections Depends on Representation.}
Let us illustrate how unfolding recursion helps syntactic projection operators to succeed.
Consider this implementable global type, which is not syntactically projectable: 

\vspace{-0.7ex}
\begin{center}
{ \scriptsize $
  \GG_{\operatorname{fold}} \is
  + \;
  \begin{cases}
    \msgFromTo{\procA}{\procB}{\msgO}. \,
    \mu t_1. \,
    (
    \msgFromTo{\procA}{\procB}{\msgO}. \,
    \msgFromTo{\procB}{\procC}{\msgO}. \,
    t_1
    \; + \;
    \msgFromTo{\procA}{\procB}{\msgB}. \,
    \msgFromTo{\procB}{\procC}{\msgB}. \,
    0
    )
    \\
    \msgFromTo{\procA}{\procB}{\msgM}. \,
    \msgFromTo{\procB}{\procC}{\msgM}. \,
    \mu t_2. \,
    (
    \msgFromTo{\procA}{\procB}{\msgO}. \,
    \msgFromTo{\procB}{\procC}{\msgO}. \,
    t_2
    \; + \;
    \msgFromTo{\procA}{\procB}{\msgB}. \,
    \msgFromTo{\procB}{\procC}{\msgB}. \,
    0
    )
\end{cases}
 $ }.
\end{center}
\vspace{-0.7ex}
\noindent
Similar to projection by erasure, a syntactic projection erases events that a role is not involved in and immediately tries to \emph{merge} different branches.
The merge operator is a partial operator that checks sufficient conditions for implementability.
Here, the merge operator fails for $\procC$ because it cannot merge a recursion variable binder and a message reception.
Unfolding the global type preserves the represented protocol and resolves this~issue:

\vspace{-0.7ex}
\begin{center}
{ \scriptsize $
  \GG_{\operatorname{unf}} \is
 + \;
 \begin{cases}
    \msgFromTo{\procA}{\procB}{\msgO}. \,
    \begin{cases}
    \msgFromTo{\procA}{\procB}{\msgB}. \,
    \msgFromTo{\procB}{\procC}{\msgB}. \,
    0
    \\
    \msgFromTo{\procA}{\procB}{\msgO}. \,
    \msgFromTo{\procB}{\procC}{\msgO}. \,
    \mu t_1. \,
    (
    \msgFromTo{\procA}{\procB}{\msgO}. \,
    \msgFromTo{\procB}{\procC}{\msgO}. \,
    t_1
    \; + \;
    \msgFromTo{\procA}{\procB}{\msgB}. \,
    \msgFromTo{\procB}{\procC}{\msgB}. \,
    0
    )
    \end{cases}
    \\
    \msgFromTo{\procA}{\procB}{\msgM}. \,
    \msgFromTo{\procB}{\procC}{\msgM}. \,
    \mu t_2. \,
    (
    \msgFromTo{\procA}{\procB}{\msgO}. \,
    \msgFromTo{\procB}{\procC}{\msgO}. \,
    t_2
    \; + \;
    \msgFromTo{\procA}{\procB}{\msgB}. \,
    \msgFromTo{\procB}{\procC}{\msgB}. \,
    0
    )
 \end{cases}
 $ }.
 \end{center}
\vspace{-0.7ex}

\iftoggle{techrep}{
\noindent (We refer to \cref{fig:folded-unfolded} \,in \cref{app:discussion-visual-reps} for visual representations of both global types.) }{
\noindent (We refer to \cite{li2023complete} for visual representations of both global types.)
}
This global type can be projected with most syntactic projection operators and shows that the representation of the global type matters for syntactic projectability.
However, such unfolding tricks do not always work, e.g. for the odd-even protocol (\cref{ex:odd-even}).
We avoid this brittleness using automata and separating the synthesis from checking implementability.

\myparagraph{Entailed Properties from the Literature.} We defined implementability for a global type as the question of whether there exists a deadlock-free CSM that generates the same language as the global type.
Various other properties of implementations and protocols have been proposed in the literature.
Here, we give a brief overview and defer to \appendixRef{app:discussion-entailed} for a detailed analysis.
\emph{Progress} \cite{DBLP:conf/sfm/CoppoDPY15}, a common property, requires that every sent message is eventually received and every expected message will eventually be sent.
With deadlock freedom, our subset projection trivially satisfies progress for finite traces.
For infinite traces, as expected, fairness assumptions are required to enforce progress.
Similarly, our subset projection prevents \emph{unspecified receptions}~\cite{DBLP:journals/iandc/CeceF05} and \emph{orphan messages}~\cite{DBLP:conf/icalp/DenielouY13,DBLP:conf/concur/BocchiLY15}, respectively interpreted in our multiparty setting with sender-driven choice.
We also ensure that every local transition of each role is \emph{executable}~\cite{DBLP:journals/iandc/CeceF05}, i.e. it is taken in some run of the CSM.
Any implementation of a global type has the \emph{stable property}~\cite{DBLP:conf/cav/LangeY19}, i.e., one can always reach a configuration with empty channels from every reachable configuration.
While the properties above are naturally satisfied by our subset projection, the following ones can be checked directly on an implementable global type without explicitly constructing the implementation.
A global type is \emph{terminating}~\cite{DBLP:journals/pacmpl/ScalasY19} iff it does not contain recursion and \emph{never-terminating}~\cite{DBLP:journals/pacmpl/ScalasY19} iff it does not contain term $0$.

 \section{Related Work}
\label{sec:related}

MSTs were introduced by Honda et al. \cite{DBLP:conf/popl/HondaYC08} with a process algebra semantics, and
the connection to CSMs was established soon afterwards~\cite{DBLP:conf/esop/DenielouY12}. 

In this work, we present a complete projection procedure for global types with sender-driven choice.
The work by Castagna et al. \cite{DBLP:journals/corr/abs-1203-0780} is the only one to present a projection that aims for completeness.
Their semantic conditions, however, are not effectively computable and their notion of completeness is ``less demanding than the classical ones''\cite{DBLP:journals/corr/abs-1203-0780}.
They consider multiple implementations, generating different sets of traces, to be sound and complete with regard to a single global type \cite[Sec.\,5.3]{DBLP:journals/corr/abs-1203-0780}.
In addition, the algorithmic version of their conditions does not use global information as our message availability analysis does.

\renewcommand*{\thefootnote}{\arabic{footnote}}\setcounter{footnote}{0}

MST implementability relates to safe realizability of HMSCs, which is undecidable in general but decidable for certain classes~\cite{DBLP:journals/tcs/Lohrey03}.
Stutz~\cite{ecoop-draft} showed that implementability of global types that are always able to terminate is decidable.\footnote{This syntactic restriction is referred to as 0-reachability in \cite{ecoop-draft}.}
The EXPSPACE decision procedure is obtained via a reduction to safe realizability of globally-cooperative HMSCs, by proving that the HMSC encoding \cite{DBLP:journals/corr/abs-2209-10328} of any implementable global type is globally-cooperative and generalizing results for infinite executions.
Thus, our PSPACE result both generalizes and tightens the earlier decidability result obtained in~\cite{ecoop-draft}.
Stutz \cite{ecoop-draft} also investigates how HMSC techniques for safe realizability can be applied to the MST setting -- using the formal connection between MST implementability and safe realizability of HMSCs -- and establishes an undecidability result for a variant of MST implementability with a relaxed indistinguishability~relation.

Similar to the MST setting, there have been approaches in the HMSC literature that tie branching to a role making a choice.
We refer the reader to the work by Majumdar et al.~\cite{DBLP:conf/concur/MajumdarMSZ21} for a survey.

Standard MST frameworks project a global type to a set of \emph{local types} rather than a CSM. 
Local types are easily translated to FSMs~\cite[Def.11]{DBLP:conf/concur/MajumdarMSZ21}.
Our projection operator, though, can yield FSMs that cannot be expressed with the limited syntax of \mbox{local} types.
Consider this implementable global type:
$\msgFromTo{\procA}{\procB}{\msgO}.\,0 ~ + ~ \msgFromTo{\procA}{\procB}{\msgM}.\,\msgFromTo{\procA}{\procC}{\msgB}.\,0$\;.
The subset projection for $\procC$ has two final states connected by a transition labeled $\rcv{\procA}{\procC}{\msgB}$.
In the syntax of local types, $0$~is the only term indicating termination, which means that final states with outgoing transitions cannot be expressed.
In contrast to the syntactic restrictions for global types, which are key to effective verification, we consider local types unnecessarily restrictive.
Usually, local implementations are type-checked against their local types and subtyping gives some implementation freedom~\cite{DBLP:conf/ppdp/ChenDY14,DBLP:journals/lmcs/ChenDSY17,DBLP:conf/fossacs/LangeY17,DBLP:journals/tcs/BravettiCZ18}.
However, one can also view our subset projection as a local specification of the actual implementation. 
We conjecture that subtyping would then amount to a variation of alternating refinement~\cite{DBLP:conf/concur/AlurHKV98}.

CSMs are Turing-powerful \cite{DBLP:journals/jacm/BrandZ83} but decidable classes were obtained for different semantics:
restricted communication topology~\cite{DBLP:journals/acta/PengP92,DBLP:conf/tacas/TorreMP08},
half-duplex communication (only for two roles)~\cite{DBLP:journals/iandc/CeceF05},
input-bounded~\cite{DBLP:conf/concur/BolligFS20},
and unreliable channels~\cite{DBLP:conf/cav/AbdullaBJ98,DBLP:conf/lics/AbdullaAA16}.
Global types (as well choreography automata \cite{DBLP:conf/coordination/BarbaneraLT20}) can only express existentially $1$-bounded, $1$-synchronizable and half-duplex communication~\cite{DBLP:journals/corr/abs-2209-10328}.
Key to this result is that sending and receiving a message is specified atomically in a global type -- a feature Dagnino et al.~\cite{DBLP:conf/coordination/DagninoGD21} waived for their deconfined global types.
However, Dagnino et al.~\cite{DBLP:conf/coordination/DagninoGD21} use deconfined types to capture the behavior of a given system rather than projecting to obtain a system that generates specified behaviors.

This work relies on reliable communication as is standard for MST frameworks.
Work on fault-tolerant MST frameworks~\cite{DBLP:journals/pacmpl/VieringHEZ21,DBLP:conf/concur/BarwellSY022} attempts to relax this restriction. In the setting of reliable communication, both context-free~\cite{dblp:conf/icfp/thiemannv16,dblp:journals/toplas/keizerbp22} and parametric~\cite{dblp:journals/scp/charalambidesda16,DBLP:journals/corr/abs-1208-6483} versions of session types have been proposed to capture more expressive protocols and entire protocol families respectively.
Extending our approach to these generalizations is an interesting direction for future work.

\subsubsection*{Acknowledgements.}
This work is funded in part by the National Science Foundation under grant~1815633.
Felix Stutz was supported by the Deutsche Forschungsgemeinschaft project 389792660 TRR 248—CPEC.

\phantomsection\label{paper-last-page}
    \clearpage

\bibliographystyle{splncs04}

\clearpage
\appendix
\section{Additional Material for \cref{sec:prelim}}
\label{app:prelim}

\subsection{Additional Definitions}
Given a word $w = w_0 \ldots w_n$, we use $w[i]$ to denote the i-th symbol $w_i \in \Alphabet$, and $w[0..i]$ to denote the subword between and including $w_0$ and $w_i$, $w_0 \ldots w_i$. 

\subsection{Indistinguishability Relation \cite{DBLP:conf/concur/MajumdarMSZ21}}
\label{app:indist-rel}
We define a family of \emph{indistinguishability relations}
${\interswap_i} \subseteq \AlphAsync^* \times \AlphAsync^*$ for $i\geq 0$
as follows.
For all $w\in\Alphabet^*$, we have $w \interswap_0 w$.
For $i=1$, we define:
\vspace{-1ex}
\begin{enumerate}[label=(\arabic*)]
	\item
	If $\procA ≠ \procC$, then
	$
	w.\snd{\procA}{\procB}{\val}.\snd{\procC}{\procD}{\val'}.u
	\; \interswap_{1} \;
	w.\snd{\procC}{\procD}{\val'}.\snd{\procA}{\procB}{\val}.u
	$.
	
	\item
	If $\procB ≠ \procD$, then
	$
	w.\rcv{\procA}{\procB}{\val}.\rcv{\procC}{\procD}{\val'}.u
	\; \interswap_{1} \;
	w.\rcv{\procC}{\procD}{\val'}.\rcv{\procA}{\procB}{\val}.u
	$.
	
	\item
	If $\procA ≠ \procD \land (\procA ≠ \procC ∨ \procB ≠ \procD)
	$, then
	$
	w.\snd{\procA}{\procB}{\val}.\rcv{\procC}{\procD}{\val'}.u
	\; \interswap_{1} \;
	w.\rcv{\procC}{\procD}{\val'}.\snd{\procA}{\procB}{\val}.u
	$.
	\item
	If $\card{w \wproj_{\snd{\procA}{\procB}{\_}}} >
	\card{w \wproj_{\rcv{\procA}{\procB}{\_}}}$,
	then
	$
	w.\snd{\procA}{\procB}{\val}.\rcv{\procA}{\procB}{\val'}.u
	\; \interswap_{1} \;
	w.\rcv{\procA}{\procB}{\val'}.\snd{\procA}{\procB}{\val}.u
	$.
\end{enumerate}
Let $w, w', w''$ be sequences of events s.t.~$w \interswap_1 w'$ and $w' \interswap_i w''$ for some~$i$.
Then, $w \interswap_{i+1} w''$.
We define $w \interswap u$ if $w \interswap_n u$ for some $n$.

It is easy to see that $\interswap$ is an equivalence relation.
Define $u \preceq_\interswap v$ if there is $w\in\Sigma^*$ such that $u.w \interswap v$.
Observe that $u \interswap v$ iff
$u \preceq_\interswap v$ and $v \preceq_\interswap u$.

For infinite words $u, v\in\Sigma^\omega$, we define $u \preceq_\interswap^\omega v$ 
if for each finite prefix $u'$ of $u$, there is a finite prefix $v'$ of $v$ such that
$u' \preceq_\interswap v'$.
Define $u \interswap v$ iff $u \preceq_\interswap^\omega v$ and $v\preceq_\interswap^\omega u$.

We lift the equivalence relation $\interswap$ on $\Sigma^\infty$ to languages:
\[
  \interswaplang(L) = \left\{ w' \mid \bigvee
    \begin{array}{l}
    w' \in \Alphabet^* \land ∃ w ∈ \Alphabet^*. \; w \in L \text{ and } w' \interswap w \\
    w' ∈ \Alphabet^ω \land \exists w \in \Alphabet^\omega. \; w \in L \text{ and } w' \preceq_\interswap^\omega w
  \end{array} \right\}
\]
For the infinite case, we take the downward closure w.r.t.~$\preceq_\interswap^\omega$.
Notice that the closure operator is asymmetric.
Consider the protocol $(\snd{\procA}{\procB}{\val}.\rcv{\procA}{\procB}{\val})^ω$.
Since we do not make any fairness assumption on scheduling, we need to include in the closure the execution where only the sender is scheduled, i.e., $(\snd{\procA}{\procB}{\val})^ω \preceq_\interswap^\omega (\snd{\procA}{\procB}{\val}.\rcv{\procA}{\procB}{\val})^ω$.

\subsection{State Machine} 
A \emph{state machine} $A$ is a $5$-tuple $(Q, \Delta, \delta, q_{0}, F)$
where $Q$ is a finite set of states,
$\Delta$ is a finite alphabet,
$\delta \subseteq Q \times (\Alphabet \union \set{\emptystring}) \times Q$ is a transition relation,
$q_0 \in Q$ is the initial state, and
$F \subseteq Q$ is the set of final states.
As is standard, we write $q \xrightarrow{x} q'$ for $(q, x, q')\in\delta$ and $q \xrightarrow{w} \mathrel{\vphantom{\to}}^* q'$ for its reflexive and transitive hull for $w \in \Delta^*$.
We define the runs and traces in the standard way.
A run is maximal if it is infinite or if it ends at a final state.
The \emph{language} $\lang(A)$ is the set of (finite or infinite) maximal traces.

     \section{Additional Material for \cref{sec:constructing-implementations}}
\label{app:constructing-implementations}

\subsection{Proofs for \cref{sec:synthesizing-implementations}}
\label{proof:synthesizing-implementations}
\label{proof:projection-preserves-per-process-runs}
\constructionProperties*
\begin{proof}
All claims are rather straightforward from the definitions and constructions and the proofs exploit the connection to the projection by erasure.
We still spell them out to familiarize the reader with these.

We prove the first claim first.
By construction, for every run $\run$ in $\semglobalsync(\GG)$, there exists a run $\run'$ in the projection by erasure $\projerasure{\GG}{\procC}$.
Let $\run$ be the run for trace $w$ in $\semglobal(\GG)$. 
Then, $\run$ is also a run in $\projerasure{\GG}{\procC}$ with trace $w \wproj_{\Alphabet_\procA}$. 
Since $\projerasure{\GG}{\procC}$ might be non-deterministic, we apply the subset construction from~\cref{def:subset-construction}.
For the reachable states, this is equivalent to the definition by Sipser~\cite[Thm.~1.39]{DBLP:books/daglib/0086373}.Thus, the constructed deterministic finite state machine can mimic any run (which is initial by definition) in $\projerasure{\GG}{\procC}$:
for every run~$\run'$ in $\projerasure{\GG}{\procC}$ with trace $w'$, there is a run $\run''$ in $\subsetcons{\GG} {\procC}$ with trace~$w'$.

For the second claim, we consider a trace $u$ of $\subsetcons{\GG}{\procC}$.
Because of the subset construction, it holds that for every run $\run'$ in $\subsetcons{\GG} {\procC}$ with trace $w'$,
there is a run $\run''$ in the projection by erasure $\projerasure{\GG}{\procC}$ with trace $w'$.
By definition of the projection by erasure,
a run $\run'''$ in $\semglobalsync(\GG)$ exists
with the same sequence of syntactic subterms as $\run''$ and
$\SyncToAsync(\trace(\run''')) \wproj_{\Alphabet_\procA} = w'$. 

From this, it easily follows that
$
\lang(\subsetcons{\GG}{\procC})
=
\lang(\projerasure{\GG}{\procC})
$
and, thus,
$
\lang(\subsetcons{\GG}{\procC})
=
\lang(\GG) \wproj_{\Alphabet_\procC}
$.
\proofend
\end{proof}

\constructionPreservesBehaviors*
\begin{proof}
	Given that $\CSMl{\subsetcons{\GG}{\procA}}$ is deterministic, to prove language inclusion it suffices to prove the inclusion of the respective prefix sets: 
	\[
		\text{pref}(\lang(\GG)) \subseteq \text{pref}(\lang\CSMl{\subsetcons{\GG}{\procA}})
	\]
	We prove this via structural induction on $w$. 
	The base case, $w = \emptystring$, is trivial.
	For the inductive step, let 
	$wx \in \text{pref}(\lang(\GG))$. 
	From the induction hypothesis, 
	$w \in \text{pref}(\lang\CSMl{\subsetcons{\GG}{\procA}})$. 
	It suffices to show that the transition labeled with $x$ is enabled for the active role in $x$. 
	Let $(\vec{s},\xi)$ denote the $\CSMl{\subsetcons{\GG}{\procA}}$ configuration reached on $w$. 
	In the case that $x \in \Alphabet_!$, let $x = \snd{\procA}{\procB}{\val}$. 
	The existence of an outgoing transition $\xrightarrow{\snd{\procA}{\procB}{\val}}$ from $\vec{s}_\procA$ follows from the fact that $\lang(\subsetcons{\GG}{\procA}) = \lang(\GG) \wproj_{\Alphabet_\procA}$ (\cref{lm:languages-of-roles}). 
	The fact that $wx \in \text{pref}(\lang\CSMl{\subsetcons{\GG}{\procA}})$ follows immediately from this and the fact that send transitions in a CSM are always enabled.
	In the case that $x \in \Alphabet_?$, let $x = \rcv{\procA}{\procB}{\val}$.
	We obtain an outgoing transition $\xrightarrow{\rcv{\procA}{\procB}{\val}}$ from $\vec{s}_\procA$ analogously. 
	We additionally need to show that $\xi(\procB,\procA)$ contains $m$ at the head. 
	This follows from \cite[Lemma~20]{DBLP:conf/concur/MajumdarMSZ21} and the induction hypothesis. 
	This concludes our proof of prefix set inclusion. 
	
	Let $w$ be a word in $\lang(\GG)$. 
	Let $(\vec{s},\xi)$ denote the $\CSMl{\subsetcons{\GG}{\procA}}$ configuration reached on $w$. 
	In the case that $w$ is finite, all states in $\vec{s}$ are final from \cref{lm:languages-of-roles} and all channels in $\xi$ are empty from the fact that all send events in $w$ contain matching receive events. 
In the case that $w$ is infinite, we show that $w$ has an infinite run in $\CSMl{\subsetcons{\GG}{\procA}}$ using König's Lemma. 
	We construct an infinite graph $\mathcal{G}_w(V, E)$ with 
	$V \is \{v_{\run} \mid \trace(\run) \leq w\}$ and 
	$E \is \{(v_{\run_1}, v_{\run_2}) \mid \exists~x \in \AlphAsync.~\trace(\run_2) = \trace(\run_1)\cdot x\}$. 
	Because $\CSMl{\subsetcons{\GG}{\procA}}$ is deterministic, $\mathcal{G}_w$ is a tree rooted at $v_\emptystring$, the vertex corresponding to the empty run.
	By König's Lemma, every infinite tree contains either a vertex of infinite degree or a ray. 
	Because $\CSMl{\subsetcons{\GG}{\procA}}$ consists of a finite number of communicating state machines, the last configuration of any run has a finite number of next configurations, and $\mathcal{G}_w$ is finitely branching. 
	Therefore, there must exist a ray in $\mathcal{G}_w$ representing an infinite run for $w$, and thus 
	$w \in \lang(\CSMl{\subsetcons{\GG}{\procA}})$. 
\proofend
\end{proof}

     \section{Additional Material for \cref{sec:soundness}}
\label{app:soundness}

The definition $\semavaildef{\blockedset}{T}{G}$ computes the set of available messages on the syntax of global types and follows the one by Majumdar et al.~\cite[Sec.\ 2.2]{DBLP:conf/concur/MajumdarMSZ21}.
For their proofs, they introduce another version, which computes these sets on the semantics of the global type using a concept called blocked languages.
In Lemma 37, they prove the syntactic version always yields a superset of the semantic version.
The proof easily generalizes to equality.
Therefore, we use $\semavaildef{\blockedset}{T}{\hole}$ in our proofs and refer to their work for details.

\begin{corollary}[Intersection sets are invariant under $\interswap$]
	\label{cor:I-set-indist-invar}
	Let $\GG$ be a global type.
	Let $w, w' \in \AlphAsync^*$ and $w \interswap w'$. Then, $I(w) = I(w')$. 
\end{corollary}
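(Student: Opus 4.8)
The plan is to reduce the claim to a single elementary fact about the indistinguishability relation, namely that $\interswap$ preserves the projection of a word onto every role's alphabet, and then to observe that $\globcomplocal{\GG}{\procA}{w}$ depends on $w$ only through $w \wproj_{\Alphabet_\procA}$.

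First I would prove the following: for all $w, w' \in \AlphAsync^*$ with $w \interswap w'$ and every role $\procA \in \Procs$, we have $w \wproj_{\Alphabet_\procA} = w' \wproj_{\Alphabet_\procA}$. Since $\interswap$ on finite words is the reflexive–transitive closure of the one-step relation $\interswap_1$ (see \appendixRef{app:indist-rel}), it suffices to prove this for $w \interswap_1 w'$ and then induct on the number of swaps. For a one-step swap, $w'$ is obtained from $w$ by exchanging two consecutive events $x$ and $y$; inspecting the four cases in the definition of $\interswap_1$ shows that in each case the active role of $x$ differs from the active role of $y$: two sends from distinct senders in case~(1), two receives at distinct receivers in case~(2), and a send and a receive whose active roles are forced to be distinct by the side conditions in cases~(3) and~(4) (case~(4) additionally uses that the sender and receiver of a message are distinct, which holds for all events occurring in the words we consider, by the syntactic restrictions on global types). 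Hence, for any fixed role $\procA$, at most one of $x, y$ lies in $\Alphabet_\procA$, so applying the homomorphism $\wproj_{\Alphabet_\procA}$ to $w$ and to $w'$ yields the same word.

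Then I would conclude as follows. By \cref{def:possible-run-sets}, the set $\globcomplocal{\GG}{\procA}{w}$ is the set of maximal runs $\run$ of $\semglobalsync(\GG)$ with $w \wproj_{\Alphabet_\procA} \preforder \SyncToAsync(\trace(\run)) \wproj_{\Alphabet_\procA}$, which depends on $w$ only via $w \wproj_{\Alphabet_\procA}$. By the first step, $w \wproj_{\Alphabet_\procA} = w' \wproj_{\Alphabet_\procA}$ for every $\procA$, so $\globcomplocal{\GG}{\procA}{w} = \globcomplocal{\GG}{\procA}{w'}$ for every $\procA$, and taking the intersection over all roles gives $I(w) = \Inters_{\procA \in \Procs} \globcomplocal{\GG}{\procA}{w} = \Inters_{\procA \in \Procs} \globcomplocal{\GG}{\procA}{w'} = I(w')$. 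There is essentially no obstacle in this argument; the only point requiring care is the case analysis on $\interswap_1$ establishing that swapped events always have distinct active roles, in particular the dependence of case~(4) on the absence of self-communication.
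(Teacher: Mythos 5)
Your proof is correct and follows essentially the same route as the paper: the paper also reduces the claim to the fact that $w \interswap w'$ implies $w \wproj_{\Alphabet_\procA} = w' \wproj_{\Alphabet_\procA}$ for every role $\procA$ (which it declares ``immediate'') and then unfolds the definition of $I$. The only difference is that you spell out that first step via induction on the number of $\interswap_1$-swaps and a case analysis showing the swapped events have distinct active roles, which is a reasonable elaboration (including the observation about no self-communication in case~(4)) rather than a different approach.
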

\begin{proof}
	It follows immediately from $w \interswap w'$ that 
	\[
		\forall \procA \in \Procs.~w \wproj_{\Alphabet_\procA} = w' \wproj_{\Alphabet_\procA}
	\]
	By the definition of $I$, 
	\[
		\forall \run.~ \run \in I(w) 
		\Leftrightarrow 
		\forall \procA\in \Procs.~
		w \wproj_{\Alphabet_\procA} \leq \SyncToAsync(\trace(\run)) \wproj_{\Alphabet_\procA}
	\]
	Let $\run$ be a run in $\semglobal(\GG)$. Then, 
	\begin{align*}
		\run \in I(w)  
		&\Leftrightarrow
		\forall \procA\in \Procs.~
		w \wproj_{\Alphabet_\procA} \leq \SyncToAsync(\trace(\run)) \wproj_{\Alphabet_\procA} \\ 
		&\Leftrightarrow 
		 \forall \procA\in \Procs.~
		 w' \wproj_{\Alphabet_\procA} \leq \SyncToAsync(\trace(\run)) \wproj_{\Alphabet_\procA} \\ 
		 &\Leftrightarrow 
		 \run \in I(w')
	\end{align*}
\proofend
\end{proof}

\begin{proposition}[Structural properties of $\semavail^{\blockedset}_{(G \ldots)}$]
	Let $\blockedset \subseteq \Procs$ and let $G$ be a syntactic subterm of $\GG$. 
	Let $\snd{\procB}{\procA}{\val} \in \semavail^{\blockedset}_{(G \ldots)}$. 
	Then, it holds that: 
\begin{enumerate}[(1)]
\item $\semavail^{\blockedset}_{(G \ldots)}$ does not contain any events whose active role is blocked: \\ $\forall~\procA\in \blockedset.~\semavail^{\blockedset}_{(G \ldots)} \inters \Alphabet_{\procA} = \emptyset$
\item There exists a run suffix $\beta$ such that:
	\begin{enumerate}[i.]
		\item $~G \cdot \beta \text{ is the suffix of a maximal run in } \semglobal(\GG)$,
		\item $\xrightarrow{\procB \xrightarrow{} \procA: \val} \text{ occurs in } \beta$, and 
\item $\blockedset \text{ monotonically increases during the computation of } \semavail^{\blockedset}_{(G \ldots)}$
\end{enumerate}
\end{enumerate}
\end{proposition}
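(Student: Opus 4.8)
The plan is to establish all three claims by a single well-founded induction on the lexicographic pair $(\,\card{\mathit{RV}(\GG)\setminus T},\ \card{G}\,)$, where $\mathit{RV}(\GG)$ is the finite set of recursion variables of $\GG$ and $T$ is the set of already-unfolded variables threaded through $\semavaildef{\blockedset}{T}{\hole}$; this measure strictly decreases in every recursive call (in the $0$, $t\in T$, $\mu t.G$ and the two choice cases, $T$ is unchanged and $\card{G}$ shrinks; in the unfolding case $t\notin T$, the first component drops), which also shows that $\semavail$ is well-defined. A preliminary observation, immediate from the equations, is that the only operation that adds an element is the clause $\set{\snd{\procC}{\procD_i}{\val_i}}$ at a choice node, so $\semavail^{\blockedset}_{(G\ldots)}$ contains only send events, and the active role of each is its sender. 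Item~(3) is then just as immediate: every recursive call is made with blocked set $\blockedset$, except at a choice $\sum_{i\in I}\msgFromTo{\procC}{\procD_i}{\val_i.G_i}$ with $\procC\in\blockedset$, where it is $\blockedset\cup\set{\procD_i}\supseteq\blockedset$; hence the blocked set never shrinks along the recursion.

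For item~(1) it suffices, by the preliminary observation, to show that no send event with a sender in $\blockedset$ lies in $\semavail^{\blockedset}_{(G\ldots)}$. The base cases $0$ and $t\in T$ return $\emptyset$; for $\mu t.G'$ and $t\notin T$ the blocked set is unchanged and the induction hypothesis applies directly (to $G'$, resp.\ $\getMuG(t)$). At a choice $\sum_{i\in I}\msgFromTo{\procC}{\procD_i}{\val_i.G_i}$: if $\procC\notin\blockedset$, the freshly added events $\snd{\procC}{\procD_i}{\val_i}$ have sender $\procC\notin\blockedset$ and the remaining contributions are subsets of the $\semavail^{\blockedset}_T(G_i)$, covered by the hypothesis with the same $\blockedset$; if $\procC\in\blockedset$, each $G_i$ is explored with blocked set $\blockedset\cup\set{\procD_i}$, so the hypothesis already rules out every sender in $\blockedset$.

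For item~(2) I induct again, tracing how $\snd{\procB}{\procA}{\val}$ entered the set, and building a run suffix $\beta$ with $G\cdot\beta$ a suffix of a maximal run of $\semglobal(\GG)$ on which $\msgFromTo{\procB}{\procA}{\val}$ fires. For $\mu t.G'$ and $t\notin T$, prepend the $\delta_{\GG}$-transitions $\mu t.G'\xrightarrow{\varepsilon}G'$, respectively $t\xrightarrow{\varepsilon}\mu t.G'\xrightarrow{\varepsilon}G'$, to the suffix given by the hypothesis (on $G'$, resp.\ $\getMuG(t)$). At a choice $\sum_{i\in I}\msgFromTo{\procC}{\procD_i}{\val_i.G_i}$ there are two sub-cases. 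If $\snd{\procB}{\procA}{\val}$ comes from a recursive contribution $\semavail^{\blockedset'}_T(G_i)$ (with $\blockedset'$ equal to $\blockedset$ or $\blockedset\cup\set{\procD_i}$), the hypothesis on $G_i$ supplies a suffix from $G_i$ containing a $\msgFromTo{\procB}{\procA}{\val}$-transition, to which we prepend $G\xrightarrow{\msgFromTo{\procC}{\procD_i}{\val_i}}G_i$. Otherwise $\snd{\procB}{\procA}{\val}$ is the freshly introduced $\snd{\procC}{\procD_i}{\val_i}$ for some $i$, forcing $\procC=\procB$, $\procD_i=\procA$, $\val_i=\val$; then $\beta$ is the transition $G\xrightarrow{\msgFromTo{\procB}{\procA}{\val}}G_i$ followed by any maximal run out of $G_i$, which exists because $0$ is the only final state of $\semglobal(\GG)$ and every other state has an outgoing transition, so every finite run extends. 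As the induction has finite depth, only finitely many $\varepsilon$-steps are ever prepended, so $G\cdot\beta$ is a genuine suffix of a maximal run.

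The main obstacle I expect is the recursion bookkeeping rather than any individual case: choosing the lexicographic measure so that the unfolding step $t\notin T$ — where $\getMuG(t)$ can be syntactically larger than $t$ — is subsumed, and, in item~(2), tracking precisely which $\delta_{\GG}$-edges, in particular the $\varepsilon$-edges for $\mu$-binders and recursion variables, must be stitched in front of the inductively obtained suffix so that the result is an actual run of $\semglobal(\GG)$ and not merely a path in the syntax tree. Everything else is a direct case analysis on the defining equations of $\semavail$.
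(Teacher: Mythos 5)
Your proof is correct. For context: the paper itself offers no argument here at all --- its ``proof'' of this proposition is the single sentence that the claims are immediate from the definition of available messages in Majumdar et al.\ \cite[Sec.\,2.2]{DBLP:conf/concur/MajumdarMSZ21}. Your well-founded induction on $(\card{\mathit{RV}(\GG)\setminus T},\card{G})$ is exactly the bookkeeping one has to do to make that ``immediate'' rigorous, and all three items go through as you describe: the preliminary observation that only send events are ever inserted reduces item~(1) to excluding blocked \emph{senders}, the case analysis on how $\snd{\procB}{\procA}{\val}$ entered the set yields the run suffix for item~(2), and item~(3) is read off the recursive calls. Two small points you should tighten but which do not affect correctness: first, in the $\mu t.G$ clause the set $T$ is \emph{not} unchanged (it becomes $T\cup\set{t}$), so your parenthetical description of the measure is off there, although the lexicographic measure still strictly decreases in that case (first component weakly decreases, and if it is unchanged the term size drops); second, for item~(2) you construct $\beta$ so that $G\cdot\beta$ extends forward to a maximal run, but ``suffix of a maximal run'' also needs $G$ to be reachable from $q_{0,\GG}$ --- this holds because every syntactic subterm is reachable in $\semglobalsync(\GG)$ by construction of $\delta_\GG$, and is worth one sentence.
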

\begin{proof}
	Immediate from the definition of available messages in~\cite[Sec.\,2.2]{DBLP:conf/concur/MajumdarMSZ21}.
\end{proof}

\begin{proposition}[Correspondence between unique splittings and local states]
	\label{prop:correspondence-unique-splittings-local-states}
	Let $\GG$ be a global type, and $\CSMl{\subsetcons{\GG}{\procA}}$ be the subset construction for each role. Let $w$ be a trace of $\CSMl{\subsetcons{\GG}{\procA}}$, and $(\vec{s},\xi)$ be the CSM configuration reached on $w$. Let $\procA$ be a role.
	Then, it holds that:
	\[
		\Union_{\run \in \globcomplocal{\GG}{\procA}{w}}\{ G \mid \alpha \cdot G \xrightarrow{l} G' \cdot \beta \text{ is the unique splitting of }\run \text{ matching } w\}
		\subseteq 
		\vec{s}_\procA
	\]
\end{proposition}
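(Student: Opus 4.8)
The plan is to reduce the claim to a reachability statement about the projection by erasure $\projerasure{\GG}{\procA}$ and then exploit the determinism of $\subsetcons{\GG}{\procA}$. Since $(\vec{s},\xi)$ is the CSM configuration reached on $w$ and a role's local state changes only on events in $\Alphabet_\procA$, the state $\vec{s}_\procA$ is the unique state that $\subsetcons{\GG}{\procA}$ reaches on $w\wproj_{\Alphabet_\procA}$. By \cref{def:subset-construction} together with the standard subset-construction characterization used in the proof of \cref{lm:languages-of-roles}, this state is $\set{q \in Q_{\GG} \mid \text{there is a run of } \projerasure{\GG}{\procA} \text{ from } q_{0,\GG} \text{ to } q \text{ with trace } w\wproj_{\Alphabet_\procA}}$. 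Hence it suffices to show, for every $\run \in \globcomplocal{\GG}{\procA}{w}$ whose unique splitting $\alpha \cdot G \xrightarrow{l} G' \cdot \beta$ for $\procA$ matching $w$ is defined, that there is a run of $\projerasure{\GG}{\procA}$ from $q_{0,\GG}$ to $G$ with trace $w\wproj_{\Alphabet_\procA}$.

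The crux is to show that the $\procA$-projection of the matched prefix is not merely a prefix of $w\wproj_{\Alphabet_\procA}$ but equal to it. Write $\alpha' \is \alpha \cdot G$ and $m \is \SyncToAsync(\trace(\alpha'))\wproj_{\Alphabet_\procA}$. By definition of the splitting, $m \preforder w\wproj_{\Alphabet_\procA}$; and since $\run \in \globcomplocal{\GG}{\procA}{w}$ and $\alpha' \leq \run$, both $m$ and $w\wproj_{\Alphabet_\procA}$ are prefixes of $\SyncToAsync(\trace(\run))\wproj_{\Alphabet_\procA}$. Suppose, for contradiction, that $m$ is a proper prefix of $w\wproj_{\Alphabet_\procA}$. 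Because $\alpha' \neq \run$, the run continues with a transition $G \xrightarrow{l} G'$. If $\SyncToAsync(l)\wproj_{\Alphabet_\procA} = \emptystring$, then $\alpha' \cdot (G \xrightarrow{l} G')$ is a strictly longer prefix of $\run$ with the same $\procA$-projection $m \preforder w\wproj_{\Alphabet_\procA}$, contradicting the maximality of $\alpha'$. Otherwise $\SyncToAsync(l)\wproj_{\Alphabet_\procA}$ is a single letter $x$, so $\SyncToAsync(\trace(\run))\wproj_{\Alphabet_\procA}$ begins with $m \cdot x$; since $w\wproj_{\Alphabet_\procA}$ is a prefix of it that strictly extends $m$, we get $m \cdot x \preforder w\wproj_{\Alphabet_\procA}$, and again $\alpha' \cdot (G \xrightarrow{l} G')$ contradicts the maximality of $\alpha'$. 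Therefore $m = w\wproj_{\Alphabet_\procA}$.

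To conclude, recall that $\projerasure{\GG}{\procA}$ is obtained from $\semglobal(\GG)$ as a homomorphic image over the same state set (\cref{def:projection-by-erasure}): the prefix-run $\alpha'$ of $\semglobal(\GG)$, which ends in state $G$, is mirrored by a run of $\projerasure{\GG}{\procA}$ from $q_{0,\GG}$ to $G$ with trace $\SyncToAsync(\trace(\alpha'))\wproj_{\Alphabet_\procA} = m = w\wproj_{\Alphabet_\procA}$. By the characterization of $\vec{s}_\procA$ from the first paragraph, $G \in \vec{s}_\procA$; since $\run$ was an arbitrary run contributing to the union, the inclusion follows. I expect the middle paragraph — establishing that the matched prefix projects exactly onto $w\wproj_{\Alphabet_\procA}$ rather than onto a proper prefix — to be the only real obstacle, as it is where the maximality built into the splitting and the membership $\run \in \globcomplocal{\GG}{\procA}{w}$ genuinely interact; the rest is bookkeeping about the subset construction and the erasure homomorphism.
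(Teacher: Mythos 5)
Your proof is correct and follows essentially the same route as the paper's: identify $\vec{s}_\procA$ with the state the deterministic subset construction reaches on $w\wproj_{\Alphabet_\procA}$, and show $G$ is reachable in $\projerasure{\GG}{\procA}$ via a run with that exact trace. Your middle paragraph carefully justifies the equality $\SyncToAsync(\trace(\alpha\cdot G))\wproj_{\Alphabet_\procA} = w\wproj_{\Alphabet_\procA}$, which the paper simply asserts as following from the definition of the unique splitting, so you have filled in a detail rather than diverged.
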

\begin{proof}
	Let $\run$ be a run in $\globcomplocal{\GG}{\procA}{w}$, and let $\alpha \cdot G \xrightarrow{l} G' \cdot \beta$ be its unique splitting for $\procA$ matching $w$. 
	It follows from the definition of unique splitting that 
	$\SyncToAsync(\trace(\alpha \cdot G')) \wproj_{\Alphabet_\procA} = w \wproj_{\Alphabet_\procA}$.
	From the subset construction, there exists a run $s_1, \ldots, s_n$ in $\subsetcons{\GG}{\procA}$ such that $s_1 = s_{0,\procA}$ and $s_1 \xrightarrow{w \wproj_{\Alphabet_\procA}} \mathrel{\vphantom{\to}^*} s_n$.
	By the definition of $Q_\procA$, we know that $s_n$ contains all global syntactic subterms in $\GG$ that are reachable via $q_{0,\GG} \xrightarrow{w \wproj_{\Alphabet_\procA}} \xrightarrow{\emptystring}\mathrel{\vphantom{\to}^*}$ in $\semglobal(\GG)_\downarrow$, of which $G$ is one. 
	Hence, $G \in s_n$. 
	By assumption, $\subsetcons{\GG}{\procA}$ reached state $\vec{s}_\procA$ on $w \wproj_{\Alphabet_\procA}$. 
	Because subset constructions are deterministic, it follows that $s_n = \vec{s}_\procA$. 
	We conclude that $G \in \vec{s}_\procA$. 
\proofend
\end{proof}

\begin{proposition}[No send transitions from final states in subset projection]
	\label{prop:final-state-outgoing-rcv}
	Let $\GG$ be a global type, and $\subsetproj{\GG}{\procA}$ be the subset projection for $\procA$. 
	Let $s \in F_\procA$, and $s \xrightarrow{x} s' \in \delta_\procA$. 
	Then, $x \in \Alphabet_{\procA,?}$. 
\end{proposition}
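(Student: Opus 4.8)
The plan is a short proof by contradiction that leans entirely on Send Validity. Suppose, towards a contradiction, that $x \in \Alphabet_{\procA,!}$, i.e.\ the outgoing transition $s \xrightarrow{x} s'$ of $\subsetcons{\GG}{\procA}$ is a send transition. Since $\subsetproj{\GG}{\procA}$ is defined, it satisfies Send Validity (\cref{cond:send-state-validity-transition-origins}), and hence $\transAnnoFunc(s \xrightarrow{x} s') = s$.

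Next I would unpack what it means for $s$ to be final. By the definition of the subset construction, $s \in F_\procA$ iff $s \inters F_{\GG} \neq \emptyset$, and since $F_{\GG} = \set{0}$ this forces $0 \in s$. Combining this with the previous step gives $0 \in \transAnnoFunc(s \xrightarrow{x} s')$, so by the definition of transition origins there is some $G' \in s'$ with $0 \xrightarrow{x}\mathrel{\vphantom{\to}^*} G' \in \projerasuretrans$ — that is, an $x$-labeled edge out of $0$ possibly followed by a sequence of $\emptystring$-edges.

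The contradiction then comes from the observation that $0$ has no outgoing edges at all in $\projerasuretrans$. Indeed, $\projerasuretrans$ is obtained from $\delta_{\GG}$ purely by relabeling transitions, and every transition in $\delta_{\GG}$ has as its source either a choice term $\sum_{i \in I} \msgFromTo{\procA}{\procB_i}{\val_i.G_i}$, a binder $\mu t. G'$, or a recursion variable $t$; none of these is $0$. Hence no edge of $\projerasuretrans$ starts at $0$, contradicting the existence of the required $x$-edge. Therefore $x \notin \Alphabet_{\procA,!}$; since the alphabet of $\subsetcons{\GG}{\procA}$ is $\Alphabet_\procA = \Alphabet_{\procA,!} \union \Alphabet_{\procA,?}$, we conclude $x \in \Alphabet_{\procA,?}$. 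There is no genuine obstacle in this argument; the only point requiring minor care is the exact form of $\transAnnoFunc$, which admits trailing $\emptystring$-transitions — but since $0$ lacks even a single $x$-labeled outgoing edge, this causes no difficulty.
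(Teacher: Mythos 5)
Your proof is correct and follows essentially the same route as the paper's: assume the transition is a send, use Send Validity to conclude the final subterm $0 \in s$ lies in the transition origins, and derive a contradiction from the fact that $0$ has no outgoing transitions in $\projerasuretrans$. Your version is slightly more explicit in identifying the final state as exactly $0$ (via $F_{\GG} = \set{0}$), but the argument is the same.
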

\begin{proof}
	Assume by contradiction that $x \in \Alphabet_{\procA,!}$.  
	We instantiate Send Validity with $s \xrightarrow{x} s'$ to obtain:
	\[
		x \in \Alphabet_{\procA,!} \implies 
		\transAnnoFunc(s  \xrightarrow{x} s') = s
	\]
	By the definition of $\transAnnoFunc(\hole)$, for every syntactic subterm $G \in \transAnnoFunc(s  \xrightarrow{x} s')$:
	\[
		\exists G' \in s'.~G \xrightarrow{x} \mathrel{\vphantom{\to}^*} G' \in \delta \proj_{\Alphabet_\procA}
	\]
	Because $s$ is a final state in $\subsetproj{\GG}{\procA}$, by definition it must contain a syntactic subterm that is a final state in $\semglobal(\GG) \proj_\procA$. 
	Because $\semglobal(\GG) \proj_\procA$ and $\semglobal(\GG)$ share the same set of final states \ref{def:projection-by-erasure}, $s$ must contain a syntactic subterm that is a final state in $\semglobal(\GG)$. 
	Let $G_0$ denote this final state. 
	By the structure of $\semglobal(\GG)$, there exists no outgoing transition from $G_0$. 
	Therefore, $G'$ does not exist. 
	We reach a contradiction. 
\proofend
\end{proof}

\begin{definition}[\gtcomplete{\GG} words of $\CSM{A}$]
	Let $\GG$ be a global type, $\CSM{A}$ be a CSM, and $w$ be a trace of $\CSM{A}$.
	We say $w$ is \emph{\gtcomplete{\GG}} if for all roles $\procA$ and for all runs $\run \in \Inters_{\procA \in \Procs} \globcomplocal{\GG}{\procA}{w}$, 
	\[
	w \wproj_{\Alphabet_\procA} =
	\bigl(
	\SyncToAsync(\trace(\run))
	\bigr)
	\wproj_{\Alphabet_\procA}.
	\]
\end{definition}

\begin{definition}[Channel-compliant~\cite{DBLP:conf/concur/MajumdarMSZ21}]
A word $w\in \Sigma^\infty$ is \channelcompliant if 
for every prefix $w' \leq w$ and every $\procA,\procB \in \Procs$,
$\MsgVals(w'\wproj_{\rcv{\procA}{\procB}{\_}}) \leq
\MsgVals(w'\wproj_{\snd{\procA}{\procB}{\_}})$.
\end{definition}

\rcvIntersectionSetEquality*

\begin{proof}
Let $x$ = $\rcv{\procB}{\procA}{\val}$.
Because $wx$ is a trace of $\CSMl{\subsetproj{\GG}{\procA}}$, there exists a run
$(\vec{s}_0, \xi_0) \xrightarrow{w} \mathrel{\vphantom{\to}^*} (\vec{s}, \xi) \xrightarrow{x} (\pvec{s}',\xi')$
such that
$m$ is at the head of $\xi(\procA,\procB)$.

We assume that $I(w)$ is non-empty; if $I(w)$ is empty then $I(wx)$ is trivially empty.
To show $I(w) = I(wx)$, it suffices to show the following claim.

\noindent
\textit{Claim 1 : }
It holds that
$
\globcomplocal{\GG}{\procA}{w} \inters \globcomplocal{\GG}{\procB}{w}
=
\globcomplocal{\GG}{\procA}{wx} \inters \globcomplocal{\GG}{\procB}{wx}
$.\medskip

We first show Claim 1's sufficiency for $I(w) = I(wx)$: 
By definition,
$I(w) = \Inters_{\procC \in \Procs} \globcomplocal{\GG}{\procC}{w} \subseteq
\globcomplocal{\GG}{\procB}{w} \inters \globcomplocal{\GG}{\procA}{w}$.
With \textit{Claim 1}, it holds that
$
I(w)
\subseteq
\globcomplocal{\GG}{\procB}{wx} \inters \globcomplocal{\GG}{\procA}{wx}
$.
From this, it follows that
$I(w) \subseteq \globcomplocal{\GG}{\procA}{wx}$ (H1).
Since $\procA$ is the active role for the receive event $x$,
i.e. $x \in \Alphabet_\procA$,
it holds for any $\procC \neq \procA$,
that
$(wx) \wproj_{\Alphabet_\procC} = w \wproj_{\Alphabet_\procC}$
and 
$\globcomplocal{\GG}{\procC}{w} = \globcomplocal{\GG}{\procC}{wx}$ (H2).
Again, by definition of $\globcomplocal{\GG}{\hole}{\hole}$, it holds that
$
\globcomplocal{\GG}{\procA}{wx}
\subseteq
\globcomplocal{\GG}{\procA}{w}
$
(H3).

\noindent We apply the observations to $I(wx)$:
\begin{align*}
	I(wx)
	& \overset{\phantom{\text{(H2)}}}{=}
	\Inters_{\procC \in \Procs} \globcomplocal{\GG}{\procC}{wx}
	\\
	& \overset{\text{(H2)}}{=}
	\globcomplocal{\GG}{\procA}{wx}
	\;
	\inters
	\Inters_{\procC \in \Procs \land \procC \neq \procA} \globcomplocal{\GG}{\procC}{w}
	\\
	& \overset{\text{(H3)}}{=}
	\globcomplocal{\GG}{\procA}{wx}
	\; \inters \;
	\globcomplocal{\GG}{\procA}{w}
	\; \inters \;
	\Inters_{\procC \in \Procs \land \procC \neq \procA} \globcomplocal{\GG}{\procC}{w} \\
	& \overset{\phantom{\text{(H3)}}}{=}
	\globcomplocal{\GG}{\procA}{wx}
	\; \inters \;
	I(w)
	\\
	& \overset{\text{(H1)}}{=}
	I(w) \enspace.
\end{align*}

\noindent
\textit{Proof of Claim 1: }
We instantiate (H2) for role $\procB$, which yields
$
\globcomplocal{\GG}{\procB}{wx}
=
\globcomplocal{\GG}{\procB}{w}
$.
The proof of Claim 1 therefore amounts to showing:
\[
	\globcomplocal{\GG}{\procA}{w} \inters \globcomplocal{\GG}{\procB}{w} = 										\globcomplocal{\GG}{\procA}{wx} \inters \globcomplocal{\GG}{\procB}{w} \enspace.
\]
The right direction,
i.e., $\globcomplocal{\GG}{\procA}{wx} \subseteq \globcomplocal{\GG}{\procA}{w}$,
follows from (H3).
For the left direction,
i.e., $\globcomplocal{\GG}{\procA}{w} \inters \globcomplocal{\GG}{\procB}{w} \subseteq \globcomplocal{\GG}{\procA}{wx}$,
assume by contradiction that there exists a run~$\run_0$ such that
\[
	\run_0 \in \globcomplocal{\GG}{\procA}{w}
	\; \land \;
	\run_0 \in \globcomplocal{\GG}{\procB}{w}
	\; \land \;
	\run_0 \notin \globcomplocal{\GG}{\procA}{wx}
	\enspace. \]

Let $\run'$ be a run in $\globcomplocal{\GG}{\procA}{w} \setminus \globcomplocal{\GG}{\procA}{wx}$.
Let $\alpha' \cdot G'_{\mathit{pre}} \xrightarrow{l'} G'_{\mathit{post}} \cdot \beta'$ be the unique splitting of $\run'$ for $\procA$ matching $w$.

Let $\run'_\procA$ denote the largest consistent prefix of $\run'$ for $\procA$; it is clear that 
$\run'_\procA = \alpha' \cdot G'_{pre}$. Formally,
\[
\run'_\procA = max\{\run~|~\run \leq \run' ~\land~ \bigl(
\SyncToAsync(\trace(\run))
\bigr)
\wproj_{\Alphabet_\procA} \preforder  w\wproj_{\Alphabet_\procA}
\}\enspace.
\]
Let $\run'_\procB$ be defined analogously.

We claim that $\procB$ is ahead of $\procA$ in $\run'$, i.e. $\run'_\procA < \run'_\procB$.
Intuitively, this claim follows from the half-duplex property of CSMs and the fact that $\procB$ is the sender. 
Formally, Lemma~19 in \cite{DBLP:conf/concur/MajumdarMSZ21} implies $\xi(\procB,\procA) = u$ where
$\mathcal{V}(w \wproj_{\snd{\procB}{\procA}{\_}}) = \MsgVals(w \wproj_{\rcv{\procB}{\procA}{\_}}).u$.
Because $\xi(\procB, \procA)$ contains at least $m$ by assumption,
$|\MsgVals(w\wproj_{\snd{\procB}{\procA}{\_}})| > |\MsgVals(w \wproj_{\rcv{\procB}{\procA}{\_}})|$.
Because 
$\MsgVals(w \wproj_{\rcv{\procB}{\procA}{\_}}) < \MsgVals(w\wproj_{\snd{\procB}{\procA}{\_}})$ 
and traces of CSMs are channel-compliant (Lemma~19, \cite{DBLP:conf/concur/MajumdarMSZ21}), it holds that
$\run'_\procB$ contains all 
$|\MsgVals(w \wproj_{\rcv{\procB}{\procA}{\_}})|$
transition labels of the form  
$\procB \xrightarrow{} \procA: \hole$ 
that are contained in $\run'_\procB$, 
plus at least one more of the form 
$\procB \xrightarrow{} \procA: m$. 
Because both $\run'_\procA$ and $\run'_\procB$ are prefixes of $\run'$, 
it must be the case that 
$\run'_\procA < \run'_\procB$. 
This concludes the proof of the above claim.

By assumption, $\run' \notin \globcomplocal{\GG}{\procA}{wx}$ and therefore $l' \neq \procB \xrightarrow{} \procA:m$.
By the definition of unique splittings, $\procA$ must be the active role in $l'$; by \cref{cor:no-mixed-choice}, $\procA$ must be the receiving role in $l'$. 
In other words, $l'$ must be of the form $\procC \xrightarrow{} \procA:\val'$, where either $\procC \neq \procB$ or $\val' \neq \val$.

\noindent
\textit{Case:} $\procC = \procB$ and $\val' \neq \val$.

We discharge this case by showing a contradiction to the assumption that $m$ is at the head of the channel between $\procB$ and $\procA$. 

Because $\alpha' \cdot G'_{pre} \leq \run'_\procA$ and $\run'_\procA < \run'_\procB$ from the claim above, it must be the case that
$\alpha' \cdot G'_{pre} \xrightarrow{l'} G'_{post} \leq  \run'_\procB$
and
 $\snd{\procB}{\procA}{\val'}$ is in $w \wproj_{\Alphabet_{\procB}}$.
From Lemma 19 \cite{DBLP:conf/concur/MajumdarMSZ21}, it follows that
$\mathcal{V}(w \wproj_{\snd{\procB}{\procA}{\_}}) = \MsgVals(w \wproj_{\rcv{\procB}{\procA}{\_}}).\val'.u'$ and
$\xi(\procB,\procA)  =  \val'.u'$, i.e. $m'$ is at the head of the channel between $\procB$ and $\procA$. 
This contradicts the assumption that $m$ is at the head of $\xi(\procA,\procB)$.

\noindent
\textit{Case:} $\procC \neq \procB$.

We discharge this case by showing a contradiction to Receive Validity.
We instantiate Receive Validity with
$\vec{s}_\procA \xrightarrow{x} \pvec{s}'_\procA$
to obtain
\[
	\forall~\vec{s}_\procA \xrightarrow{\rcv{\procB_2}{\procA}{\val_2}} s_2 \in \delta_\procA.~
	\procB \neq \procB_2
	\; \implies \;
	\forall~G_2 \in \transAnnoFunDest(\vec{s}_\procA \xrightarrow{\rcv{\procB_2}{\procA}{\val_2}} s_2). \;
	\snd{\procB}{\procA}{\val} \notin \semavail^{\procA}_{(G_2 \ldots)}\enspace.
\]
We prove the negation, stated as follows: 
\[
	\procB \neq \procC
	\land
	\exists~s_2 \in Q_\procA, G_2 \in \transAnnoFunDest(\vec{s}_\procA \xrightarrow{\rcv{\procC}{\procA}{\val'}} s_2). \;
	\snd{\procB}{\procA}{\val} \in \semavail^{\procA}_{(G_2 \ldots)}\enspace.
\]
The left conjunct follows immediately.
From the existence of $\run'$ and Lemma \ref{lm:languages-of-roles}, there exists an $s_2$ such that 
$\vec{s}_\procA \xrightarrow{\rcv{\procC}{\procA}{\val'}} s_2 \in \delta_\procA$. 
The fact that 
$G'_{post} \in \transAnnoFunDest(\vec{s}_\procA \xrightarrow{\rcv{\procC}{\procA}{\val'}} s_2)$ is trivial from the unique splitting of $\rho'$ for $\procA$ matching $w$:
\[
\run' = \alpha' \cdot G'_{\mathit{pre}} \xrightarrow{\procC \xrightarrow{} \procA:\val'} G'_{\mathit{post}} \cdot \beta'\enspace.
\]

Therefore, all that remains is to show that 
$\snd{\procB}{\procA}{\val} \in \semavail^{\procA}_{(G'_{\mathit{post}} \ldots)}$.
Because 
$\run' \in \globcomplocal{\GG}{\procB}{w}$ 
and 
$\alpha' \cdot G'_{\mathit{pre}} \xrightarrow{l'} G'_{post} \leq \run'_\procB$,
where $\procB$ is not the active role in $l'$, 
there must exist a transition labeled $\procB \xrightarrow{} \procA: m$ 
that occurs in the suffix $G'_{\mathit{post}} \cdot \beta'$ of $\run'$.
Let 
$G_0 \xrightarrow{\procB \xrightarrow{} \procA: m} G_0'$
be the earliest occurrence of such a transition in the suffix, then:
\[
	\run'_\procB = \alpha' \cdot G'_{\mathit{pre}} \xrightarrow{l'} G'_{\mathit{post}} \ldots G_0 \xrightarrow{\procB \xrightarrow{} \procA: m} G_0'\dots\enspace.
\]
Note that $G_0$ must be a syntactic subterm of $G'_{\mathit{post}}$.
In order for 
$\snd{\procB}{\procA}{\val} \in \semavail^{\procA}_{(G'_{\mathit{post}} \ldots)}$. 
to hold, it suffices to show that 
$\procB \notin \blockedset$ 
in the recursive call to 
$\semavail^{\blockedset}_{(G_0 \dots)}$. 

We argue this from the definition of $\semavail$ and the fact that $\run'_\procA = \alpha' \cdot G'_{pre}$.
Suppose for the sake of contradiction that 
$\procB \in \blockedset$.
Because $\semavail$ only adds receivers of already blocked senders to $\blockedset$ and $\semavail^{\procA}_{(G'_{\mathit{post}} \ldots)}$ starts with $\blockedset=\{\procA\}$, there must exist a chain of message exchanges $\procD_{i+1} \xrightarrow{} \procD_i: \val_{i}$ in $G'_{\mathit{post}}$ with $1 \leq i < n$, $\procA=\procD_{n}$, and $\procB=\procD_1$. That is, $G'_{\mathit{post}} \cdot \beta'$ must be of the form
\[
  G'_{\mathit{post}} \dots G_{n-1} \xrightarrow{\procA \xrightarrow{} \procD_{n-1}: \val_{n-1}} G_{n-1}' \ldots G_{1} \xrightarrow{\procD_{2} \xrightarrow{} \procB: \val_{1}} G_{1}' \ldots G_{0} \xrightarrow{\procB \xrightarrow{} \procA: m} G_{0}' \ldots\enspace.
\]
Let $\val_0=\val$ and $\procD_0=\procA$. We show by induction over $i$ that for all $i \in [1,n]$
\[
  \alpha' \cdot G'_{\mathit{pre}} \xrightarrow{l'} G'_{\mathit{post}} \dots G_{i} \xrightarrow{\procD_i \xrightarrow{} \procD_{i-1}: \val_{i-1}} G_{i}' ~\preforder~ \run'_{\procD_{i}}\enspace.
\]
We then obtain the desired contradiction with the fact that $\run'_{\procD_n} = \run'_{\procA} =  \alpha' \cdot G'_{\mathit{pre}}$. 

The base case of the induction follows immediately from the construction. For the induction step, assume that
\[
  \alpha' \cdot G'_{\mathit{pre}} \xrightarrow{l'} G'_{\mathit{post}} \dots G_{i} \xrightarrow{\procD_i \xrightarrow{} \procD_{i-1}: \val_{i-1}} G_{i}' ~\preforder~ \run'_{\procD_{i}}\enspace.
\]
From the definition of $\run'_{\procD_i}$ and the fact that $\procD_{i}$ is the active role in $\rcv{\procD_{i+1}}{\procD_{i}}{\val_i}$, it follows that $\rcv{\procD_{i+1}}{\procD_{i}}{\val_i} \in w$. Hence, we must also have $\snd{\procD_{i+1}}{\procD_{i}}{\val_i} \in w$. Since $\procD_{i+1}$ is the active role in $\snd{\procD_{i+1}}{\procD_{i}}{\val_i}$, we can conclude
\[
  \alpha' \cdot G'_{\mathit{pre}} \xrightarrow{l'} G'_{\mathit{post}} \dots G_{i} \xrightarrow{\procD_{i+1} \xrightarrow{} \procD_{i}: \val_{i}} G_{i+1}' ~\preforder~ \run'_{\procD_{i+1}}\enspace.
\]

\proofend

\end{proof}

\sndPrefixPreservation*

\begin{proof}
	Let $x$ = $\snd{\procA}{\procB}{\val}$.
	We prove the claim by induction on the length of $w$.

	\paragraph{\textbf{Base Case.}} $w = \emptystring$.
	By definition, $I(\emptystring)$ contains all maximal runs in $\semglobal(\GG)$, and the unique splitting prefix of any run $\run \in I(\emptystring)$ for $\procA$ with respect to $\emptystring$ is $\emptystring$.
	Because $\emptystring$ is a prefix of any run, we need only show the non-emptiness of $I(x)$.
	By \ref{lm:languages-of-roles}, $\lang(\GG) \wproj_{\Alphabet_{\procA}} = \lang(\subsetproj{\GG}{\procA})$. 
	Because $x$ is the prefix of a word in $\lang(\subsetproj{\GG}{\procA})$, there exists $w' \in \lang(\GG)$ such that 
	$x \leq w' \wproj_{\Alphabet_{\procA}}$. 
	By the semantics of $\lang(\GG)$, there exists a run $\run' \in \semglobal(\GG)$ such that 
	$x$ is the first symbol in $\SyncToAsync(\trace(\run')) \wproj_{\Alphabet_{\procA}}$, and therefore
	$\run' \in I(x)$. 

	\paragraph{\textbf{Induction Step.}} Let $wx$ be an extension of $w$ by $x \in \Alphabet_!$.

Let $\run$ be a run in $I(w)$, and let $\alpha \cdot G \xrightarrow{l} G' \cdot \beta$ be the unique splitting of $\run$ for role $\procA$ with respect to $w$.
	To re-establish the induction hypothesis, we need to show the existence of a run $\bar \run$ in $I(wx)$ such that 
	$\alpha \cdot G \leq \bar \run$.
	Since $\procA$ is the active role in $x$, it holds for any $\procC \neq \procA$ that
	$\globcomplocal{\GG}{\procC}{w} = \globcomplocal{\GG}{\procC}{wx}$.
	Therefore, to prove the existential claim, it suffices to construct a run $\bar \run$ that satisfies:
	\begin{enumerate}
		\item \label{claim:soundness-snd-case-in-extension-run-set}
		 $\bar \run \in \globcomplocal{\GG}{\procA}{wx}$,
		\item \label{claim:soundness-snd-case-in-original-run-set}
		$\bar \run \in I(w)$, and 
		\item \label{claim:soundness-snd-case-correct-prefix}
		$\alpha \cdot G \leq \bar \run$.
	\end{enumerate}

	In the case that $l \wproj_{\Alphabet_{\procA}} = x$, we
        are done:
        Property~\ref{claim:soundness-snd-case-correct-prefix} and \ref{claim:soundness-snd-case-in-original-run-set} hold by construction, and Property~\ref{claim:soundness-snd-case-in-extension-run-set} holds by the definition of possible run sets.
	In the case that $l \wproj_{\Alphabet_{\procA}} \neq x$, we show the existence of a transition label and state $\xrightarrow{\bar{l}} \bar{G'}$, and a maximal suffix $\bar \beta$ such that $\alpha \cdot G \xrightarrow{\bar{l}} \bar{G'} \cdot \bar{\beta}$ satisfies all three conditions.
	
	Let ($\vec{s}_w, \xi_w)$ denote the CSM configuration reached on $w$:
	$(\vec{s}_0, \xi_0) \xrightarrow{w} \mathrel{\vphantom{\to}^*} (\vec{s}_w, \xi_w) $ 
	Send Validity states that every transition in $\vec{s}_{w,\procA}$ originates in all global states in $\vec{s}_{w,\procA}$.
	By assumption, $\snd{\procA}{\procB}{\val}$ is a transition in $\vec{s}_{w,\procA}$.
	By Proposition \ref{prop:correspondence-unique-splittings-local-states}, $\rho \in I \subseteq \globcomplocal{\GG}{\procA}{w}$, and therefore $G \in \vec{s}_{w,\procA}$.
	Therefore, Send Validity gives the existence of some $\bar G' \in Q_{\semglobal(\GG)}$ such that $G \xrightarrow{\procA \xrightarrow{} \procB:m} \bar G' \in \delta_{\semglobal(\GG)}$.
	Because $\alpha \cdot G$ is a run in $\semglobal(\GG)$ and $G \xrightarrow{\procA \xrightarrow{} \procB:m} \bar G'$ is a transition in $\semglobal(\GG)$, $\alpha \cdot G \xrightarrow{\procA \xrightarrow{} \procB:m} \bar G'$ is a run in $\semglobal(\GG)$.

	The construction thus far satisfies Property~\ref{claim:soundness-snd-case-in-extension-run-set} and \ref{claim:soundness-snd-case-correct-prefix} regardless of our choice of maximal suffix: for all choices of $\bar \beta$ such that $\alpha \cdot G \xrightarrow{\procA \xrightarrow{} \procB:m} \bar{G'} \cdot \bar{\beta}$ is a maximal run, $wx \wproj_{\Alphabet_\procA} \leq  \SyncToAsync(\trace(\alpha \cdot G \xrightarrow{\procA \xrightarrow{} \procB:m} \bar G' \cdot \bar \beta)) \wproj_{\Alphabet_\procA}$ and $\alpha \cdot G \leq \alpha \cdot G \xrightarrow{\procA \xrightarrow{} \procB:m} \bar G' \cdot \bar \beta$.

	Property~\ref{claim:soundness-snd-case-in-original-run-set}, however, requires that the projection of $w$ onto each role is consistent with $\bar \run$, and this cannot be ensured by the prefix alone.

	We construct the remainder of $\bar \run$ by picking an arbitrary maximal suffix to form a candidate run, and iteratively performing suffix replacements on the candidate run until it lands in $I$.
	Let $\bar \beta$ be a run suffix such that $\alpha \cdot G  \xrightarrow{\procA \xrightarrow{} \procB:m} \bar{G'} \cdot \bar \beta$ is a maximal run in $\semglobal(\GG)$.
	Let $\run_c$ denote our candidate run $\alpha \cdot G  \xrightarrow{\procA \xrightarrow{} \procB:m} \bar{G'} \cdot \bar \beta$.
	If $\rho_c \in I$, we are done.
	Otherwise, $\run_c \notin I$ and there exists a non-empty set of processes $\mathcal{S} \subseteq \Procs$ such that for each $\procC \in \mathcal{S}$,
	\begin{align}\label{eq:not-prefix-of-rho-c}
		w \wproj_{\Alphabet_\procC} \nleq \SyncToAsync(\trace(\run_c)) \wproj _{\Alphabet_\procC}\enspace.
	\end{align}

	By the fact that $\rho \in I$,
	\begin{align}\label{eq:prefix-of-rho}
		w \wproj_{\Alphabet_\procC} \leq \SyncToAsync(\trace(\run)) \wproj _{\Alphabet_\procC}\enspace.
	\end{align}
	We can rewrite (\ref{eq:not-prefix-of-rho-c}) and (\ref{eq:prefix-of-rho}) above as:
	\begin{align}
          \label{eq:not-prefix-of-rho-c-expanded}
		w \wproj_{\Alphabet_{\procC}} &\nleq \SyncToAsync(\trace(\alpha \cdot G  \xrightarrow{\procA \xrightarrow{} \procB:m} \bar{G'} \cdot \bar \beta)) \wproj _{\Alphabet_\procC} \\
		w \wproj_{\Alphabet_{\procC}} &\leq \SyncToAsync(\trace(\alpha \cdot G  \xrightarrow{l} G' \cdot \beta)) \wproj _{\Alphabet_\procC}\enspace.
          \label{eq:prefix-of-rho-expanded}
	\end{align}
	By the definition of unique splitting, $\procA$ is the active role in $l$. 
	By Lemma \ref{lm:languages-of-roles}, 
	$\lang(\GG) \wproj_{\Alphabet_\procA} = \lang(\subsetproj{\GG}{\procA})$, and because 
	$\SyncToAsync(\trace(\rho)) \in \lang(\GG)$, 
	it holds that 
	$\SyncToAsync(\trace(\rho)) \wproj_{\Alphabet_\procA} \in \lang(\GG) \wproj_{\Alphabet_\procA}$, 
	and
	$\SyncToAsync(\trace(\rho)) \wproj_{\Alphabet_\procA} \in \lang(\subsetproj{\GG}{\procA})$.
	By assumption, $\subsetproj{\GG}{\procA}$ is in state $\vec{s}_{w,\procA}$ upon consuming $w \wproj_{\Alphabet_{\procA}}$.
	Then, there must exist an outgoing transition from $\vec{s}_{w,\procA}$ labeled with $\SyncToAsync(l) \wproj_{\Alphabet_{\procA}}$.
	By No Mixed Choice (\cref{cor:no-mixed-choice}), all outgoing transitions from $\vec{s}_{w,\procA}$ must be send actions.
	Therefore, $l$ must be of the form $\msgFromTo{\procA}{\procB'}{\val'}$.
	By assumption, $\procB' \neq \procB \lor \val' \neq \val$.

	We can further rewrite (\ref{eq:not-prefix-of-rho-c-expanded}) and (\ref{eq:prefix-of-rho-expanded}) to make explicit their shared prefix:
	\begin{align}
          \label{eq:not-prefix-of-rho-c-shared}
		w \wproj_{\Alphabet_\procC} &\nleq (\SyncToAsync(\trace(\alpha \cdot G)).~\snd{\procA}{\procB}{\val}.~\rcv{\procA}{\procB}{\val}.~ \SyncToAsync(\trace(\bar \beta))) \wproj _{\Alphabet_\procC} \\
          w \wproj_{\Alphabet_\procC} &\leq (\SyncToAsync(\trace(\alpha \cdot G)).~\snd{\procA}{\procB'}{\val'}.~\rcv{\procA}{\procB'}{\val'}.~\SyncToAsync(\trace(\beta))) \wproj _{\Alphabet_\procC}                                        
          \label{eq:prefix-of-rho-shared}
	\end{align}

	It is clear that in order for both (\ref{eq:not-prefix-of-rho-c-shared}) and (\ref{eq:prefix-of-rho-shared}) to hold, it must be the case that
	$\SyncToAsync(\trace(\alpha \cdot G)) \wproj_{\Alphabet_\procC} \leq w \wproj_{\Alphabet_{\procC}}$.

	We formalize the point of disagreement between $w \wproj_{\Alphabet_\procC}$ and $\run_c$ using an index $i_\procC$ representing the position of the first disagreeing transition label in $\trace(\run_c)$:
	\[
	i_\procC \is \text{max}\{i \mid  \SyncToAsync(\trace(\run_c[0..i-1])) \wproj_{\Alphabet_{\procC}} \leq w \wproj_{\Alphabet_{\procC}} \}\enspace.
	\]
	Then, $\SyncToAsync(\trace(\run_c[i_\procC])) \wproj_{\Alphabet_{\procC}} \neq \emptystring$ and from (\ref{eq:not-prefix-of-rho-c-shared}) and (\ref{eq:prefix-of-rho-shared}) we know that $i_\procC > 2*|\SyncToAsync(\trace(\alpha \cdot G))|$.

	We identify the role in $\mathcal{S}$ with the \textit{earliest disagreement} in $\run_c$: let $\bar{\procC}$ be the role with the smallest $i_{\bar{\procC}}$ in $\mathcal{S}$.
	Let $y_{\bar \procC}$ denote $\SyncToAsync(\trace(\run_c[i_{\bar \procC}])) \wproj_{\Alphabet_{\bar \procC}}$.

	\paragraph{Claim.} $y_{\bar \procC}$ must be a send event.

	Assume by contradiction that $y_{\bar \procC}$ is a receive event.
	We identify the symbol in $w$ that disagrees with $y_{\bar \procC}$: let $w'$ be the largest prefix of $w$ such that $w' \wproj_{\Alphabet_{\bar \procC}} \leq \SyncToAsync(\trace(\run_c))$.
	By definition, $w' \wproj_{\Alphabet_{\bar \procC}} = \SyncToAsync(\trace(\run_c[0..i_{\bar \procC}-1])) \wproj_{\Alphabet_{\bar \procC}}$.
	Let $z$ be the next symbol following $w'$ in $w$; then $z \in \Alphabet_{\bar \procC}$ and $z \neq y_{\bar \procC}$.
	Furthermore, by No Mixed Choice (\ref{cor:no-mixed-choice}) we have that $z \in \Alphabet_?$. 

	By assumption,
	$w'z \nleq \SyncToAsync(\trace(\run_c[0..i_{\bar \procC}]))$.
	Therefore, any run that begins with $\run_c[0..i_{\bar \procC}]$ cannot be contained in $\globcomplocal{\GG}{\bar \procC}{w'z}$, or consequently in $I(w'z)$.
	We show however, that $I(w'z)$ must contain some runs that begin with $\run_c[0..i_{\bar \procC}]$.
	From Lemma \ref{lm:rcvIntersectionSetEquality} for traces $w'$ and $w' z$, we obtain that
	$I(w') = I(w'z)$.
	Therefore, it suffices to show that $I(w')$ contains runs that begin with $\run_c[0..i_{\bar \procC}]$.

	\paragraph{Claim} $\forall w'' \leq w'.\, I(w'')$ contains runs that begin with $\run_c[0..i_{\bar \procC}]$. 
	
	We prove the claim via induction on $w'$.

	The base case is trivial from the fact that $I(\emptystring)$ contains all maximal runs.

	For the inductive step, let $w''y \leq w'$.

	In the case that $y \in \Alphabet_?$, from Lemma \ref{lm:rcvIntersectionSetEquality} $I(w''y) = I(w'')$ and the witness from $I(w'')$ can be reused.

	In the case that $y \in \Alphabet_!$, let $\procD$ be the active role of $y$ and let $\run'$ be a run in $I(w'')$ beginning with $\run_c[0..i_{\bar \procC}]$ given by the inner induction hypothesis.
	Let $\alpha' \cdot G' \xrightarrow{l'} G'' \cdot \beta'$ be the unique splitting of $\run'$ for $\procD$ with respect to $w''$.
	If $\SyncToAsync(l') \wproj_{\Alphabet_\procD} = y$, then $\rho'$ can be used as the witness.
	Otherwise, $\SyncToAsync(l') \wproj_{\Alphabet_\procD} \neq y$, and $\rho' \notin \globcomplocal{\GG}{\procD}{w''y}$.

The outer induction hypothesis holds for all prefixes of $w$: we instantiate it with $w''$ and $y$ to obtain:
	\[
		\exists~\rho'' \in I(w''y).~\alpha' \cdot G' \leq \rho''\enspace.
	\]
	Let $i_\procD$ be defined as before; it follows that $\run'[i_\procD] = G'$.
	It must be the case that $i_\procD > i_{\bar \procC}$: if $i_\procD \leq i_{\bar \procC}$, because $\run_c$ and $\run'$ share a prefix $\run_c[0..i_{\bar \procC}]$ and $w''y \leq w$, $\procD$ would be the earliest disagreeing role instead of $\bar \procC$.

	Because $i_\procD > i_{\bar \procC}$, 
	$\run_c[0..i_{\bar \procC}] = \run'[0..i_{\bar \procC}] \leq \run'[0..i_\procD]$.
	Because $\run'[0..i_\procD] = \alpha' \cdot G' \leq \run''$,
        it follows from prefix transitivity that $\run_c[0..i_{\bar \procC}] \leq \run''$,
        thus re-establishing the induction hypothesis for $w''y$ with $\run''$ as a witness run that begins with $\run_c[0..i_{\bar \procC}]$.

	This concludes our proof that $I(w')$ contains runs that begin with $\run_c[0..i_{\bar \procC}]$, and in turn our proof by contradiction that $y_{\bar \procC}$ must be a receive event.

	We can rewrite candidate run $\run_c$ as follows:
	\[
	\run_c = G_0 \xrightarrow{l_0} G_1 \ldots G_{i_{\bar \procC}} \xrightarrow{l_{i_{\bar \procC}}} G_{i_{\bar \procC}+1} \ldots\enspace.
	\]
	We have established that $l_{i_{\bar \procC}}$ must be a send event for $\bar \procC$.
	We can reason from Send Validity similarly to our construction of $\bar \run$'s prefix above, and conclude that there exists a transition label and maximal suffix from $G_{i_{\bar \procC}}$ such that the resulting run no longer disagrees with $w \wproj_{\Alphabet_{\bar \procC}}$.
	We update our candidate run $\run_c$ with the correct transition label and maximal suffix, update the set of states $\mathcal{S} \in \Procs$ to the new set of roles that disagree with the new candidate run, and repeat the construction above on the new candidate run until $\mathcal{S}$ is empty.

	Termination is guaranteed in at most $|w|$ steps by the fact that the number of symbols in $w$ that agree with the candidate run up to $i_{\bar \procC}$ must increase.

	Upon termination, the resulting $\bar \run$ satisfies the final remaining property \ref{claim:soundness-snd-case-correct-prefix}: $\bar \run \in I$.
	This concludes the proof of the inductive step, and consequently the proof of the prefix-preservation of send transitions.
\proofend
\end{proof}

\intersNonempty*

\begin{proof}
We prove the claim by induction on the length of $w$.

\paragraph{\textbf{Base Case.}} $w = \emptystring$.
The trace $w = \emptystring$ is trivially consistent with all maximal runs, and $I(w)$ therefore contains all maximal runs. 
By definition of $\GG$, language $\lang(\GG)$ is non-empty and at least one maximal run exists.
Thus, $I(w)$ is non-empty.

\paragraph{\textbf{Induction Step.}} Let $wx$ be an extension of $w$ by $x \in \AlphAsync$.

The induction hypothesis states that 
$I(w) \neq \emptyset$. 
To re-establish the induction hypothesis, we need to show
$I(wx) \neq \emptyset$. 
We proceed by case analysis on whether $x$ is a receive or send event.

\paragraph{Receive Case.} Let $x$ = $\rcv{\procB}{\procA}{\val}$.
By \cref{lm:rcvIntersectionSetEquality}, $I(wx) = I(w)$. $I(wx) \neq \emptyset$ follows trivially from the induction hypothesis and this equality.

\paragraph{Send Case.} Let $x$ = $\snd{\procA}{\procB}{\val}$.
By \cref{lm:sndPrefixPreservation}, there exists a run in $I(wx)$ that shares a prefix with a run in $I(w)$. $I(wx) \neq \emptyset$ again follows trivially.
\proofend
\end{proof}

\begin{lemma}
	\label{lm:terminatedEntailsGTComplete}
	Let $\GG$ be a global type and $\CSMl{\subsetcons{\GG}{\procA}}$ be the subset construction.
	Let $w$ be a trace of $\CSMl{\subsetcons{\GG}{\procA}}$. If $w$ is \terminated, then $w$ is \gtcomplete{\GG}.
\end{lemma}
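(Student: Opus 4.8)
Here is a plan for proving \cref{lm:terminatedEntailsGTComplete}.

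The plan is to prove the statement directly: assuming $w$ is terminated and $\run \in I(w) \is \Inters_{\procC \in \Procs}\globcomplocal{\GG}{\procC}{w}$, I show $w \wproj_{\Alphabet_\procC} = \SyncToAsync(\trace(\run))\wproj_{\Alphabet_\procC}$ for every role $\procC$. Since $\run \in \globcomplocal{\GG}{\procC}{w}$ already gives $w \wproj_{\Alphabet_\procC} \preforder \SyncToAsync(\trace(\run))\wproj_{\Alphabet_\procC}$, and $w$ is finite (being terminated it reaches a final configuration), it suffices to rule out a strict prefix for any $\procC$; note that if $\run$ were infinite some role would be active infinitely often, forcing a strict prefix, so this also shows no run in $I(w)$ is infinite. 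I argue by contradiction: suppose some role has a strict prefix. For each such role $\procC$, form the unique splitting $\run = \alpha \cdot G \xrightarrow{l} G' \cdot \beta$ of $\run$ for $\procC$ matching $w$ (it exists precisely because the prefix is strict), and call the position of $l$ in $\trace(\run)$ the \emph{divergence point} of $\procC$. Among all roles with a strict prefix, pick one with minimal divergence point and let $l = \msgFromTo{\procB}{\procD}{\val}$ be the associated transition of $\run$. The minimizing role is active in $l$, hence is $\procB$ or $\procD$; and from the analysis of unique splittings, a role that diverges exactly at $l$ has its $w$-projection equal to its projection of the prefix ending in $G$ — intuitively, it has "not yet performed" its $l$-event in $w$, while all of $\run$'s events strictly before $l$ are reflected in $w$ for every role (by minimality).

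I would then split into three cases on which of $\procB$, $\procD$ diverge at $l$, using that the send $\snd{\procB}{\procD}{\val}$ and the matching receive $\rcv{\procB}{\procD}{\val}$ produced by $l$ are the $q$-th send and $q$-th receive on channel $(\channel{\procB}{\procD})$ along $\run$ for the same $q$ (the semantics pairs each global action's send with its receive in FIFO order). \textbf{(a) Only $\procD$ diverges at $l$:} then $\procB$'s $w$-projection extends past $l$, so $w$ contains that $q$-th send but not the $q$-th receive, whence $\card{\MsgVals(w\wproj_{\snd{\procB}{\procD}{\_}})} > \card{\MsgVals(w\wproj_{\rcv{\procB}{\procD}{\_}})}$; by the channel-contents characterization of Majumdar et al.~\cite[Lemmas~19--20]{DBLP:conf/concur/MajumdarMSZ21}, channel $(\channel{\procB}{\procD})$ is non-empty in the configuration reached by $w$ — contradicting that $w$ is terminated. \textbf{(b) Only $\procB$ diverges at $l$:} symmetrically $w$ contains the $q$-th receive but not the $q$-th send, so $\card{\MsgVals(w\wproj_{\rcv{\procB}{\procD}{\_}})} > \card{\MsgVals(w\wproj_{\snd{\procB}{\procD}{\_}})}$, contradicting channel-compliance of CSM traces. \textbf{(c) Both $\procB$ and $\procD$ diverge at $l$:} by \cref{prop:correspondence-unique-splittings-local-states} applied to $\procB$, the subterm $G$ belongs to the local state $\vec{s}_\procB$ that $\CSMl{\subsetcons{\GG}{\procA}}$ reaches on $w$; since $G$ has an outgoing labeled transition $l$ with sender $\procB$, the state $\vec{s}_\procB$ has an outgoing send transition in $\subsetcons{\GG}{\procB}$. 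But $w$ terminated means the reached configuration is final, so $\vec{s}_\procB \in F_\procB$, and a final state has no outgoing send transition by \cref{prop:final-state-outgoing-rcv} (equivalently, Send Validity fails since $0 \in \vec{s}_\procB$ has no such transition) — contradiction. Every case is contradictory, so no role has a strict prefix and $w$ is $\GG$-complete.

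I expect case (c) to be the crux: it is the only place that uses the validity conditions — the channel arguments in (a) and (b) go through for the bare subset construction — and it is exactly the configuration in which, absent Send Validity, a role could be in a final state while still owing a send obligation along some run consistent with $w$; so the statement really hinges on $\CSMl{\subsetcons{\GG}{\procA}}$ being in the domain of the subset projection (or on reading ``terminated'' as reaching a final configuration with no outgoing transitions, which dispatches (c) immediately). A secondary point requiring care is the bookkeeping that links a role's divergence point to the matched send/receive pair stemming from $l$ and to the channel contents via the half-duplex and channel-compliance lemmas of~\cite{DBLP:conf/concur/MajumdarMSZ21}; one must be precise that a role diverging exactly at $l$ has consumed, in $w$, every event of $\run$ strictly before $l$ and none of $\run$'s $l$-event.
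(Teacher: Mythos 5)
Your overall strategy (pick the earliest‑diverging role along a run in $I(w)$, then case‑split on whether the pending transition is a send or a receive for that role, using the FIFO/channel‑compliance lemmas of \cite{DBLP:conf/concur/MajumdarMSZ21}) is essentially the paper's argument, and your case (b) is dispatched correctly. The genuine gap is your reading of \terminated: in this paper a \terminated trace is one whose reached configuration has \emph{no outgoing transitions} (it is stuck), not one that reaches a final configuration with empty channels. Indeed, the entire purpose of this lemma in the proof of \cref{thm:soundness} is to derive ``stuck $\Rightarrow$ \gtcomplete{\GG} $\Rightarrow$ final'' and thereby obtain deadlock freedom; if \terminated already meant ``final'', the lemma could not be used that way. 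Under the correct reading, your contradiction in case (a) (``the channel $\channel{\procB}{\procD}$ is non‑empty, contradicting terminated'') does not close the case: a stuck configuration may well have non‑empty channels. What you must conclude instead is that the receive is \emph{enabled} — $\val$ sits at the head of $\xi(\channel{\procB}{\procD})$ by your $q$-th‑send/$q$-th‑receive count, and $\procD$'s local state has an outgoing $\rcv{\procB}{\procD}{\val}$‑transition by \cref{prop:correspondence-unique-splittings-local-states} — so the configuration has a successor, contradicting stuckness.

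The same misreading derails case (c) more seriously. You argue that $\vec{s}_\procB$ is final and then invoke \cref{prop:final-state-outgoing-rcv}; but that proposition is about the subset \emph{projection} and needs Send Validity, whereas this lemma is stated for the bare subset construction $\subsetcons{\GG}{\procA}$ (it is applied in the soundness proof \emph{before} the validity conditions are brought to bear, and it must also hold there), and ``terminated $\Rightarrow$ final'' is exactly what the lemma is meant to help establish, so assuming it is circular. The correct and much simpler resolution is the paper's: when the sender $\procB$ diverges at $l$, its local state contains $G$ and hence has an outgoing send transition labeled $\snd{\procB}{\procD}{\val}$; send transitions are always enabled in a CSM, so again the configuration is not stuck. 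No validity condition and no finality argument are needed anywhere in the proof.
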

\begin{proof}
	We prove the claim by contraposition and assume that $w$ is not \gtcomplete{\GG}.
	Then, there exists a run $\run \in I(w)$ and a non-empty set of roles $\mathcal{S}$ such that for every $\procC \in \mathcal{S}$, it holds that
	$
	w \wproj_{\Alphabet_\procC} \neq
	\bigl(
	\SyncToAsync(\trace(\run))
	\bigr)
	\wproj_{\Alphabet_\procC}
	$ (*).
	Since $w$ is a trace, we know there exists a run
	$(\vec{s}_0, \xi_0) \xrightarrow{w_0}
	\ldots
	\xrightarrow{w_{n-1}} (\vec{s}_n, \xi_n)$
	of $\CSMl{\subsetcons{\GG}{\procA}}$
	such that
	$w = w_0 \ldots w_{n-1}$.
	We need to show that there exists $(\vec{s}_{n+1}, \xi_{n+1})$ with
	$(\vec{s}_n, \xi_n) \xrightarrow{w_n} (\vec{s}_{n+1}, \xi_{n+1})$ for some $w_n$.
Given some role $\procA$, let $\run_\procA$ denote the largest prefix of $\run$ that contains $\procA$'s local view of $w$. Formally,
	\[
	\run_\procA = max\{\run'~|~\run' \leq \run ~\land~ 
	\SyncToAsync(\trace(\run'))
	\wproj_{\Alphabet_\procA} = w\wproj_{\Alphabet_\procA}
	\}\enspace.
	\]
	Note that due to maximality, the next transition in $\run$ after $\run_\procA$ must have $\procA$ as its active role. 
	Let $\procB$ be the role in $\mathcal{S}$ for whom $\run_\procB$ is the smallest.
	From \cref{lm:projection-preserves-per-process-runs} and (*), it follows that $\vec{s}_{n, \procB}$ has outgoing transitions.
	If $q_{n, \procB}$ has outgoing send transitions, then $(q_{n+1}, \xi_{n+1})$ exists trivially. 
	If $q_{n, \procB}$ has outgoing receive transitions, it must be the case that the next transition in $\run$ after $\run_\procB$ is of the form $\msgFromTo{\procA}{\procB}{\val}$ for some $\procA$ and $\val$.
	From the fact that $\procB$ is the role with the smallest $\run_\procB$, we know that $\run_\procB < \run_\procA$, and from the FIFO property of CSM channels it follows that $\val$ is in $\xi_n(\procA,\procB)$. Then, the receive transition is enabled for $\procB$, and there exists $(\vec{s}_{n+1}, \xi_{n+1})$ with 
	$(\vec{s}_{n}, \xi_{n}) \xrightarrow{\rcv{\procA}{\procA}{\val}} (\vec{s}_{n+1}, \xi_{n+1})$. 
	This shows that $w = w_1 \ldots w_{n-1}$ is not terminated and concludes the proof.
\proofend
\end{proof}

\begin{lemma}
	\label{lm:gtcompleteEntailsLangMemb}
	Let $\GG$ be a global type and $\CSMl{\subsetproj{\GG}{\procA}}$ be the subset projection.
	Let $w$ be a trace of $\CSMl{\subsetproj{\GG}{\procA}}$.
	If $w$ is \gtcomplete{\GG}, then $w \in \lang(\GG)$.
\end{lemma}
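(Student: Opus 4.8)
The plan is to extract from $\GG$-completeness a single global run whose split linearization agrees with $w$ on the local view of every role, and then to use the fact that \channelcompliant words are determined up to $\interswap$ by their per-role projections. Concretely, since $w$ is a trace of $\CSMl{\subsetproj{\GG}{\procA}}$, I would first invoke \cref{lm:intersNonempty} to get that $I(w) = \Inters_{\procC \in \Procs}\globcomplocal{\GG}{\procC}{w}$ is non-empty; fix some $\run \in I(w)$ and set $w' \is \SyncToAsync(\trace(\run))$. Because $w$ is $\GG$-complete, for every role $\procC \in \Procs$ we have $w\wproj_{\Alphabet_\procC} = w'\wproj_{\Alphabet_\procC}$. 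Each letter of $w'$ occurs in exactly one role's projection (a send in its sender's alphabet, a receive in its receiver's), so $w'$ is finite iff all $w'\wproj_{\Alphabet_\procC}$ are finite iff $w$ is finite; in particular $\run$ is finite exactly when $w$ is. If $w$ is finite then $\run$ is a finite maximal run and hence ends in a final state, so $\trace(\run) \in \lang(\semglobalsync(\GG))$; if $w$ is infinite then $\trace(\run) \in \lang(\semglobalsync(\GG))$ as well. Either way $w' \in \SyncToAsync(\lang(\semglobalsync(\GG)))$.

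It then remains to show $w \interswap w'$ when $w$ is finite and $w \preceq_{\interswap}^{\omega} w'$ when $w$ is infinite, since in both cases this places $w$ in $\interswaplang(\SyncToAsync(\lang(\semglobalsync(\GG)))) = \lang(\GG)$. Both words are \channelcompliant: $w$ because every trace of a CSM is (a receive consumes a message previously appended to its FIFO channel, cf.\ \cite{DBLP:conf/concur/MajumdarMSZ21}), and $w'$ because $\SyncToAsync$ emits each $\snd{\procA}{\procB}{\val}$ immediately before its matching $\rcv{\procA}{\procB}{\val}$. In the finite case, equality of per-role projections forces $w$ and $w'$ to carry, for every channel, identical sent and identical received message sequences, hence (with channel-compliance) the same fully matched set of events in the same per-role order — the same MSC — so they are $\interswap$-equivalent. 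In the infinite case I would argue prefix by prefix: given a finite prefix $u \leq w$, each $u\wproj_{\Alphabet_\procC}$ is a finite prefix of $w'\wproj_{\Alphabet_\procC}$, so, there being finitely many roles, some finite prefix $w''$ of $w'$ satisfies $u\wproj_{\Alphabet_\procC} \preforder w''\wproj_{\Alphabet_\procC}$ for all $\procC$; then $u$ and $w''$ are \channelcompliant with $u\wproj_{\Alphabet_\procC} \preforder w''\wproj_{\Alphabet_\procC}$ for every $\procC$, which yields $u \preceq_{\interswap} w''$, and therefore $w \preceq_{\interswap}^{\omega} w'$.

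The main obstacle is this last ingredient: that for \channelcompliant words $u$ and $v$ with $u\wproj_{\Alphabet_\procC} \preforder v\wproj_{\Alphabet_\procC}$ for all $\procC \in \Procs$ one has $u \preceq_{\interswap} v$ (and, when the projections are equal and all messages are matched, $u \interswap v$). This is the standard correspondence between FIFO linearizations and message sequence charts underlying the development of \cite{DBLP:conf/concur/MajumdarMSZ21}, and I would either cite it or prove it directly by an induction that repeatedly moves the first letter of $u$ that appears ``too late'' in $v$ towards the front using the swap rules~(1)--(4) defining $\interswap$ — this is always possible because any letter of $v$ preceding and blocking it would be dependent on it and hence, by the hypothesis on projections together with channel-compliance, already present before it in $u$.
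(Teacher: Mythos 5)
Your proposal is correct and follows essentially the same route as the paper's proof: both fix a run $\run$ in the non-empty set $I(w)$ (via \cref{lm:intersNonempty}), set $w' = \SyncToAsync(\trace(\run))$, use \gtcomplete{\GG}ness to equate all per-role projections, observe that both $w$ and $w'$ are \channelcompliant, and conclude $w \interswap w'$ (resp.\ $w \preceq_\interswap^\omega w'$) so that $w \in \lang(\GG)$ by closure under $\interswap$. The only difference is presentational: the paper discharges the final step by directly citing Lemma~23 of Majumdar et al.~\cite{DBLP:conf/concur/MajumdarMSZ21} (that for \channelcompliant words, equal projections characterize $\interswap$-equivalence), whereas you make the finite/infinite case split explicit and sketch a direct proof of that correspondence, which is the same standard MSC-linearization argument.
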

\begin{proof}
	By definition of $w$ being \gtcomplete{\GG}, 
	\[
		\forall \procA\in \Procs,~\run \in I(w).~
		w \wproj_{\Alphabet_\procA} =
		\bigl(
		\SyncToAsync(\trace(\run))
		\bigr)
		\wproj_{\Alphabet_\procA}\enspace.
	\]
	From \cref{lm:intersNonempty}, 
	$I(w)$ is non-empty.
	Let $\run$ be a run in $I(w)$, and let 
	$w' = \SyncToAsync(\trace(\run)) \in \lang(\GG)$.
	By the semantics of $\lang(\GG)$, $\lang(\GG)$ is closed under the $\interswap$ relation, and thus it suffices to show that $w \interswap w'$.
	\cite[Lemma~23]{DBLP:conf/concur/MajumdarMSZ21} states that if $w$ is channel-compliant \cite[Definition~19]{DBLP:conf/concur/MajumdarMSZ21}, then 
	$w \interswap w'$ iff $w'$ is \channelcompliant and forall $\procA \in \Procs$,
	$w \wproj_{\Alphabet_\procA} = w' \wproj_{\Alphabet_\procA}$.
	The fact that $w$ is channel-compliant follows from \cite[Lemma~20]{DBLP:conf/concur/MajumdarMSZ21} and $w$ being a a CSM trace; 
	$w'$ is channel-compliant by construction, and the last condition is satisfied by assumption that $w$ is \gtcomplete{\GG} and by definition of $w'$. 
	Thus, we conclude that $w \interswap w'$. 
\proofend
\end{proof}

\soundnessTheorem*

\begin{proof}
First, we show that $\CSMl{\subsetproj{\GG}{\procA}}$ is deadlock-free, namely, that every finite trace extends to a maximal trace. 
Let $w$ be a trace of $\CSMl{\subsetproj{\GG}{\procA}}$. 
Let $w'$ denote the extension of $w$. 
If $w' \in \AlphAsync^\omega$, then $w'$ is maximal and we are done. 
Otherwise, we have $w' \in \AlphAsync^*$. Let $(\pvec{s}',\xi')$ denote the $\CSMl{\subsetproj{\GG}{\procA}}$ configuration reached on $w'$. 
By definition of $w'$ being the largest extension, $w'$ is a terminated trace, and there exists no configuration reachable from $(\pvec{s}',\xi')$. 
By \cref{lm:terminatedEntailsGTComplete}, $w'$ is \gtcomplete{\GG}. 
By \cref{lm:gtcompleteEntailsLangMemb}, $w' \in \lang(\GG)$. 
Therefore, all states in $\pvec{s}'$ are final and all channels in $\xi$ are empty, and $w'$ is a maximal trace in $\CSMl{\subsetproj{\GG}{\procA}}$. 

This concludes our proof that $\CSMl{\subsetproj{\GG}{\procA}}$ is deadlock-free. 

Next, we show that 
$\lang(\CSMl{\subsetproj{\GG}{\procA}}) = \lang(\GG)$. 
The backward direction, $\lang(\GG) \subseteq \lang(\CSMl{\subsetproj{\GG}{\procA}})$, is given by \cref{lm:constructionPreservesBehaviors}.
For the forward direction, let $w \in \lang(\CSMl{\subsetproj{\GG}{\procA}})$, and let $(\vec{s},\xi)$ denote the configuration reached on $w$. 
We proceed by case analysis on whether $w$ is a finite or infinite maximal trace. 
\paragraph{Case:} $w \in \AlphAsync^*$. 
We show a stronger property: $w$ is a terminated trace. 
Then, we use \cref{lm:terminatedEntailsGTComplete} and \cref{lm:gtcompleteEntailsLangMemb} as above to obtain $w \in \lang(\GG)$. 
By definition of $(\vec{s},\xi)$ being final, all states in $\vec{s}$ are final and all channels in $\xi$ are empty. 
We argue there does not exist a configuration reachable from $(\vec{s},\xi)$. 
From \cref{prop:final-state-outgoing-rcv}, all outgoing states from states in $\vec{s}$ must be receive transitions. 
However, no receive transitions are enabled because all channels in $\xi$ are empty. 
Therefore, $(\vec{s},\xi)$ is a terminated configuration and $w$ is a terminated trace. 

\paragraph{Case:} $w \in \Alphabet^\omega$. 
By the semantics of $\lang(\GG)$, to show $w \in \lang(\GG)$ it suffices to show:
\[
	\exists w' \in \Alphabet^\omega.~
	w' \in \SyncToAsync(\lang(\semglobal(\GG))) 
	\land 
	w \preceq_\interswap^\omega w' \enspace.
\]
\paragraph{Claim.} $\Inters_{u \leq w} I(u)$ contains an infinite run. 

First, we show that there exists an infinite run in $\semglobal(\GG)$. 
We apply König's Lemma to an infinite tree where each vertex corresponds to a finite run. 
We obtain the vertex set from the intersection sets of $w$'s prefixes; each prefix ``contributes'' a set of finite runs.
Formally, for each prefix $u \leq w$, let $V_u$ be defined as: 
\[
	V_u \is \Union_{\run_u \in I(u)} \text{min}\{\run' \mid \run' \leq \run_u \land \forall \procA \in \Procs.~u \wproj_{\Alphabet_{\procA}} \leq \SyncToAsync(\trace(\run')) \wproj_{\Alphabet_{\procA}} \} \enspace.
\]
By \cref{lm:intersNonempty}, $V_u$ is guaranteed to be non-empty. 
We construct a tree $\mathcal{T}_w(V,E)$ with 
$V \is \Union_{u \leq w} V_u$ and 
$E \is \{(\run_1, \run_2) \mid \run_1 \leq \run_2\}$. 
The tree is rooted in the empty run, which is included $V$ by $V_\emptystring$. 
$V$ is infinite because there are infinitely many prefixes of $w$. 
$\mathcal{T}_w$ is finitely branching due to the finiteness of $\delta_\GG$ and the fact that each vertex represents a finite run.
Therefore, there must exist a ray in  $\mathcal{T}_w$ representing an infinite run in $\semglobal(\GG)$. 

Let $\run'$ be such an infinite run. 
We now show that $\run' \in \Inters_{u \leq w} I(u)$. 
Let $v$ be a prefix of $w$. 
To show that $\run' \in I(v)$, it suffices to show that one of the vertices in $V_v$ lies on $\run'$. In other words, 
\[
	V_v \inters \{v \mid v \in \run'\} \neq \emptyset \enspace.
\]
Assume by contradiction that $\run'$ passes through none of the vertices in $V_v$. 
Then, for any $u' \geq u$, because intersection sets are monotonically decreasing, it must be the case that $\run'$ passes through none of the vertices in $V_u'$.
Therefore, $\run'$ can only pass through vertices in $V_u''$, where $u'' \leq u$. 
However, the set $\Union_{u'' \leq u} V_u''$ has finite cardinality. 
We reach a contradiction, concluding our proof of the above claim. 

Let $\run' \in \Inters_{u \leq w} I(u)$, and let 
$w' = \SyncToAsync(\trace(\run'))$. 
It is clear that 
$w' \in \AlphAsync^\omega$ and $w' \in \SyncToAsync(\lang(\semglobal(\GG)))$.
It remains to show that 
$w \preceq_\interswap^\omega w'$. 
By the definition of $ \preceq_\interswap^\omega$, it further suffices to show that: 
\[
	\forall u \leq w,~
	\exists u' \leq w', v \in \Alphabet^*.~
	uv \interswap u' \enspace.
\]
Let $u$ be an arbitrary prefix of $w$. 
Because by definition $\run' \in I(u)$, it holds that 
$u \wproj_{\Alphabet_\procA} \leq \SyncToAsync(\trace(\run')) \wproj_{\Alphabet_\procA}$.

For each role $\procA \in \Procs$, let $\run'_\procA$ be defined as the largest prefix of $\run'$ such that 
$\SyncToAsync(\trace(\run'_\procA)) \wproj_{\Alphabet_\procA} = u \wproj_{\Alphabet_\procA}$. 
Such a run is well-defined by the fact that $u$ is a prefix of an infinite word $w$, and there exists a longer prefix $v$ such that $u \leq v$ and 
$v \wproj_{\Alphabet_\procA} \leq \SyncToAsync(\trace(\run')) \wproj_{\Alphabet_\procA}$.

Let $\procD$ be the role with the maximum $|\run'_\procD|$ in $\Procs$. 
Let $u' = \SyncToAsync(\trace(\run'_\procD))$. 
Clearly, $u' \leq w'$. 
Because $u'$ is $\SyncToAsync(\trace(\run'_\procD))$ for the role with the longest $\run'_\procD$, it holds for all roles $\procA \in \Procs$ that 
$u \wproj_{\Alphabet_\procA} \leq u' \wproj_{\Alphabet_\procA}$. 
Then, there must exist $y_\procA \in \Alphabet_\procA^*$ such that 
\[
	u \wproj_{\Alphabet_\procA} \cdot y_\procA = u' \wproj_{\Alphabet_\procA}\enspace.
\]
Let $y_\procA$ be defined in this way for each role. 
We construct $v \in \Alphabet^*$ such that $uv \interswap u'$. 
Let $v$ be initialized with $\emptystring$.
If there exists some role in $\Procs$ such that $y_\procA[0] \in \Alphabet_{\procA,!}$, append $y_\procA$ to $v$ and update $y_\procA$. 
If not, for all roles $\procA\in \Procs$, $y_\procA[0] \in \Alphabet_{\procA,?}$. 
Each symbol $y_\procA[0]$ for all roles appears in $u'$. 
Let $i_\procA$ denote for each role the index in $u'$ such that $u'[i] = y_\procA[0]$.
Let $\procC$ be the role with the minimum index $i_\procC$. 
Append $y_\procC$ to $v$ and update $y_\procC$. 
Termination is guaranteed by the strictly decreasing measure of $\sum_{\procA \in \Procs} |y_\procA|$. 

We argue that $uv$ satisfies the inductive invariant of channel compliancy. 
In the case where $v$ is extended with a send action, channel compliancy is trivially re-established. 
In the receive case, channel compliancy is re-established by the fact that the append order for receive actions follows that in $u'$, which is channel-compliant by construction.
We conclude that $uv \interswap u'$ by applying \cite[Lemma~22]{DBLP:conf/concur/MajumdarMSZ21}.  
\proofend
\end{proof}

     \section{Additional Material for \cref{sec:completeness}}
\label{app:completeness}

\begin{lemma}
\label{lm:run-for-subterm-and-state}
Let $\procA$ be a role, $\GG$ be a global type,
$G'$ be a syntactic subterm of~$\GG$, and
$\CSMl{ \subsetcons{\GG}{\procA} }$
be its subset construction.
Let $s$ be some state in $Q_{\procA}$ with $G' \in s$.
Then, there is a run $\run_{\GG}$ in $\semglobalsync(\GG)$ ending in state $q_G'$,
i.e.\
\[
\run_{\GG} = q_{0, \GG} \xrightarrow{\trace(\rho_{\GG})}\mathrel{\vphantom{\to}^*} q_G',
\]
such that $\subsetcons{\GG}{\procA}$ will reach $s$ on the projected trace, i.e.,
\[
\run_{\procA} = s_{0, \procA} \xrightarrow{\SyncToAsync(\trace(\rho_{\GG})) \wproj_{\Alphabet_\procA}}\mathrel{\vphantom{\to}^*} s.
\]
\end{lemma}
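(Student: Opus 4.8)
The plan is to follow the state $s$ backwards through the subset construction, recover a concrete word over $\Alphabet_\procA$ that reaches it, pull that word back to a path in the projection by erasure ending at $G'$, and finally lift this path to a run of $\semglobal(\GG)$. First I would observe that, since $s \in Q_\procA$, the least-fixpoint definition of $Q_\procA$ in \cref{def:subset-construction} guarantees a word $u \in \Alphabet_\procA^*$ with $s_{0,\procA} \xrightarrow{u}\mathrel{\vphantom{\to}^*} s$ in $\subsetcons{\GG}{\procA}$, where all intermediate states are non-empty subsets of $Q_\GG$.

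Next I would invoke the fact --- already used in the proof of \cref{lm:languages-of-roles} and justified by the textbook correctness of determinization \cite[Thm.~1.39]{DBLP:books/daglib/0086373} --- that the state $\subsetcons{\GG}{\procA}$ reaches after reading $u$ is exactly the set of states of $\projerasure{\GG}{\procA}$ reachable from $q_{0,\GG}$ along some path labelled $u$ (modulo $\emptystring$-transitions). Since $G' \in s$, this gives a run $q_{0,\GG} \xrightarrow{u}\mathrel{\vphantom{\to}^*} G'$ in $\projerasure{\GG}{\procA}$.

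Then I would lift this run to $\semglobal(\GG)$ using \cref{def:projection-by-erasure}: every transition of $\projerasuretrans$ of the form $q \xrightarrow{b} q'$ originates from a transition $q \xrightarrow{a} q' \in \delta_\GG$ with $\SyncToAsync(a)\wproj_{\Alphabet_\procA} = b$, so replacing each transition along the erasure run by such a witnessing transition of $\delta_\GG$ produces a run $\run_\GG = q_{0,\GG} \xrightarrow{\trace(\run_\GG)}\mathrel{\vphantom{\to}^*} q_{G'}$ of $\semglobal(\GG)$ that visits the same sequence of states, hence ends in $q_{G'}$. Because $\SyncToAsync$ and $\wproj_{\Alphabet_\procA}$ are homomorphisms, applying them to $\trace(\run_\GG)$ concatenates the images of the individual transition labels and therefore recovers exactly $u$, i.e. $\SyncToAsync(\trace(\run_\GG))\wproj_{\Alphabet_\procA} = u$. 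Since $\subsetcons{\GG}{\procA}$ is deterministic and reading $u$ leads from $s_{0,\procA}$ to $s$, this yields $s_{0,\procA} \xrightarrow{\SyncToAsync(\trace(\run_\GG))\wproj_{\Alphabet_\procA}}\mathrel{\vphantom{\to}^*} s$, which is the claim.

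The main obstacle is purely bookkeeping: $\SyncToAsync$ sends a single synchronous event to a two-letter word, and $\semglobal(\GG)$ carries additional $\emptystring$-labelled transitions for unfolding recursion, so the lifting step must match transitions of $\projerasure{\GG}{\procA}$ to transitions of $\delta_\GG$ one-for-one --- including the $\emptystring$-labelled ones, for which $\SyncToAsync(\emptystring) = \emptystring$ --- and the homomorphism computation must be carried out carefully to confirm that it genuinely reproduces $u$ rather than some reordering of it. Everything else is a direct unfolding of \cref{def:subset-construction,def:projection-by-erasure} together with determinization correctness, both of which are already invoked elsewhere in the paper.
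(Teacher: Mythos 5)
Your proposal is correct and follows essentially the same route as the paper's proof: obtain a word $u$ reaching $s$ from the fixed-point definition of $Q_\procA$, use the correctness of determinization to extract a $u$-labelled path in $\projerasure{\GG}{\procA}$ from $q_{0,\GG}$ to $G'$ (the paper does this explicitly by walking the chain of subset states backwards from $G'\in s$), expand the $\emptystring$-steps, and lift the resulting path to $\semglobal(\GG)$ via the definition of $\projerasuretrans$. The only difference is presentational: the paper spells out the backward extraction of the intermediate subterms $G_1,\ldots,G_n$ where you appeal to textbook determinization, which the paper itself also cites elsewhere.
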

\begin{proof}
Recall that the set of states for $\subsetcons{\GG}{\procA}$ is defined as a least fixed point:
\[
Q_{\procA} \is
\lfp_{\set{s_{0,\procA}}}^\subseteq \lambda Q.\, Q \cup \set{ \delta(s,a) \mid s \in Q \land a \in \Alphabet_{\procA}} \setminus \set{\emptyset}
\]
where $\delta(s,a)$ is an intermediate transition relation that is defined for all subsets $s \subseteq Q_\GG$ and every event $a \in \Alphabet_\procA$ as follows: 
\[
\delta(s, a) \is
\set{q' \in Q_{\GG}
	\mid
	\exists q \in s,
	q \xrightarrow{a} \xrightarrow{\emptystring}\mathrel{\vphantom{\to}^*} q' \in \projerasuretrans
}
\]
From the definition of $Q_\procA$, there exists a sequence of states $s_1, \ldots, s_n$ such that $s_1 = s_{0,\procA}$, $s_n = s$ and for every $i \in \set{1, \ldots, n-1}$, it holds that 
\[
	\exists a \in \Alphabet_\procA.~\delta(s_i,a) = s_{i+1}
\]
Let $a_i$ denote the existential witness for each $i$. 
From the definition of $\delta(s,a)$, for every $i \in \set{1, \ldots, n-1}$, it follows that
\[
	\forall q' \in s_{i+1}.~\exists q \in s_i.~
	q \xrightarrow{a_i} \xrightarrow{\emptystring}\mathrel{\vphantom{\to}^*} q' \in \projerasuretrans
\]
By assumption, $G' \in s$. 
There then exists a sequence of global syntactic subterms $G_1, \ldots, G_n$ such that $G_1 = \GG$, $G_n = G'$ and for every $i \in \set{1, \ldots, n-1}$, it holds that
\[
	G_i \in s_i 
	\land 
	G_{i+1} \in s_{i+1} 
	\land 
	G_i \xrightarrow{a_i} \xrightarrow{\emptystring}\mathrel{\vphantom{\to}^*} G_{i+1} \in \projerasuretrans
\]
We can expand $\emptystring^*$:
for every $i \in \set{1, \ldots, n-1}$, there exists $k_i \geq 0$ and a sequence of syntactic subterms $G_{i,0}, \ldots, G_{i,k_i}$ such that $G_{i,0} = G_i$ and $G_{i,k_i} = G_{i+1}$ and
\[
		G_{i,0} \xrightarrow{a} G_{i,1} \in \projerasuretrans
\text{ and }
		G_{i,j} \xrightarrow{\emptystring} G_{i,j+1} \in \projerasuretrans
\text{ for every } j \in \set{1, k_i - 1}.
\]
This expansion yields a run $\run_\downarrow$ in the projection by erasure $\projerasure{\GG}{\procA}$.
Because of recursion terms, the expansion might not be unique, but we can pick the smallest~$k_i$ possible for every $i$.
With the definition of $\projerasuretrans$, it is trivial to translate this run in $\projerasure{\GG}{\procA}$ to a run $\run_\GG$ in $\semglobalsync(\GG)$:
the events $a \in \Alphabet_\procA$ become $a' \in \AlphSync$ such that
$\SyncToAsync(a') \wproj_{\Alphabet_\procA} = a$ 
and $\emptystring$ becomes $b \in \AlphSync$ such that
$\SyncToAsync(b) \wproj_{\Alphabet_\procA} = \emptystring$.

It is clear by construction that $s_1, \ldots, s_n$ (with its corresponding transitions) serves as a witness for $\run_\procA$, while $G_{1,0}, \ldots, G_{1,k_1}, \ldots, G_n$ (with its respective transitions) serves as a witness for $\run_\GG$.
\proofend
\end{proof}

\begin{lemma}
\label{lm:impl-subset-cons-step-for-step}
	Let $\GG$ be a global type and let $\CSM{B}$ implement $\GG$ with $B_\procA = (Q_{B,\procA}, \delta_{B,\procA}, s_{B,0,\procA}, F_{B,0,\procA})$ for role $\procA$.
	Let $s \in Q_\procA$, $x \in \Alphabet_{\procA}$ and $s \xrightarrow{x} t \in \delta_\procA$ from the subset construction $\subsetcons{\GG}{\procA}$. 
	Let $u \in \Alphabet_\procA^*$ such that $s_{0, \procA} \xrightarrow{u} \mathrel{\vphantom{\to}^*} s$. 
	Then, there exists $s', t' \in Q_{B,\procA}$ such that $s_{B,0,\procA} \xrightarrow{u} \mathrel{\vphantom{\to}^*} s'$ and $s' \xrightarrow{x} t' \in \delta_{B,\procA}$.
\end{lemma}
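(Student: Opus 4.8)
The plan is to turn the transition $s \xrightarrow{x} t$ of $\subsetcons{\GG}{\procA}$ into a concrete behaviour of the witness CSM $\CSM{B}$, and then read off $s'$ and $t'$ from the run that the component $B_\procA$ performs inside that behaviour.

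First I would note that, since $s_{0,\procA} \xrightarrow{u}\mathrel{\vphantom{\to}^*} s$ and $s \xrightarrow{x} t \in \delta_\procA$, the word $u x$ is a trace of $\subsetcons{\GG}{\procA}$. By the second claim of \cref{lm:languages-of-roles}, there is a trace $w_0$ of $\semglobalsync(\GG)$ with $\SyncToAsync(w_0)\wproj_{\Alphabet_\procA} = u x$. I would then extend $w_0$ to a maximal trace $w$ of $\semglobalsync(\GG)$, which is always possible because $\semglobalsync(\GG)$ has no reachable dead ends: its only state without an outgoing transition is the term $0$, which is final, while every other subterm has a message, a choice, or an $\emptystring$-transition; hence any run from the end of $w_0$ either reaches $0$ (a finite maximal run) or revisits a state (an infinite maximal run). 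Since $w \in \lang(\semglobalsync(\GG))$ we get $\SyncToAsync(w) \in \SyncToAsync(\lang(\semglobalsync(\GG))) \subseteq \lang(\GG)$, and as $w_0$ is a prefix of $w$ it follows that $u x \preforder \SyncToAsync(w)\wproj_{\Alphabet_\procA}$.

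Next, because $\CSM{B}$ implements $\GG$ we have $\lang(\CSM{B}) = \lang(\GG)$, so $\SyncToAsync(w)$ is a maximal trace of $\CSM{B}$; fix a run $\run_B$ of $\CSM{B}$ with $\trace(\run_B) = \SyncToAsync(w)$. In the CSM transition relation each event is a send or a receive of a unique active role, whose component performs the corresponding local transition while every other component stands still; therefore restricting $\run_B$ to the moves of $B_\procA$ yields a run of $B_\procA$ that starts in $s_{B,0,\procA}$ and is labelled by $\SyncToAsync(w)\wproj_{\Alphabet_\procA}$. Since $u x \preforder \SyncToAsync(w)\wproj_{\Alphabet_\procA}$, this run factors as $s_{B,0,\procA} \xrightarrow{u}\mathrel{\vphantom{\to}^*} s' \xrightarrow{x} t' \cdots$; taking $s'$ to be the state reached after consuming $u$ and $t'$ the state reached after the immediately following $x$-transition gives the required $s', t' \in Q_{B,\procA}$.

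The only step needing any care is the extension of $w_0$ to a maximal trace, i.e.\ the absence of reachable dead ends in $\semglobalsync(\GG)$, which is immediate from the syntactic shape of global types (guarded recursion, and $0$ being simultaneously the unique sink and the unique final term). Everything else is bookkeeping with \cref{lm:languages-of-roles}, the definition of $\lang(\GG)$, and the CSM semantics.
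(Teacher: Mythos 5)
Your proof is correct and follows essentially the same route as the paper's: both reduce the claim to showing that $ux$ is a prefix of the local language of $B_\procA$, which the paper obtains by asserting $\pref(\lang(\GG)\wproj_{\Alphabet_\procA}) \subseteq \pref(\lang(B_\procA))$ as a consequence of implementability, and which you obtain by explicitly extending $ux$ to a maximal global word and reading off the component run of $B_\procA$ inside the corresponding run of $\CSM{B}$. Your version spells out the justification the paper leaves implicit and, as a minor bonus, does not rely on the determinism of $B_\procA$ that the paper's final step invokes.
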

\begin{proof} 
	Because $\CSM{B}$ implements $\GG$, it must hold that $\lang(\GG) \wproj_{\Alphabet_\procA} \subseteq \lang(B_\procA)$ since $B_\procA$ must produce at least the behaviors as specified by $\GG$ for its role.
	It follows that $\pref(\lang(\GG) \wproj_{\Alphabet_\procA}) \subseteq \pref(\lang(B_\procA))$.
	From \cref{lm:languages-of-roles}, we know that $\lang(\GG) \wproj_{\Alphabet_\procA} = \subsetcons{\GG}{\procA}$. By construction of $\subsetcons{\GG}{\procA}$, if $ux$ is reachable from the initial state in $\subsetcons{\GG}{\procA}$ then $ux$ is the prefix of some word in $\lang(\GG) \wproj_{\Alphabet_\procA}$.
Therefore, it holds that $ux \in \pref(\lang(\GG) \wproj_{\Alphabet_\procA})$ and consequently $ux \in \pref(\lang(B_\procA))$.
	Because $B_\procA$ is deterministic, there exists a unique $t'$ such that $B_\procA$ reaches $t'$ from the initial state on $ux$.
	This concludes our proof. 
\end{proof} 

\completenessThm*

\begin{proof}
\label{proof:completeness}

From the fact that $\GG$ is implementable, we know there exists a CSM $\CSM{B}$ that implements $\GG$.
Showing that the subset projection is defined amounts to showing that Send and Receive Validity (\cref{cond:send-state-validity-transition-origins,cond:rcv-state-validity}) hold for the subset construction. 

We proceed by contradiction and assume the negation of Send and Receive Validity in turn, and in each case derive a contradiction to the fact that $\CSM{B}$ implements~$\GG$.
Specifically, we contradict protocol fidelity (\cref{def:implementability}(\ref{def:implementability-protocol-fidelity})), and show that $\lang(\GG) \neq \lang(\CSM{B})$.

To prove the inequality of the two languages, it suffices to prove the inequality of their respective prefix sets, i.e. 
\[
	\{ u \mid u \leq w \land w \in \lang(\GG)\} 
	\neq 
	\{ u \mid u \leq w \land w \in \lang(\CSM{B}) \}
\]
Specifically, we show there is $v \in \AlphAsync^*$ such that
\begin{align*}
	v &\in \{ u \mid u \leq w \land w \in \lang(\CSM{B})\}
	~\land \\
	v &\notin \{ u \mid u \leq w \land w \in \lang(\GG)\} \enspace.
\end{align*}
Because $\CSM{B}$ is deadlock-free by assumption, every trace either can be extended to end in a final configuration or to be infinite. Therefore, any word $v \in \AlphAsync^*$ that is a trace of $\CSM{B}$ is a member of the prefix set, i.e.
\[
	\exists ~(\vec{s},\xi).~(\vec{s}_0, \xi_0) \xrightarrow{v} \mathrel{\vphantom{\to}^*} (\vec{s}, \xi) 
	\implies v \in \{ u \mid u \leq w \land w \in \lang(\CSM{B})\}\enspace.
\] 
By the semantics of $\lang(\GG)$, for any $w \in \lang(\GG)$, there exists $w' \in \lang(\semglobal(\GG))$ with $w \interswap \SyncToAsync(w')$. 
For any $w' \in \lang(\semglobal(\GG))$, it is straightforward that \mbox{$I(\SyncToAsync(w')) \neq \emptyset$}.
Because intersection sets are closed under the indistinguishability relation (Corollary \ref{cor:I-set-indist-invar}), it holds that $I(w) \neq \emptyset$.
Because $I(\hole)$ is monotonically decreasing, if $I(w)$ is non-empty then for any $v \leq w$, $I(v)$ is non-empty.
By the following, to show that a word $v$ is not a member of the prefix set of $\lang(\GG)$ it suffices to show that $I(v)$ is empty:
\[
	\forall v \in \AlphAsync^*.~
	I(v) = \emptyset \implies \forall w. ~v \leq w \implies w \notin \lang(\GG)\enspace.
\]
Therefore, under the assumption of the negation of Send or Receive Validity  respectively, we explicitly construct a witness $v_0$ satisfying:
\begin{enumerate}[(a)]
	\item \label{cond:word-is-trace}
	$v_0$ is a trace of $\CSM{B}$, and
	\item \label{cond:word-I-set-empty}
	$I(v_0) = \emptyset$.
\end{enumerate}

\myparagraph{Send Validity (\cref{cond:send-state-validity-transition-origins}).}  
Assume that Send Validity does not hold for some role $\procA \in \Procs$.
Let $s \in Q_{\procA}$ be a state and $s \xrightarrow{\snd{\procA}{\procB}{\val}} s' \in \delta_{\procA}$ a transition in the subset construction $\subsetcons{\GG}{\procA}$ such that
\[
 	\transAnnoFunc(s  \xrightarrow{\snd{\procA}{\procB}{\val}} s') \neq s \enspace.
\]

Let $D$ denote $s \setminus \transAnnoFunc(s  \xrightarrow{\snd{\procA}{\procB}{\val}} s')$.
By the negation of Send Validity, $D$ is non-empty.
Let $G'$ be a syntactic subterm in $D$.

Because $G' \in s$, it follows from Lemma \ref{lm:run-for-subterm-and-state} that there exists $\alpha$ such that $\alpha \cdot G'$ is a run in $\semglobal(\GG)$. 
Let $\bar w$ be $\SyncToAsync(\trace(\alpha \cdot G'))$. 
Because $\CSM{B}$ implements $\GG$, there exists a configuration $(\vec{t},\xi)$ of $\CSM{B}$ such that
$(\vec{t}_0,\xi_0) \xrightarrow{\bar w}\mathrel{\vphantom{\to}^*} (\vec{t},\xi)$.
Instantiating \cref{lm:impl-subset-cons-step-for-step} with $s$, $s  \xrightarrow{\snd{\procA}{\procB}{\val}} s'$ and $\SyncToAsync(\trace(\alpha \cdot G')) \wproj_{\Alphabet_\procA}$, it follows that $\vec{t}_\procA$ has an outgoing transition labeled $\snd{\procA}{\procB}{\val}$. Let $\vec{t}_\procA \xrightarrow{\snd{\procA}{\procB}{\val}} t''$ be this transition.

The send transitions of any local machine in a CSM are always enabled. Formally, for all
$w \in \AlphAsync^*$, $x \in \Alphabet_!$, and $\procC \in \Procs$, if
	$w$ is a trace of $\CSM{B}$ and
	$\vec{t}_{w,\procC} \xrightarrow{x} \pvec{t}'_{w,\procC} \in \delta_\procC$, then 
	$wx$  is a trace of $\CSM{B}$.
Instantiating this fact with $\bar w$ and $\vec{t}_\procA \xrightarrow{\snd{\procA}{\procB}{\val}} t''$, we obtain that $\bar w \cdot \snd{\procA}{\procB}{\val}$ is a trace of $\CSM{B}$.

Let $\bar w \cdot \snd{\procA}{\procB}{\val}$ be our witness $v_0$; it then follows that $v_0$ satisfies \ref{cond:word-is-trace}.
It remains to show that $v_0$ satisfies \ref{cond:word-I-set-empty}, namely $I(\bar w \cdot \snd{\procA}{\procB}{\val}) = \emptyset$.

\textit{Claim.} All runs in $I(\bar w)$ begin with $\alpha \cdot G'$.

\textit{Proof of Claim.}
Recall that $\bar w$ is defined as $\SyncToAsync(\trace(\alpha \cdot G'))$.
Assume by contradiction that $\run' \in I(\bar w)$ and $\run'$ does not begin with $\alpha \cdot G'$.
Due to the syntactic structure of global runs, the first divergence between two runs must correspond to a syntactic subterm of the form $\sum_{i ∈ I} \msgFromTo{\procA'}{\procB'_{i}}{\val'_i.G'_i}$.
Let $\procA'$ be the sender in the first divergence between $\run'$ and $\alpha \cdot G'$, and let the two runs respectively contain the subterms $G'_i$ and $G'_j$.
Because $\run'$ is in $\globcomplocal{\GG}{\procA'}{\bar w}$, it holds that
$\bar w \wproj_{\Alphabet_{\procA'}} \leq \SyncToAsync(\trace(\rho')) \wproj_{\Alphabet_{\procA'}}$.
Because $\bar w = \SyncToAsync(\trace(\alpha \cdot G'))$, we can rewrite the inequality as
$\SyncToAsync(\trace(\alpha \cdot G')) \wproj_{\Alphabet_{\procA'}} \leq \SyncToAsync(\trace(\rho')) \wproj_{\Alphabet_{\procA'}}$.

We know that $\SyncToAsync(\trace(\alpha \cdot G')) \wproj_{\Alphabet_{\procA'}}$ and $\SyncToAsync(\trace(\rho')) \wproj_{\Alphabet_{\procA'}}$ share a common prefix, followed by different send actions from $\procA'$, i.e., they are respectively of the form $x' \cdot \snd{\procA'}{\procB_j}{\val'_j} \cdot y'$ and $x' \cdot \snd{\procA'}{\procB_i}{\val'_i} \cdot z'$.
We arrive at a contradiction.

\textit{End Proof of Claim.}

Recall that $G' \in D$ and $D = s \setminus \transAnnoFunc(s  \xrightarrow{\snd{\procA}{\procB}{\val}} s')$.
By the definition of $\transAnnoFunc(-)$ (\cref{def:subset-construction}), there does not exist a global syntactic subterm $G''$ with
$G' \xrightarrow{l'}\mathrel{\vphantom{\to}^*} G'' \in \delta_{\GG}$ such that $l' \wproj_{\Alphabet_\procA} = \snd{\procA}{\procB}{\val}$.
Therefore, there does not exist a maximal run in $\globcomplocal{\GG}{\procA}{\bar w \cdot \snd{\procA}{\procB}{\val}}$, and
$I(\bar w \cdot \snd{\procA}{\procB}{\val}) = \emptyset$
follows.

Our witness $v_0 = \bar w \cdot \snd{\procA}{\procB}{\val}$ thus satisfies both conditions 
 \ref{cond:word-is-trace} and \ref{cond:word-I-set-empty} required for a contradiction. This concludes our proof that Send Validity is required to hold.

\myparagraph{Receive Validity (\cref{cond:rcv-state-validity}).}
 Assume that Receive Validity does not hold for some role $\procA \in \Procs$.
 In other words, there exists $s \in Q_\procA$ with two transitions 
 $s \xrightarrow{\rcv{\procB_1}{\procA}{\val_1}} s_1$, 
 $s \xrightarrow{\rcv{\procB_2}{\procA}{\val_2}} s_2 \in \delta_\procA$
 and $G_2 \in \transAnnoFunDest(s \xrightarrow{\rcv{\procB_2}{\procA}{\val_2}} s_2)$ 
 such that 
 \[
 	\procB_1 \neq \procB_2 
 	\land  
 	\snd{\procB_1}{\procA}{\val_1} \in \semavail^{\procA}_{(G_2 \ldots)} \enspace.
 \]

\textit{Claim I.} There exists $u \in \AlphAsync^*$ such that both
$ u \cdot \rcv{\procB_1}{\procA}{\val_1}$ 
and 
$ u \cdot \rcv{\procB_2}{\procA}{\val_2}$ 
are traces of $\CSM{B}$.

\textit{Proof of Claim I.}
By the negation of Receive Validity, 
$G_2 \in \transAnnoFunDest(s \xrightarrow{\rcv{\procB_2}{\procA}{\val_2}} s_2) \subseteq s_2$. 
From \cref{lm:run-for-subterm-and-state} for $s_2$ and~$G_2 \in s_2$, 
there exists $\run'$ such that $\run'$ ends in $G_2$ and is a run in
$\semglobal(\GG)$. 
Because $\SyncToAsync(\trace(\run'))$ is a prefix in $\lang(\GG)$ and by assumption $\CSM{B}$ implements $\GG$, there exists a $\CSM{B}$ configuration $(\vec{t},\xi)$ such that
$(\vec{t}_0,\xi_0)
\xrightarrow{\SyncToAsync(\trace(\run'))}\mathrel{\vphantom{\to}^*}
(\vec{t},\xi)$.
By the subset construction, it holds that $\subsetcons{\GG}{\procA}$ reaches $s$ on $\SyncToAsync(\trace(\run'))$. 
Instantiating \cref{lm:impl-subset-cons-step-for-step} twice with $s \xrightarrow{\rcv{\procB_1}{\procA}{\val_1}} s_1$,
$s \xrightarrow{\rcv{\procB_2}{\procA}{\val_2}} s_2$ and $\SyncToAsync(\trace(\run')) \wproj_{\Alphabet_\procA}$, we obtain $t_1 \xrightarrow{\rcv{\procB_1}{\procA}{\val_1}} t_1'$ and $t_2 \xrightarrow{\rcv{\procB_2}{\procA}{\val_2}} t_2'$.
From the determinacy of $B_\procA$, it holds that $t_1 = t_2$.
Therefore, it holds that $\vec{t}_\procA= t_1$ and there exist two outgoing transitions from $\vec{t}_\procA$ labeled with $\rcv{\procB_1}{\procA}{\val_1}$ and $\rcv{\procB_2}{\procA}{\val_2}$.

From the fact that  
$s \xrightarrow{\rcv{\procB_2}{\procA}{\val_2}} s_2 \in \delta_\procA$, 
there exist $G_1 \in s$ and $G_2' \in \transAnnoFunDest(s \xrightarrow{\rcv{\procB_2}{\procA}{\val_2}} s_2) \subseteq s_2$ such that 
$G_1 \xrightarrow{\procB_2 \xrightarrow{} \procA: \val_2} G_2' \in \delta_{\GG}$. 
Either $G_2 = G_2'$, or $G_2$ is reachable from $G_2'$ via $\emptystring$-transitions for $\procA$.
Without loss of generality, assume that $G_2 = G_2'$; if $G_2' \neq G_2$ then $G_2'$ can also be picked as the witness from the definition of $M$. 
We rewrite $\run'$ as follows: 
\[
	\run' \is \alpha \cdot G_1 \xrightarrow{\procB_2 \xrightarrow{} \procA: \val_2} G_2
\]
From the negation of Receive Validity, we know that 
\[
	\snd{\procB_1}{\procA}{\val_1} \in \semavail^{\procA}_{(G_2 \ldots)}
\]

Then, there exists some suffix $\beta$ such that the transition 
$\xrightarrow{\procB_1 \xrightarrow{} \procA: \val_1}$ 
occurs in $\beta$ and 
$ \alpha \cdot G_1 \xrightarrow{\procB_2 \xrightarrow{} \procA: \val_2} G_2 \cdot \beta$
is a maximal run. 
Let $\run$ denote this maximal run. 
Let 
$G_3  \xrightarrow{\procB_1 \xrightarrow{} \procA: \val_1} G_4$ 
be the earliest occurrence of
$\xrightarrow{\procB_1 \xrightarrow{} \procA: \val_1}$ 
in $\beta$. 
We rewrite the suffix $\beta$ in $\run$ to reflect the existence of $G_3, G_4$: 
\[
	\run \is \alpha 
					\cdot 
					G_1 \xrightarrow{\procB_2 \xrightarrow{} \procA: \val_2} G_2 
					\cdot 
					\beta_1 
					\cdot 
					G_3  \xrightarrow{\procB_1 \xrightarrow{} \procA: \val_1} G_4 
					\cdot 
					\beta_2
\]
Note that $\beta_1$ does not contain any transitions of the form $\xrightarrow{\procB_1 \xrightarrow{} \procA: \val_1}$.

Let $\bar w$ denote $\SyncToAsync(\trace(\alpha))$, and $\bar v$ denote $\SyncToAsync(\trace(\beta_1))$. 
To produce a witness for $u$, we show that 
$\bar w \cdot \snd{\procB_2}{\procA}{\val_2} \cdot \snd{\procB_1}{\procA}{\val_1}$ 
is a trace of 
$\CSM{B}$,
and in the resulting CSM configuration $(\pvec{s}',\xi')$,
$\pvec{s}'_\procA$ has two outgoing transitions labeled 
$\rcv{\procB_1}{\procA}{\val_1}$ 
and 
$\rcv{\procB_2}{\procA}{\val_2}$. 
Moreover, we show that the channels
$\xi'(\procB_1,\procA)$ and $\xi'(\procB_2,\procA)$ 
respectively contain the messages $m_1$ and $m_2$ at the head. 

First, we show that 
$\bar w \cdot \snd{\procB_2}{\procA}{\val_2} \cdot \snd{\procB_1}{\procA}{\val_1}$ 
is a trace of 
$\CSM{B}$.

By assumption that $\CSM{B}$ implements $\GG$, both
$\bar w \cdot \snd{\procB_2}{\procA}{\val_2}$ 
and 
$\bar w \cdot \snd{\procB_2}{\procA}{\val_2} \cdot \rcv{\procB_2}{\procA}{\val_2}$
are traces of $\CSM{B}$.
Let $(\pvec{s}'',\xi'')$ and $(\pvec{s}''', \xi''')$ respectively denote $\CSM{B}$ configurations such that
\[
(\vec{s}_0, \xi_0) 
\xrightarrow{\bar w \cdot \snd{\procB_2}{\procA}{\val_2}} 
(\pvec{s}'', \xi'')
\xrightarrow{\rcv{\procB_2}{\procA}{\val_2} \cdot \bar v} 
(\pvec{s}''', \xi''')
\enspace. 
\]

Because send actions are always enabled in a CSM, it suffices to show that $\pvec{s}''_{\procB_1}$ has an outgoing transition label $\snd{\procB_1}{\procA}{\val_1}$. 
We do so by showing that 
$\pvec{s}''_{\procB_1} = \pvec{s}'''_{\procB_1}$:
it is clear from the fact that 
$
\bar w 
\cdot 
\snd{\procB_2}{\procA}{\val_2} 
\cdot 
\rcv{\procB_2}{\procA}{\val_2}
\cdot 
\bar v
\cdot 
\snd{\procB_1}{\procA}{\val_1}
$ 
is a trace of 
$\CSM{B}$
that 
$\pvec{s}'''_{\procB_1}$ 
has an outgoing transition label 
$\snd{\procB_1}{\procA}{\val_1}$.

Due to the determinacy of subset construction, it suffices to show that 
\[
	(\bar w \cdot \snd{\procB_2}{\procA}{\val_2}) \wproj_{\Alphabet_{\procB_1}}
	= 
	(\bar w \cdot \snd{\procB_2}{\procA}{\val_2} \cdot\rcv{\procB_2}{\procA}{\val_2} \cdot \bar v) \wproj_{\Alphabet_{\procB_1}} 
	\enspace.
\]
This equality follows from the definition of $\semavail$ and the fact that 
$\snd{\procB_1}{\procA}{\val_1} \in \semavail^{\procA}_{(G_2 \ldots)}$: because the blocked set of roles in $\semavail$ monotonically increases, and for any $G', B$, no actions in a run suffix starting with $G'$ involving roles in $B'$ are included in $\semavail^{B'}_{G'}$,
we know that $\snd{\procB_1}{\procA}{\val_1}$ must be the lexicographically earliest action involving $\procB_1$ in
$\bar v
\cdot 
\snd{\procB_1}{\procA}{\val_1} 
\cdot
\rcv{\procB_1}{\procA}{\val_1}
$. 
In other words, 
$\bar v \wproj_{\Alphabet_{\procB_1}} = \emptystring$.

This concludes the reasoning that 
$\bar w \cdot \snd{\procB_2}{\procA}{\val_2} \cdot \snd{\procB_1}{\procA}{\val_1}$ 
is a trace of 
$\CSM{B}$.

Recall that 
$(\pvec{s}',\xi')$
is the 
$\CSM{B}$ configuration reached on
$\bar w \cdot \snd{\procB_2}{\procA}{\val_2} \cdot \snd{\procB_1}{\procA}{\val_1}$.
We showed above that 
$\vec{s}_\procA$ 
has two outgoing transitions labeled 
$\rcv{\procB_1}{\procA}{\val_1}$ 
and 
$\rcv{\procB_2}{\procA}{\val_2}$.
It follows from the equality below that $\pvec{s}'_\procA$ likewise has two outgoing transitions labeled $\rcv{\procB_1}{\procA}{\val_1}$ 
and 
$\rcv{\procB_2}{\procA}{\val_2}$:
\[
(\bar w \cdot \snd{\procB_2}{\procA}{\val_2}) \wproj_{\Alphabet_{\procA}}
= 
(\bar w \cdot \snd{\procB_2}{\procA}{\val_2} \cdot \snd{\procB_1}{\procA}{\val_1}) \wproj_{\Alphabet_{\procA}}\enspace.
\]

We now show that the channels
$\xi'(\procB_1,\procA)$ and $\xi'(\procB_2,\procA)$ 
respectively contain the messages $m_1$ and $m_2$ at the head. 
Recall that $\bar w$ is defined as $\SyncToAsync(\trace(\alpha))$; this from the fact that $\xi_{\bar w}$ is uniquely determined by $\bar w$ and all channels in $\xi_{\bar w}$ are empty.

Let $u \is \bar w \cdot \snd{\procB_2}{\procA}{\val_2} \cdot \snd{\procB_1}{\procA}{\val_1}$. 
This concludes our proof that both 
$u \cdot \rcv{\procB_1}{\procA}{\val_1}$ 
and 
$u \cdot \rcv{\procB_2}{\procA}{\val_2}$ 
are traces of $\CSM{B}$.

\textit{End Proof of Claim I.}

The next claim establishes that our witness $u \cdot \rcv{\procB_1}{\procA}{\val_1}$ satisfies \ref{cond:word-I-set-empty}. 

\textit{Claim II.} It holds that $I(u \cdot \rcv{\procB_1}{\procA}{\val_1}) = \emptyset$.

\textit{Proof of Claim II.}
This claim follows trivially from the observation that every run in 
$I(\bar w \cdot \snd{\procB_2}{\procA}{\val_2})$ 
must begin with 
$\alpha \cdot G_1 \xrightarrow{\procB_2 \xrightarrow{} \procA: \val_2} G_2$. 
Because 
$I(u \cdot \rcv{\procB_1}{\procA}{\val_1}) 
\subseteq I(\bar w \cdot \snd{\procB_2}{\procA}{\val_2})$, 
and the $\SyncToAsync(\trace(\hole))$ of every run in 
$I(\bar w \cdot \snd{\procB_2}{\procA}{\val_2})$ 
starts with 
$\bar w \cdot \snd{\procB_2}{\procA}{\val_2} \cdot \rcv{\procB_2}{\procA}{\val_2}$,
therefore 
$I(u \cdot \rcv{\procB_1}{\procA}{\val_1})$ 
is empty. 

\textit{End Proof of Claim II.}

From here, the reasoning that every run in
$I(\bar w \cdot \snd{\procB_2}{\procA}{\val_2})$ must begin with 
$\alpha \cdot G_1 \xrightarrow{\procB_2 \xrightarrow{} \procA: \val_2} G_2$ 
is identical to the reasoning for the analogous claim in the Send Validity case, and thus omitted. 

By choosing 
$v_0 \is \bar u \cdot \rcv{\procB_1}{\procA}{\val_1}$,
we thus establish both conditions 
\ref{cond:word-is-trace} and \ref{cond:word-I-set-empty} required for a contradiction. This concludes our proof that Receive Validity is required to hold.
\proofend
\end{proof}

 \section{Additional Material for \cref{sec:discussion}}
\label{app:discussion}

\subsection{Visual Representations of $\GG_{\operatorname{fold}}$ and $\GG_{\operatorname{unf}}$}
\label{app:discussion-visual-reps}

\Cref{fig:folded-unfolded} depicts the examples in \cref{sec:discussion} visually.

\begin{figure}[h]
\begin{subfigure}[b]{0.42\textwidth}
\centering
\includegraphics[width=0.77\textwidth]{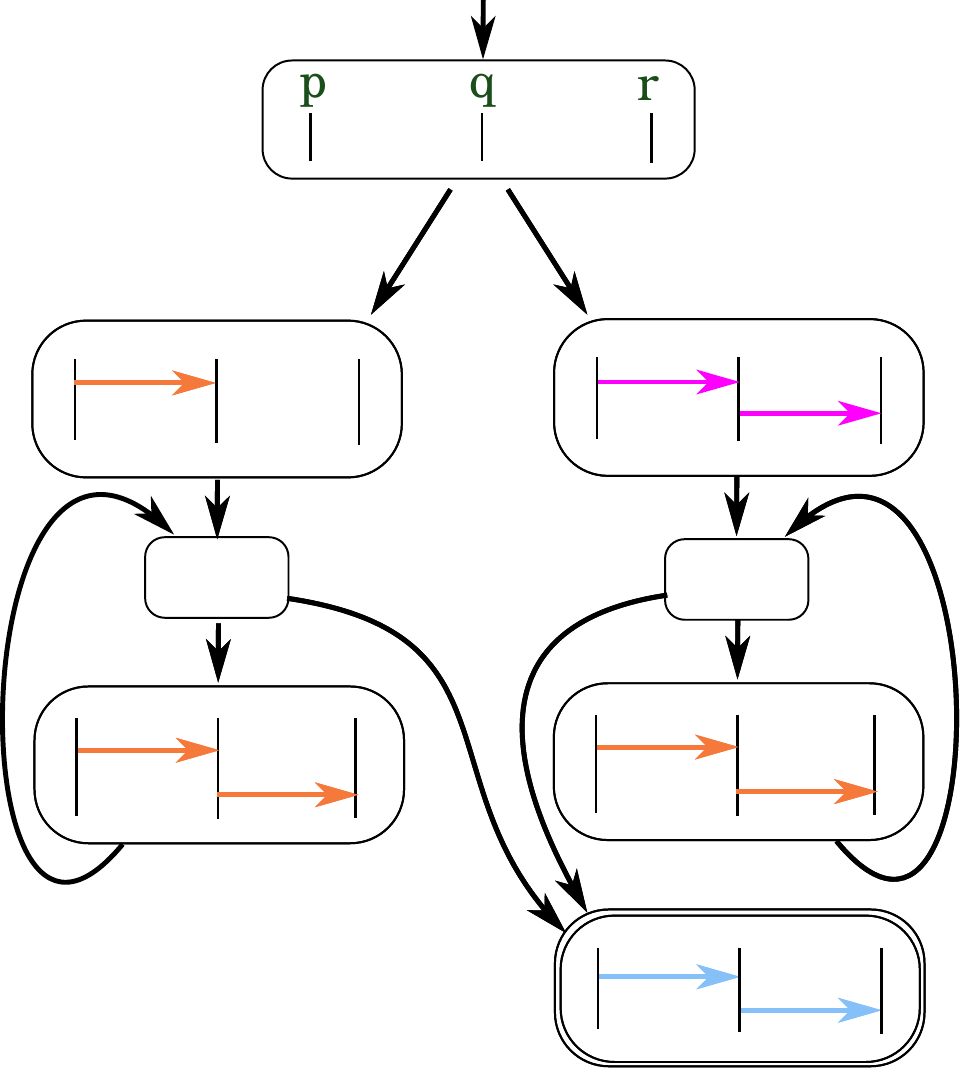}
\caption{
    Rejected by syntactic projection operators
    \label{fig:hmsc-folded-unproj}
}
\end{subfigure}
\hfill
\begin{subfigure}[b]{0.42\textwidth}
\centering
\includegraphics[width=0.75\textwidth]{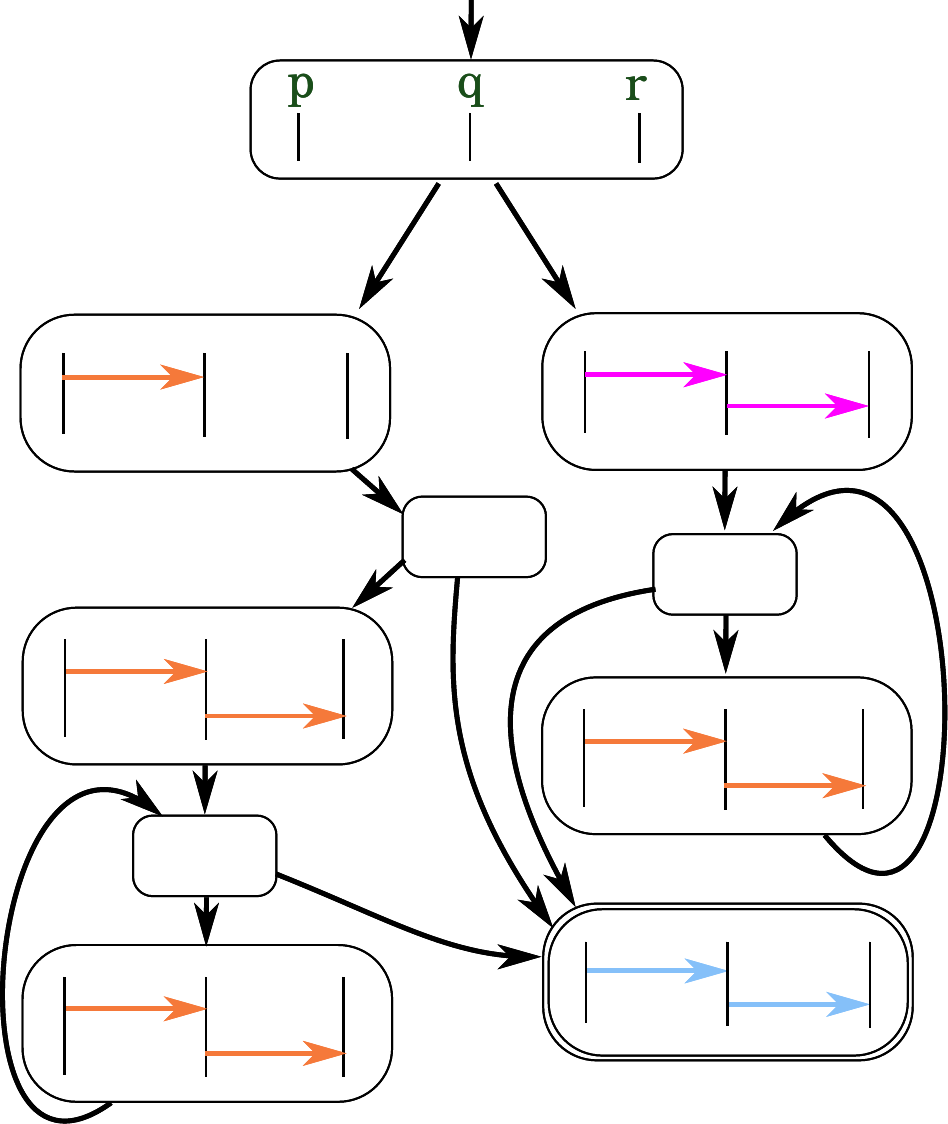}
\caption{
    Partially unfolded and accepted by syntactic projection operators
    \label{fig:hmsc-unfolded-proj}
}
\end{subfigure}

\caption{Two protocol specifications for the same protocol}
\label{fig:folded-unfolded}
\end{figure}

\subsection{Entailed Properties from the Literature -- Detailed Analysis}
\label{app:discussion-entailed}
For systems with two roles, an \emph{unspecified reception}~\cite[Def.~12]{DBLP:journals/iandc/CeceF05}
occurs if a receiver cannot receive the first message in its (only) channel.
Intuitively, this yields a deadlock for a binary system or an execution in which the other role will send messages indefinitely.
For multi-party systems with sender-driven choice, this is not necessarily the case and our receive validity condition ensures that a message in a role's channel can appear but will not confuse the receiver regarding which choices were made.
Thus, we could lift the property to multiparty systems with a universal quantification over channels.
Our subset projection would prevent this lifted version of unspecified receptions because of deadlock freedom and the fact that any non-deadlock configuration can be extended in a way that the unspecified reception can eventually be received.
Similarly, our subset projection ensures the \emph{absence of orphan messages}~\cite[Sec.~2]{DBLP:conf/icalp/DenielouY13}\cite[Sec.~3]{DBLP:conf/concur/BocchiLY15}.
However, we need to adapt the definition to our setting.
For most MST frameworks, the following two types of CSM configurations are equivalent:
final-state configurations, i.e., where each role is in a final state; and
sink-state configurations, i.e., where each role is in a state without outgoing transitions.
\emph{Orphan messages} have been defined using final-state configurations.
Our subset projection is more expressive so both configuration types do not necessarily coincide.
Our soundness proof ensures the absence of orphan messages in sink-state configurations and ensures that messages in final-state configurations can be received eventually.
We refer to \cref{sec:related} for a discussion on the expressiveness of local types and FSMs.

The standard notion of \emph{progress}~\cite[Sec.~1]{DBLP:conf/sfm/CoppoDPY15} asks that every sent message is eventually received and every process waiting for a message eventually receives one.
We proved our subset projection sound for finite as well as infinite CSM runs.
For finite runs, both properties trivially hold.
For infinite runs, our subset projection ensures that both is possible but one would require fairness assumptions to ensure that it will actually happen as is common for liveness~properties.

In our subset projection, it is also guaranteed that each transition of a local implementation can be fired in some execution of the subset projection.
This is called \emph{executable} by Cécé and Finkel~\cite[Def.~12]{DBLP:journals/iandc/CeceF05} while it is the property \emph{live} in work by Scalas and Yoshida~\cite[Fig.5(3)]{DBLP:journals/pacmpl/ScalasY19} and called \emph{liveness} by Barbanera et al.~\cite[Def.~2.9]{DBLP:conf/coordination/BarbaneraLT20}.

A CSM has the \emph{stable property} if any reachable configuration has a transition sequence to a configuration with empty channels.
With our proof technique, we showed every run of a CSM has a common path in the protocol that complies with all participants' local observations of the run.
There is a furthest point in this path and it is possible that all participants catch up to this point, having empty channels.

Scalas and Yoshida~\cite{DBLP:journals/pacmpl/ScalasY19} also consider two properties that are rather protocol-specific than implementation-specific, i.e., protocol fidelity and deadlock freedom trivially ensure that every implementation satisfies these properties if the protocol does.
First, a global type is \emph{terminating}~\cite[Fig.5(2)]{DBLP:journals/pacmpl/ScalasY19} if every CSM run is finite.
It is trivial that this is only true if there are no (used) recursion variable binders $\mu t$.
Second, a global type is \emph{never-terminating}~\cite[Fig.5(3)]{DBLP:journals/pacmpl/ScalasY19} if every CSM run is infinite.
Consequently, this is only the case if the global type has no term~$0$.

\end{document}